\documentclass[a4paper,USenglish,cleveref,autoref]{lipics-v2021}

\usepackage{algorithm}
\usepackage{algpseudocode}
\usepackage{float}
\floatname{algorithm}{Rule Set}

\newcommand{\ie}{{\it i.e.,~}}
\newcommand{\eg}{{\it e.g.,~}}

\newcommand{\MVS}{\textsf{MVS}}
\newcommand{\MVN}{\textsf{MVN}}

\newcommand{\calA}{\mathcal{A}}
\newcommand{\calAC}{\mathcal{A}^\mathcal{C}}

\newcommand{\calC}{\mathcal{C}}
\newcommand{\calB}{\mathcal{B}}

\newcommand{\BigO}{\mathcal{O}}

\newcommand{\capa}{\mathsf{cap}}
\newcommand{\Sstar}{S^{*}}

\newcommand{\ccaps}{\mathsf{cap}^{*}}

\title{The Maximum Mutual Visibility Set on a Cactus Graph and the Self-stabilizing Constructions} 

\titlerunning{A Maximum Mutual Visibility Set on a Cactus Graph} 

\author{Yonghwan Kim}
{Nagoya Institute of Technology, Aichi, Japan}
{kim@nitech.ac.jp}{https://orcid.org/0000-0002-5437-7626}{}

\author{Yuichi Sudo}
{Hosei University, Tokyo, Japan}
{sudo@hosei.ac.jp}{https://orcid.org/0000-0002-4442-1750}{}

\authorrunning{Y. Kim and Y. Sudo} 

\Copyright{Yonghwan Kim} 

\ccsdesc[500]{Mathematics of computing~Graph theory}
\ccsdesc[500]{Computing methodologies~Distributed algorithms}

\keywords{mutual visibility, cactus graph, self-stabilization} 

\category{} 

\relatedversion{} 

\supplement{} 

\acknowledgements{This work was supported by JSPS KAKENHI Grant Numbers JP23K24825, JP25K03078, JP25K03079, JP26K02865, JP26K23809, and JST FOREST Program JPMJFR226U.}

\nolinenumbers 

\begin{document}

\maketitle

\begin{abstract}
Given a graph $G=(V,E)$, let $S$ ($\subseteq V$) be a set of vertices.
Two vertices are \emph{mutually visible} 
if there exists a shortest path in $G$ between them that does not contain any other vertex of $S$.
A set $S$ is a \emph{Mutual Visibility Set} (\MVS) 
if every pair of vertices in $S$ is mutually visible.

The concept of \MVS s in graphs has attracted significant attention
since its introduction, as it provides an important structural property of graphs.
However, determining a maximum \MVS\ in general graphs is computationally intractable; 
the decision problem of whether a graph admits an \MVS\ of size 
at least $k$ has been shown to be \emph{NP-complete}.
Thus, prior work has focused on finding maximal \MVS s 
or restricting attention to specific graph classes.

Cactus graphs form a fundamental low-treewidth class, 
yet the maximum \MVS\ problem for this class remains open.
In this paper, we first determine the size of maximum \MVS~in cactus graphs, 
and introduce two self-stabilizing algorithms that construct such sets.
The first algorithm uses a single BFS tree and stabilizes in $O(D)$
rounds with $O(\log n)$ bits per process on average; the second one
uses parallel BFS trees and stabilizes in
$O(|C_{\max}|+|T_{\max}|)$ rounds, which we show to be asymptotically
tight as a function of these two parameters, even on graphs where
$|C_{\max}|+|T_{\max}| = o(D)$.
\end{abstract}

\section{Introduction}
\label{sec:intro}

Distributed systems must cope with the absence of a global clock, only
partial knowledge of the global state, and transient faults that
silently drive the system into an illegitimate configuration.
A \emph{self-stabilizing} algorithm~\cite{Dijkstra1974} tolerates such
faults by guaranteeing convergence to a legitimate configuration
within finite time from \emph{any} initial configuration, without
external intervention.

A natural structural question in distributed network settings is
which subsets of processors can communicate efficiently without
mutual interference.
This is captured by the notion of a \emph{Mutual Visibility Set}
(\MVS), introduced by Di Stefano~\cite{DiStefano2022}.
Given a graph $G = (V, E)$, a set $S \subseteq V$ is an \MVS\ if
every pair of vertices in $S$ is connected by a shortest path whose
internal vertices all lie outside $S$.
In a distributed system modeled by $G$, a maximum \MVS\ is the largest
set of processes that can be connected pairwise by shortest paths none
of which is routed \emph{through} another member of the set.
This is the natural formalisation of a network of mutually
distrusting parties---each party is willing to relay traffic for
others, but is not willing to have its own traffic relayed by a peer,
so that any two members must be joined by a shortest path whose
intermediate vertices are all non-members.
The same condition arises when the members of the set are the
end-points of concurrent shortest-path communications and a member
must not be forced to act as a relay while it is itself communicating.
Mutual visibility originates in the study of mobile robots with
obstructed visibility, where a robot blocks the line of sight of two
robots collinear with it~\cite{DiLunaEtAl2017}, and the graph version
studied here replaces lines of sight by shortest
paths~\cite{DiStefano2022}.

The \emph{mutual visibility number} (\MVN), 
denoted $\mu(G)$, is the size of the largest \MVS\ on the given graph.
Determining a maximum \MVS\ or finding \MVN\ 
is computationally intractable on general
graphs~\cite{DiStefano2022} and hard to approximate for large
diameter~\cite{BiloEtAl2024}.
Motivated by these hardness barriers, prior work has restricted
attention to specific graph classes, obtaining exact formulas for
paths, cycles, block graphs, cographs, grids, and
distance-hereditary graphs~\cite{DiStefano2022,CiceroneDistefano2023dh},
as well as various graph products~\cite{CiceroneEtAl2023,KorzeVesel2025}.
Despite this sustained activity, the \MVN\ of \emph{cactus graphs}---a
fundamental class of low-treewidth graphs that generalizes both trees
and cycles---has not been determined.

In this paper we resolve the case of cactus graphs.
We establish the exact \MVN\ of any cactus graph via a closed-form
formula, and propose two self-stabilizing algorithms that construct a
maximum \MVS.
Our contributions are as follows.
\begin{enumerate}
   \item We introduce a cycle--tree decomposition of a cactus graph
          $G$ and classify each cycle component as a zero-cycle,
          unit-cycle, twin-cycle, or latent twin-cycle, leading to
          the exact formula
          $\mu(G) \;=\; 2n_2 + n_{\mathrm{lt}} + n_1 + \ell(G)$,
          where $n_1$, $n_2$, $n_{\mathrm{lt}}$, and $\ell(G)$ denote
          the numbers of unit-cycles, twin-cycles, latent twin-cycles,
          and leaves of $G$, respectively.
          No exact characterization of the \MVN\ for cactus graphs was previously known.
    \item We propose \textbf{Algorithm~1}, a self-stabilizing algorithm
          that constructs a maximum \MVS\ using a single-root BFS tree.
          It stabilizes in $O(D)$ rounds and uses $O(\log n)$ bits per
          process on \emph{average}, where $D$ is the diameter of $G$
          and $n=|V|$.
    \item We propose \textbf{Algorithm~2}, a self-stabilizing algorithm
          that runs parallel BFS trees from all high-degree vertices,
          enabling component-wise early termination.
          It stabilizes in $O(|C_{\max}|+|T_{\max}|)$ rounds, where
          $|C_{\max}|$ and $|T_{\max}|$ are the sizes of the largest
          cycle component and of the largest tree component of $G$.
          This bound is \emph{component-local}: we exhibit a family of
          cactus graphs on which $|C_{\max}|+|T_{\max}|$ is an
          arbitrarily small fraction of $D$ and on which every correct
          self-stabilizing algorithm nevertheless requires
          $\Omega(|C_{\max}|+|T_{\max}|)$ rounds, so the bound is
          asymptotically tight as a function of these two parameters.
\end{enumerate}
Both algorithms operate under the distributed daemon---the weakest
standard scheduling assumption---and are guaranteed to converge to
a maximum \MVS\ from any initial configuration.

The rest of the paper is organized as follows.
Section~\ref{sec:related} reviews related work.
Section~\ref{sec:mvs} derives the \MVN\ formula for cactus graphs and
gives all of its proofs.
Section~\ref{sec:algorithm} fixes the computational model, describes
the two self-stabilizing algorithms, gives the complete set of rules
of Algorithm~1 together with their correctness and complexity
analysis, and proves the lower bound that makes the stabilization time
of Algorithm~2 asymptotically tight.
Section~\ref{sec:conclusion} concludes with open problems.
This is the full version of the paper: every proof is given in place,
and the parts that a page-limited presentation can only summarise ---
the rules of Algorithm~1, the layered analysis on which the round
complexity rests, and the construction of the lower-bound family ---
are developed here in detail.

\section{Related Work}
\label{sec:related}
Self-stabilization was introduced by Dijkstra in
1974~\cite{Dijkstra1974} as a paradigm for designing fault-tolerant
distributed systems, and self-stabilizing solutions are now known for
most classical distributed problems; see Dolev~\cite{Dolev2000} and
Altisen et al.~\cite{Altisen2019} for textbook treatments,
and~\cite{Hedetniemi2010} for a survey of graph-theoretic problems.
Two results from this line are directly relevant here.
First, for \emph{silent} algorithms---those whose communication
registers stop changing after stabilization---Dolev, Gouda and
Schneider~\cite{DolGouSch1999} proved an $\Omega(\log n)$ lower bound
on the memory per register for problems such as leader election and
spanning tree construction; Remark~\ref{rem:space-lb} discusses why it
does not transfer to the maximum \MVS\ problem.
Second, stronger fault models such as
snap-stabilization~\cite{SnapStab},
superstabilization~\cite{DolHer1997} and Byzantine-tolerant
stabilization~\cite{DuboisMasuzawaTixeuil2010} have been developed;
we work in the classical model under the distributed daemon.

Visibility problems in distributed computing originate in mobile robot
systems: Di Luna et al.~\cite{DiLunaEtAl2017} solved the mutual
visibility problem for anonymous robots with obstructed visibility in
the plane.
The graph-theoretic formulation of the \MVS\ problem was introduced by
Di Stefano~\cite{DiStefano2022}, who proved that deciding whether a
graph admits an \MVS\ of size at least $k$ is \emph{NP-complete} and
determined the \MVN\ of several graph classes, including block graphs,
trees, Cartesian products of paths and cycles, complete bipartite
graphs, and cographs.
Bil\`{o} et al.~\cite{BiloEtAl2024} proved strong inapproximability
results: the problem is not approximable within $n^{1/3-\varepsilon}$
for graphs of diameter at least three and is APX-hard for diameter
two.

Because the general problem is intractable, subsequent work has either
restricted the graph class or introduced variants of mutual
visibility.
Exact formulas for the \MVN\ are known for paths, cycles, block
graphs, cographs and grids, and the \MVN\ is computable in linear time
on distance-hereditary graphs~\cite{CiceroneDistefano2023dh}; further
results cover Cartesian products and triangle-free
graphs~\cite{CiceroneEtAl2023}, strong
products~\cite{CiceroneEtAl2023strong}, graphs of diameter
two~\cite{CiceroneEtAl2024}, hypercubes~\cite{KorzeVesel2025}, and
Sierpi\'{n}ski-type graphs~\cite{SierpMV}.
The \emph{total}, \emph{dual} and \emph{outer} variants were proposed
in~\cite{CiceroneEtAl2023variety}, and an independent variant
in~\cite{BrevarYero2024}.
The \MVN\ of \emph{cactus graphs} 
has not been determined; the present paper settles it
and, in addition, gives two self-stabilizing
algorithms that construct a maximum \MVS.

\section{The Mutual Visibility Number on Cactus Graphs}
\label{sec:mvs}

This section is organised as follows.
Section~\ref{sec:prelim} fixes the terminology and recalls the two
known results on which our analysis rests.
Section~\ref{sec:geometry} collects the elementary metric facts about a
single cycle component, and in particular introduces the two parameters
$\lambda_C$ and $\tau_C$.
Section~\ref{sec:types} uses them to classify cycle components and to
define the \emph{capacity} $\capa(\cdot)$, which is the quantity that
the main theorem computes.
Section~\ref{sec:lower} exhibits an explicit \MVS\ $\Sstar$ of size
$\capa(G)$ (the lower bound), and Section~\ref{sec:upper} proves the
matching upper bound and the exact formula.

All proofs are given in full and in place; nothing is deferred.

\subsection{Preliminaries}
\label{sec:prelim}

All graphs considered in this paper are finite, simple, and undirected.
Let $G = (V, E)$ be a connected graph
with vertex set $V = \{v_1, v_2, \ldots, v_n\}$
and edge set $E$.
For a vertex $v_i \in V$, we denote by $N(v_i)$
the \emph{open neighborhood} of $v_i$,
\ie $N(v_i) = \{v_j \in V \mid \{v_j, v_i\} \in E\}$.
We denote by $\delta(v_i)$ the degree of vertex $v_i$,
\ie the number of its adjacent vertices.
A vertex of degree one is a \emph{leaf} of $G$, and we write
$\ell(G)$ for the number of leaves of $G$; more generally, for a
subgraph $H$ of $G$ we write $\ell_G(H)$ for the number of leaves of
$G$ that lie in $H$, so that $\ell(G) = \ell_G(G)$.
We write $d_G(u,v)$ for the distance between $u$ and $v$ in $G$,
\ie the length of a shortest $u$-$v$ path in $G$.

The indices of $v_1,\dots,v_n$ are used only to break ties in an
otherwise arbitrary but fixed way; no property of the enumeration is
assumed.

\begin{definition}[Cactus Graph]
\label{def:cactus}
A connected graph $G = (V, E)$ is a \emph{cactus graph}
if every edge of $G$ belongs to at most one simple cycle.
Equivalently, $G$ is a cactus graph if and only if
every biconnected component of $G$
is either a single edge or a simple cycle.
\end{definition}

\begin{figure}[tb]
  \begin{center}
    \includegraphics[width=\linewidth]{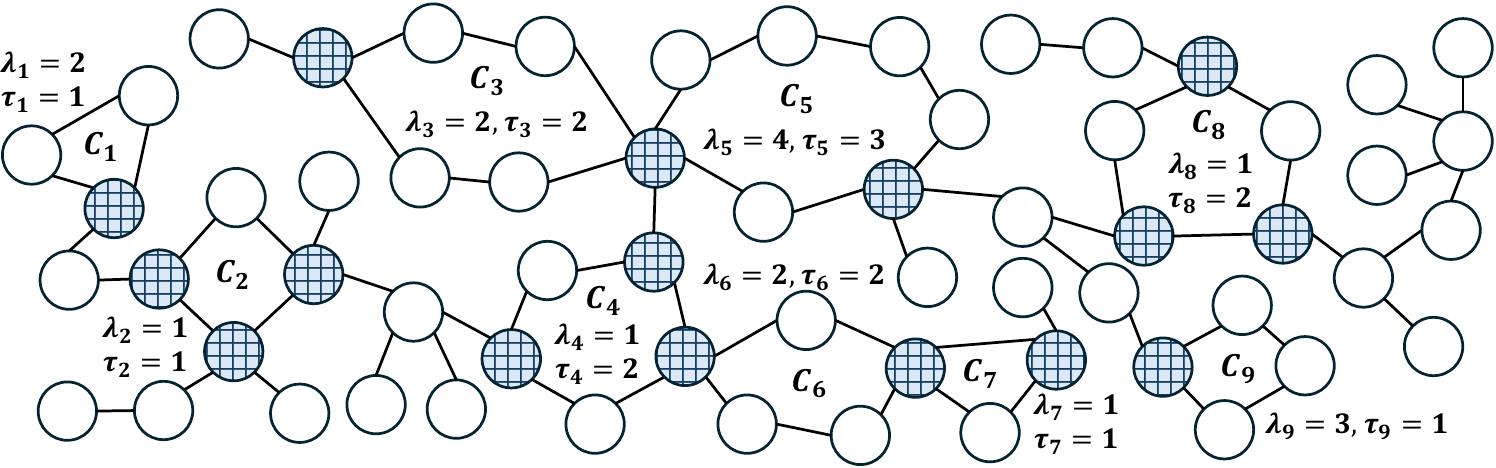}
    \caption{An example of a cactus graph and its cycle components with parameters.}
    \label{Fig:cactuscycle}
  \end{center}
\end{figure}

Figure~\ref{Fig:cactuscycle} illustrates an example of a cactus graph.
Note that simple cycles are special cases of cactus graphs,
and the \MVN\ of cycles can be easily obtained~\cite{DiStefano2022}.
Therefore, we focus on cactus graphs that are not simple cycles.
We denote by $\calC(G)$ the set of all simple cycles in $G$,
and write $|\calC(G)|$ for the number of cycles in $G$.

Moreover, we write $v_i \in e_j$ (where $v_i \in V$ and $e_j \in E$)
if $v_i$ is an endpoint of $e_j$.

\begin{definition}[Articulation Point]
\label{def:ap}
Let $G = (V, E)$ be a connected graph.
A vertex $v \in V$ is an \emph{articulation point} of $G$
if the subgraph
$G - v = (V \setminus \{v\},\, E \setminus \{e \in E \mid v \in e\})$
is disconnected.
We denote by $\calA(G)$ the set of all articulation points of $G$.
\end{definition}

We now define a canonical decomposition of a cactus graph into
\emph{cycle components} and \emph{tree components}
by cutting at articulation points that lie on cycles.

\begin{definition}[Cycle--Tree Decomposition]
\label{def:decomposition}
Let $G$ be a cactus graph, and let
$\calAC(G) = \{v \in \calA(G) \mid v \text{ belongs to at least one
cycle in } \calC(G)\}$
be the set of articulation points of $G$ that lie on at least one cycle.
The \emph{cycle--tree decomposition} of $G$ consists of the following
two types of components.
\begin{enumerate}
    \item For each simple cycle $C_i \in \calC(G)$, the
          \emph{cycle component} induced by $C_i$ is the subgraph
          $G[V(C_i)]$.
          Every vertex in $V(C_i) \cap \calAC(G)$ is called a
          \emph{boundary vertex} of cycle component $C_i$.

    \item Let $G' = G - E_{\mathrm{cyc}}$, where
          $E_{\mathrm{cyc}} = \bigcup_{C \in \calC(G)} E(C)$
          denotes the set of all edges belonging to at least one cycle.
          Each connected component of $G'$ with at least two vertices
          is called a \emph{tree component} of $G$.
          Every vertex of a tree component $T_i$ that belongs to
          $\calAC(G)$ is called a \emph{boundary vertex} of $T_i$.
\end{enumerate}
The boundary vertices serve as \emph{attachment points} shared between
a cycle component and one or more adjacent components,
and are included in both the relevant cycle component and the
adjacent component(s).
\end{definition}

In Figure~\ref{Fig:cactuscycle},
the vertices in $\calAC(G)$---that is, articulation points that belong
to at least one cycle---are depicted as grid-patterned vertices.
Note that these vertices are the boundary vertices of some components.
The cactus graph in Figure~\ref{Fig:cactuscycle} has 9 cycle components.
Since every edge of a tree component is a bridge of $G$, every vertex
that is internal to a tree component (\ie has degree at least two in
that component) is an articulation point of $G$; we use this fact
repeatedly.

\begin{definition}[Beard and Non-Beard Tree Component]
\label{def:beard}
Let $G$ be a cactus graph, and let $T$ be a tree component of $G$
in its cycle--tree decomposition (Definition~\ref{def:decomposition}).
We say that $T$ is a \emph{beard} of $G$ if the following three
conditions are all satisfied:
\begin{enumerate}
    \item[(B1)] $T$ is a path graph, \ie $T \cong P_k$ for some
                $k \geq 2$;
    \item[(B2)] exactly one endpoint of $T$ is a boundary vertex of
                $T$, and that boundary vertex is shared with exactly
                one cycle component of $G$; and
    \item[(B3)] the other endpoint of $T$ is a leaf of $G$
                (\ie has degree one in $G$), and every internal
                vertex of $T$ (\ie every vertex of degree two in $T$)
                is not a boundary vertex of any component in the
                cycle--tree decomposition of $G$.
\end{enumerate}
A tree component that does not satisfy all three conditions is called
a \emph{non-beard tree component} of $G$.
\end{definition}

\begin{remark}
\label{rem:beard}
A beard is a pendant path hanging off exactly one cycle, with its free
end a leaf of $G$ and no internal vertex shared with another
component; the smallest beard is $P_2$.
A non-beard tree component either branches (violating (B1)), has two
boundary vertices (violating (B2)), or has an internal vertex that is
an attachment point of another component (violating (B3)).
\end{remark}

Finally, we recall the two results of Di Stefano~\cite{DiStefano2022}
that our analysis builds upon.

\begin{lemma}[\cite{DiStefano2022}]
\label{lem:cycle-mvn}
For every simple cycle $C$ with $|C| \geq 3$ we have $\mu(C) = 3$;
that is, no four vertices of $C$ are pairwise mutually visible, and
some three of them always are.
\end{lemma}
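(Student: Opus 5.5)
The plan is to prove the two halves of Lemma~\ref{lem:cycle-mvn} separately: first that no four vertices of $C$ are pairwise mutually visible, and then that some three always are. Both rest on one structural fact. In a simple cycle $C$ with $n=|C|\ge 3$ vertices, any two distinct vertices $u,v$ are joined by exactly two simple paths, namely the two arcs of $C$ between them. Every $u$-$v$ path, and in particular every shortest one, is one of these arcs. Hence $u$ and $v$ are mutually visible with respect to $S$ exactly when at least one of the two arcs satisfies both of the following:
\begin{enumerate}
\item it is a shortest path, i.e.\ its length is at most $n/2$;
\item it contains no vertex of $S\setminus\{u,v\}$ in its interior.
\end{enumerate}

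\textbf{Upper bound.} Take any four distinct vertices of $C$. Label them $x_1,x_2,x_3,x_4$ in their cyclic order around $C$ and consider the pair $x_1,x_3$. One arc from $x_1$ to $x_3$ passes through $x_2$ as an interior vertex, and the other passes through $x_4$. So no $x_1$-$x_3$ path at all, shortest or not, avoids the other members of $S$. Therefore no $4$-subset is an \MVS, and neither is any larger set, since it contains such a $4$-subset.

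\textbf{Lower bound.} Index the vertices of $C$ as $0,1,\dots,n-1$ around the cycle. Choose $S=\{0,\lfloor n/3\rfloor,\lfloor 2n/3\rfloor\}$. For $n\ge 3$ these are three distinct vertices. They cut $C$ into three arcs with lengths $a,b,c\ge 1$, where $a+b+c=n$. A direct check of the floors shows that each of $a,b,c$ is at most $\lceil n/3\rceil$.

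For each pair in $S$, use the arc between them that avoids the third vertex. This arc has no member of $S$ in its interior, which gives condition~2. Its length is at most $\lceil n/3\rceil$, and $\lceil n/3\rceil\le n/2$ for every $n\ge 2$, which gives condition~1. Thus all three pairs are mutually visible, and $\mu(C)\ge 3$. The case $n=3$, where all three vertices are pairwise adjacent, is covered by the same argument.

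\textbf{Main obstacle.} The only delicate point is this arithmetic: each gap is at most $\lceil n/3\rceil$, and $\lceil n/3\rceil\le n/2$. I will verify it by considering the residue $n \bmod 3$, which settles it in each of the three cases. Together the two halves give $\mu(C)=3$.
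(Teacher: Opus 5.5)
Your proof is correct and complete. The paper gives no proof of this lemma: it is cited from Di Stefano~\cite{DiStefano2022} and used as a black box, through Observation~\ref{obs:isometric}, in Lemma~\ref{lem:cycle-bound}. So your argument is a self-contained justification, not an alternative to an in-paper proof.

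It also uses the paper's own tools. Your upper bound is the same blocking argument the paper gives in Lemma~\ref{lem:cycle-bound}(iv). There, a vertex in the interior of the arc between $u_1$ and $u_2$ can reach $u_3$ only through $u_1$ or $u_2$, exactly as your pair $x_1,x_3$ is blocked by $x_2$ and $x_4$. Your criterion that an arc is a shortest path if and only if its length is at most $n/2$ is the computation behind Lemma~\ref{lem:arc}.

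The arithmetic you single out as the main obstacle is immediate. Write $n=3k+r$ with $k\ge 1$ and $r\in\{0,1,2\}$. The three gaps cut out by $0,\lfloor n/3\rfloor,\lfloor 2n/3\rfloor$ are then $(k,k,k)$, $(k,k,k+1)$ and $(k,k+1,k+1)$ respectively. They are all positive, so the three vertices are distinct, and each is at most $\lceil n/3\rceil$. Finally, $\lceil n/3\rceil\le n/2$ reduces to $k+1\le(3k+1)/2$, i.e.\ $k\ge 1$, in the only case where it is not obvious ($r=1$).
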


\begin{lemma}[\cite{DiStefano2022}]
\label{lem:no-ap}
Every graph $G$ admits a maximum \MVS\ that contains no articulation
point of $G$.
\end{lemma}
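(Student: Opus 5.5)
The statement to prove is Lemma~\ref{lem:no-ap}: every graph admits a maximum \MVS\ containing no articulation point. The plan is an exchange argument. We take a maximum \MVS\ $S$ that minimises $|S \cap \calA(G)|$, and suppose some articulation point $v \in S$. We then build another \MVS\ $S'$ with $|S'| = |S|$ and strictly fewer articulation points, which is a contradiction.

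\textbf{Step 1 (structure around $v$).} Since $v$ is an articulation point, $G - v$ has at least two components $H_1, \dots, H_k$ with $k \geq 2$. Every shortest path between vertices of different components passes through $v$. Since $v \in S$, two vertices of $S$ in different components could not see each other. Hence $S \setminus \{v\}$ lies inside a single component, say $H_1$ (if $S=\{v\}$, pick any $H_i$).

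\textbf{Step 2 (the replacement).} Choose a component $H_j$ with $j \neq 1$, and inside the subgraph $G[V(H_j) \cup \{v\}]$ choose a vertex $w$ at maximum distance from $v$. Such a $w$ is not an articulation point of $G$. To see this, suppose removing $w$ disconnected $G$. Then some component of $G - w$ avoids $v$, so it lies entirely inside $H_j$ and is reached from $v$ only through $w$. Any vertex of that component is then farther from $v$ than $w$, contradicting the choice of $w$. Set $S' = (S \setminus \{v\}) \cup \{w\}$.

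\textbf{Step 3 (visibility).} Pairs inside $S \setminus \{v\}$ keep their witness shortest paths. Those paths avoided $v$ in their interior because $v \in S$, and they stay inside $H_1 \cup \{v\}$, since a shortest path cannot leave a block region and come back through the cut vertex. So they also avoid $w$.

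For $u \in S \setminus \{v\}$, the pair $(u,w)$ is handled by concatenation. Take a shortest $u$--$v$ path $P$ whose interior avoids $S$, which exists because $u$ and $v$ were mutually visible in $S$. Take any shortest $v$--$w$ path $Q$; its interior lies in $H_j$, which contains no vertex of $S'$ other than $w$. Since $v$ separates $u$ from $w$, we have $d(u,w) = d(u,v) + d(v,w)$, so $P$ followed by $Q$ is a shortest $u$--$w$ path. Its internal vertices are the interior of $P$, the vertex $v \notin S'$, and the interior of $Q$, none of which lie in $S'$. So $u$ and $w$ are mutually visible.

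Thus $S'$ is an \MVS\ of the same size with one fewer articulation point, completing the contradiction. The main point needing care is Step 2: $w$ must be chosen so that it is not itself an articulation point, and the farthest-vertex argument gives exactly this.
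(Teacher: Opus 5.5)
Your exchange argument is correct and complete. The paper does not prove this lemma itself; it imports it from Di Stefano~\cite{DiStefano2022}. So there is no in-paper route to compare with, and your proof stands as a valid self-contained justification.

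A few steps deserve one more line each.
\begin{itemize}
\item In Step~2, a component $K$ of $G-w$ that avoids $v$ lies inside $H_j$. Because $G$ is connected, $K$ contains a neighbour of $w$, and every neighbour of $w$ other than $v$ lies in $H_j$.
\item In Step~3, the interior of $P$ lies in $H_1$, since it avoids $v$ and is joined to $u$ in $G-v$. This is why $P$ misses $w$, and why $P$ followed by $Q$ is a simple path.
\item For the pairs inside $S\setminus\{v\}$, the cleanest reason the witness paths avoid $w$ is this: their interiors avoid $v\in S$, so they stay inside $H_1$.
\end{itemize}
The argument also relies on $G$ being connected, which the paper assumes throughout.
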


By Lemma~\ref{lem:no-ap} we may, and from now on always do, restrict
attention to \MVS s that avoid $\calA(G)$.

\subsection{Basic Geometry of a Cycle Component}
\label{sec:geometry}

Throughout the remainder of Section~\ref{sec:mvs}, $G$ denotes a cactus
graph that is not a simple cycle.
We first fix the two pieces of notation that the whole analysis is
phrased in --- the \emph{branch} at a boundary vertex and the
parameters $\lambda_C^{X}$, $\tau_C$ --- and then record the metric
facts they obey.

\begin{definition}[Branch]
\label{def:branch}
Let $C$ be a cycle component of $G$ and $v \in V(C)\cap\calAC(G)$.
The \emph{branch of $C$ at $v$}, denoted $G_v$, is the union of all
connected components of $G-v$ that do not contain $C\setminus\{v\}$.
\end{definition}

A union of components is taken so that the definition remains
meaningful when $v$ has several branches, \eg when $v$ lies on two
cycles or carries two pendant paths.
Writing $H_v$ for the component of $G-v$ containing $C\setminus\{v\}$,
we obtain the partition
\begin{equation}
\label{eq:branch-partition}
   V = V(G_v)\ \sqcup\ \{v\}\ \sqcup\ V(H_v).
\end{equation}

The branches of one \emph{fixed} cycle component at its distinct
boundary vertices are pairwise disjoint, being unions of distinct
components of $G$ minus those vertices; and a branch determines the
pair $(C,v)$ it comes from (Lemma~\ref{lem:bookkeeping}(i)).
The family of \emph{all} branches is, however, not laminar: if $v$
lies on two cycles $C,C'$ and also carries a pendant path $T$, then
the branch of $C$ at $v$ and the branch of $C'$ at $v$ both contain
$T$ and neither contains the other.
Laminarity does hold for the sub-family of branches that contain no
vertex of a given \MVS\ (Lemma~\ref{lem:bookkeeping}(i)), which is all
we shall need.

\begin{definition}[Parameters $\lambda_C^{X}$ and $\tau_C$]
\label{def:parameters}
Let $C$ be a cycle component of $G$ and let
$A_C = V(C)\cap\calAC(G)$ be its set of boundary vertices.
For $X \subseteq A_C$ we let $\lambda_C^{X}$ denote the maximum number
of consecutive vertices of $C$ that belong to
$\bigl(V(C)\setminus A_C\bigr) \cup X$;
that is, the length of the longest run of consecutive
non-articulation-point vertices of $C$ once the vertices of $X$ have
been \emph{deactivated} (\ie are treated as non-articulation points).
Such a maximal run is called an \emph{$X$-free run} of $C$.
We abbreviate $\lambda_C = \lambda_C^{\emptyset}$ and
$\lambda_C^{v} = \lambda_C^{\{v\}}$.
Finally we set
$\tau_C = \bigl\lfloor (|C|-1)/2 \bigr\rfloor$,
so that $2\tau_C+1 \leq |C| \leq 2\tau_C+2$.
\end{definition}

Figure~\ref{Fig:cactuscycle} shows the parameters $\lambda_C$
and $\tau_C$ for each cycle component.

\begin{observation}
\label{obs:isometric}
Every cycle component $C$ of $G$ is an isometric subgraph of $G$:
for all $u,v \in V(C)$ we have $d_G(u,v) = d_C(u,v)$, and every
shortest $u$--$v$ path of $G$ is one of the two $u$--$v$ arcs of $C$.
\end{observation}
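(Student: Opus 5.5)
The plan is to use the structure of cactus graphs: every path between two vertices of a cycle component $C$ either stays on $C$ or leaves $C$ and has to come back through the same articulation point. Fix $u,v \in V(C)$ and let $P$ be any $u$--$v$ path in $G$ (not necessarily simple; we shortcut as needed). First I will show that any $u$--$v$ path of $G$ that leaves $V(C)$ can be shortcut to a strictly shorter or equal walk staying on $C$. Suppose $P$ leaves $C$ at a vertex $w \in V(C)$, i.e.\ the next vertex $x$ of $P$ lies outside $V(C)$. Then $w$ must be an articulation point, and $x$ lies in the branch $G_w$ of Definition~\ref{def:branch}. By the partition~\eqref{eq:branch-partition}, $G_w$ and $V(H_w)\supseteq V(C)\setminus\{w\}$ are separated by $w$. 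Since $P$ ends at $v\in V(C)$, it must return to $V(C)$, and it can only do so through $w$. Hence $P$ visits $w$ twice, so it is not a shortest path. Deleting the closed subwalk from $w$ to $w$ strictly shortens $P$.

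Iterating this, a shortest $u$--$v$ path of $G$ never leaves $V(C)$, so it is a path in $G[V(C)]$. The cactus property is what makes this work. By Definition~\ref{def:cactus}, $C$ is a biconnected component that is a simple cycle. So $G[V(C)]$ has no chords: a chord would be an edge lying in two simple cycles, or would make the block larger than $C$. Therefore $G[V(C)]=C$, and every $u$--$v$ path inside it is one of the two arcs of $C$. It follows that $d_G(u,v)=\min$ of the two arc lengths, which is $d_C(u,v)$. It also follows that every shortest path of $G$ is a shortest arc, and conversely every shortest arc is a shortest path of $G$.

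The one step that needs care is the claim that leaving $C$ forces a return through the same vertex $w$. I will justify it directly from~\eqref{eq:branch-partition}. $x$ is not in $H_w$, since otherwise $x$ would be connected to $C\setminus\{w\}$ in $G-w$ by a path avoiding $C$'s edges, and together with the arcs of $C$ this creates a cycle sharing edges with $C$, contradicting the cactus property. Hence $x\in G_w$, and every edge between $G_w$ and the rest of the graph goes through $w$.

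There is a degenerate case where $w$ is not an articulation point. Then all neighbours of $w$ lie on $C$, because a neighbour off $C$ would again lie in a cycle through two edges of $C$'s block. So this case cannot occur, which is consistent with the claim.
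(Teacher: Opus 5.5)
Your proof is correct and follows the paper's argument: a path that leaves $C$ at a vertex must re-enter $C$ through that same vertex, so a shortest (hence simple) path stays inside $C$. You also spell out the re-entry claim from the cactus property and check that $G[V(C)]$ has no chords. The paper's two-line proof leaves both details implicit in the remark that $C$ is a block.
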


\begin{proof}
Since $G$ is a cactus, $C$ is a biconnected component (block) of $G$;
two distinct blocks share at most one vertex, so any path leaving
$C$ at a vertex $x$ must re-enter $C$ at the same vertex $x$.
Hence no path between two vertices of $C$ is shorter than the shorter
of the two arcs of $C$.
\end{proof}

\begin{observation}
\label{obs:tree}
Let $S$ be an \MVS\ of $G$ with $S \cap \calA(G) = \emptyset$ and let
$T$ be a tree component of $G$.
Then $S \cap V(T)$ consists of leaves of $G$ only.
Moreover, the internal vertices of any path inside $T$ are
articulation points of $G$, hence they never belong to $S$.
\end{observation}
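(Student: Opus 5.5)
The plan is to reduce both claims to the fact, noted right after Definition~\ref{def:decomposition}, that every edge of a tree component is a bridge of $G$, so that every vertex of degree at least two in $T$ is an articulation point of $G$. I would prove the second sentence first and derive the first from it.

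For the second claim, take any path $P$ inside $T$ and an internal vertex $x$ of $P$. Then $x$ has two distinct neighbours on $P$, so $\delta_T(x)\ge 2$. Let $y,z$ be these neighbours. The edges $xy$ and $xz$ are bridges of $G$, since they lie in no cycle. I would argue that $y$ and $z$ lie in different components of $G-x$. Suppose some $y$--$z$ path in $G-x$ existed. Together with $y x z$ it would form a cycle containing the edge $xy$, contradicting the fact that $xy$ is a bridge. Hence $G-x$ is disconnected, so $x\in\calA(G)$. Since $S\cap\calA(G)=\emptyset$, we get $x\notin S$.

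For the first claim, let $u\in S\cap V(T)$. I distinguish cases on $\delta_T(u)$.
\begin{itemize}
\item If $\delta_T(u)\ge 2$, then $u$ is internal to a path of length two inside $T$. By the previous paragraph $u\in\calA(G)$, which is impossible.
\item Otherwise $\delta_T(u)=1$, since tree components have at least two vertices and are connected. If $u$ also lies on some cycle, then $u$ has a neighbour $w$ in $T$ via a bridge $uw$. Removing $u$ separates $w$ from the cycle neighbours of $u$, by the same cycle-closing argument as before. So $u\in\calAC(G)\subseteq\calA(G)$, again impossible.
\item Hence $u$ lies on no cycle, and all edges at $u$ belong to $T$. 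Therefore $\delta_G(u)=\delta_T(u)=1$, and $u$ is a leaf of $G$.
\end{itemize}

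The only mildly delicate point is the middle case. There I must check that a vertex of $T$ of $T$-degree one that also meets a cycle is indeed an articulation point. Equivalently, $\delta_G(u)=\delta_T(u)$ holds exactly when $u$ is on no cycle. The argument is the same bridge argument: the neighbour $w$ across the bridge $uw$ and any cycle-neighbour of $u$ cannot be joined in $G-u$. This is routine, so I expect no real obstacle.
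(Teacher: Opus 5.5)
Your proof is correct and follows the paper's argument. It uses the same split on $\delta_T(u)$, and the bridge property of tree-component edges does all the work; you simply present the two claims in the opposite order. In the $T$-degree-one case you also supply the cycle-closing argument showing $u\in\calA(G)$, which the paper only asserts by placing $u$ in $\calAC(G)$.
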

 
\begin{proof}
Let $u \in V(T)$ not be a leaf of $G$.
If $u$ has degree at least two in $T$, then, since every edge of $T$
is a bridge of $G$, deleting $u$ disconnects $G$ and $u \in \calA(G)$.
Otherwise $u$ has degree one in $T$ but degree at least two in $G$, so
$u$ is incident with an edge of a cycle; hence $u \in V(T) \cap
\calAC(G)$ is a boundary vertex of $T$ and again $u \in \calA(G)$.
In both cases $u \notin S$.
An internal vertex of a path inside $T$ has degree at least two in
$T$ and is covered by the first case.
\end{proof}

\begin{lemma}[Arc geometry]
\label{lem:arc}
Let $C$ be a cycle component of $G$, let $A \subseteq V(C)$ with
$A \neq \emptyset$, and let $P$ be a maximal run of consecutive
vertices of $V(C)\setminus A$, say $|P| = k$.
Let $p,q \in A$ be the two vertices of $A$ adjacent to the two ends
of $P$ (possibly $p = q$, which happens exactly when $|A| = 1$).
Then the $p$--$q$ arc through $P$ has length $k+1$, the complementary
arc has length $|C|-k-1$, and
\begin{enumerate}
  \item[(i)] if $k < \tau_C$, the arc through $P$ is the \emph{unique}
        shortest $p$--$q$ path of $G$;
  \item[(ii)] if $k \geq \tau_C$, the complementary arc is a shortest
        $p$--$q$ path of $G$.
\end{enumerate}
Moreover, for every $u \in P$ at distance $j$ from $p$ along $P$
(so $1 \leq j \leq k$), if $j \le \tau_C$ then the sub-path of $P$
from $p$ to $u$ is the unique shortest $p$--$u$ path of $G$.
\end{lemma}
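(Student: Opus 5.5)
The argument reduces everything to the cycle metric and uses Observation~\ref{obs:isometric} to lift statements from $C$ to $G$. Since $p,q$ are the vertices of $A$ flanking the maximal run $P$, the $p$--$q$ arc through $P$ consists of $p$, the $k$ vertices of $P$, and $q$. It therefore has $k+1$ edges, and the complementary arc has $|C|-(k+1)$ edges. When $p=q$ (exactly when $|A|=1$, since then $P=V(C)\setminus\{p\}$ and $k=|C|-1$), the ``arc through $P$'' is the whole cycle of length $|C|=k+1$ and the complementary arc is the trivial path of length $0=|C|-k-1$. So the length claims hold in all cases. By Observation~\ref{obs:isometric}, every shortest $p$--$q$ path of $G$ is one of the two arcs, so (i) and (ii) reduce to comparing $k+1$ with $|C|-k-1$.

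For (i), assume $k<\tau_C$, i.e.\ $k\le\tau_C-1$. Then $2(k+1)\le 2\tau_C<|C|$, using $|C|\ge 2\tau_C+1$. Hence $k+1<|C|-k-1$, so the arc through $P$ is strictly shorter. Here $p\ne q$ necessarily: if $p=q$, then $k=|C|-1\ge\tau_C$. By Observation~\ref{obs:isometric} the arc through $P$ is then the unique shortest path. For (ii), assume $k\ge\tau_C$. If $p=q$, the complementary arc has length $0$ and is trivially shortest. Otherwise $2(k+1)\ge 2\tau_C+2\ge|C|$, so $|C|-k-1\le k+1$, and the complementary arc is a shortest path (possibly tied).

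For the final claim, fix $u\in P$ at distance $j\le\tau_C$ from $p$ along $P$. The sub-path of $P$ from $p$ to $u$ is one $p$--$u$ arc of $C$, of length $j$; the other arc has length $|C|-j$. Since $2j\le 2\tau_C<|C|$, we get $j<|C|-j$. By Observation~\ref{obs:isometric} the shorter arc is the unique shortest $p$--$u$ path of $G$. This does not require $p\ne q$.

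The only delicate point is keeping track of the degenerate case $p=q$ and of the inequalities $2\tau_C+1\le|C|\le 2\tau_C+2$. In particular, strictness in (i) and in the last claim relies on $2\tau_C<|C|$, while (ii) uses $|C|\le 2\tau_C+2$. Uniqueness of shortest paths in $G$, as opposed to in $C$, is supplied entirely by Observation~\ref{obs:isometric}.
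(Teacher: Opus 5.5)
Your proof is correct and follows the paper's argument: you use Observation~\ref{obs:isometric} to reduce to comparing the two arc lengths $k+1$ and $|C|-k-1$ (resp.\ $j$ and $|C|-j$), via $2\tau_C+1\le|C|\le2\tau_C+2$. Your explicit handling of the degenerate case $p=q$ is a minor addition; the paper leaves that case implicit in its arithmetic.
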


\begin{proof}
By Observation~\ref{obs:isometric} the only $p$--$q$ paths that can be
shortest are the two arcs, whose lengths are $k+1$ and $|C|-k-1$.
The arc through $P$ is the unique shortest one if and only if
$k+1 < |C|-k-1$, \ie $2k+2 < |C|$, and the complementary arc is a
shortest one if and only if $|C| \le 2k+2$.
Since $2\tau_C+1 \leq |C| \leq 2\tau_C+2$, we have
$2k+2 < |C| \iff k < \tau_C$ and $|C| \le 2k+2 \iff k \ge \tau_C$,
which gives (i) and (ii).
For the last claim, the two $p$--$u$ arcs have lengths $j$ and
$|C|-j$, and $j \le \tau_C$ gives $2j \le 2\tau_C < |C|$.
\end{proof}

Lemma~\ref{lem:arc} explains the threshold $\tau_C$: a run is
\emph{blocking} when it is strictly shorter than $\tau_C$,
\emph{neutral} when it equals $\tau_C$, and \emph{free} when it exceeds
$\tau_C$.

\subsection{Cycle Types and Capacity}
\label{sec:types}

\begin{definition}[Cycle Component Types]
\label{def:cycle-types}
Let $C$ be a cycle component of $G$.
A boundary vertex $v \in A_C$ is a \emph{beard boundary vertex} of $C$
if the branch $G_v$ (Definition~\ref{def:branch}) is a beard of $G$
(Definition~\ref{def:beard}).
We classify $C$ as follows.
\begin{itemize}
  \item $C$ is a \emph{twin-cycle} if $\lambda_C > \tau_C$ (refer to Figure \ref{fig:types}(a));
  \item $C$ is a \emph{unit-cycle} if $\lambda_C = \tau_C$ (refer to Figure \ref{fig:types}(b));
  \item $C$ is a \emph{latent twin-cycle} if $\lambda_C < \tau_C$ and
        $\lambda_C^{v^*} > \tau_C$ for some beard boundary vertex
        $v^*$ of $C$, which we call a \emph{witness} for $C$ (refer to Figure \ref{fig:types}(c));
  \item $C$ is a \emph{zero-cycle} otherwise (refer to Figure \ref{fig:types}(d)).
\end{itemize}
The four types are mutually exclusive and exhaustive.
We denote by $n_0,n_1,n_2,n_{\mathrm{lt}}$ the numbers of zero-,
unit-, twin- and latent twin-cycles of $G$, so that
$n_0+n_1+n_2+n_{\mathrm{lt}} = |\calC(G)|$.
\end{definition}

\begin{figure}[tb]
\centering
\begin{tabular}{@{}cccc@{}}
 \includegraphics[width=.225\linewidth]{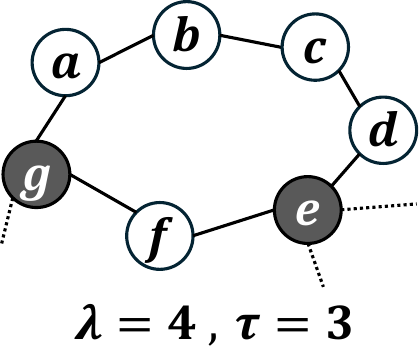} &
 \includegraphics[width=.225\linewidth]{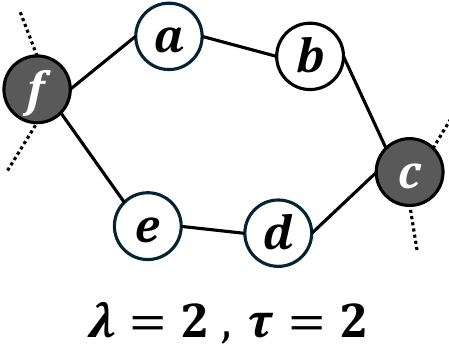} &
 \includegraphics[width=.225\linewidth]{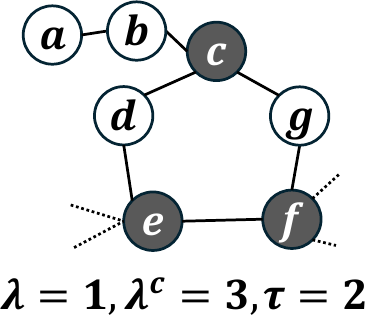} &
 \includegraphics[width=.225\linewidth]{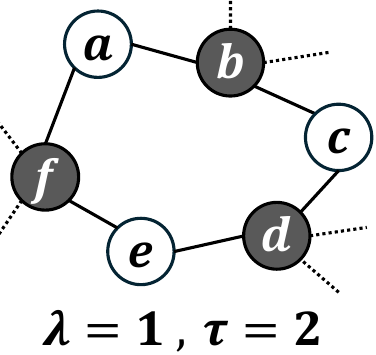} \\
 {\small (a) twin-cycle} & {\small (b) unit-cycle} &
 {\small (c) latent twin-cycle} & {\small (d) zero-cycle}
\end{tabular}
\caption{The four types of cycle component.  Grey vertices are
boundary vertices.}
\label{fig:types}
\end{figure}

\begin{definition}[Capacity]
\label{def:capacity}
The \emph{capacity} of a cycle component $C$ is
\[
  \capa(C) \;=\;
  \begin{cases}
     2 & \text{if $C$ is a twin-cycle},\\
     1 & \text{if $C$ is a unit-cycle or a latent twin-cycle},\\
     0 & \text{if $C$ is a zero-cycle},
  \end{cases}
\]
\end{definition}

\begin{definition}[Branch capacity]
\label{def:branchcap}
Let $C$ be a cycle component of $G$ and $v \in A_C$.
We say that a cycle component $C'$ of $G$ \emph{lies below} $(C,v)$ if
$V(C') \subseteq V(G_v) \cup \{v\}$, and we write $\calC(C,v)$ for the
set of such cycle components.
The \emph{branch capacity} of $C$ at $v$ is
\[
   \ccaps(C,v) \;=\; \ell_G(G_v) \;+\!\!
   \sum_{C' \in \calC(C,v)}\!\! \capa(C').
\]
Finally, the capacity of $G$ is
$\capa(G) = \ell(G) + \sum_{C \in \calC(G)} \capa(C)$.
\end{definition}

Note that $C\notin\calC(C,v)$, while every cycle component through $v$
other than $C$ \emph{does} lie below $(C,v)$; counting $v$ itself is
essential.
For instance, if $G$ consists of two $4$-cycles $C,C'$ sharing a
single vertex $v$, then $G_v=C'\setminus\{v\}$ contains no leaf of $G$
and no cycle component as a subgraph, yet $\ccaps(C,v)=\capa(C')=2$,
which is the correct value.

The capacity of $C$ is what $C$ contributes \emph{net} to a maximum
\MVS: a latent twin-cycle contributes two vertices of $C$ but forfeits
the leaf of one adjacent beard, hence $2-1=1$.
Summing over all components gives
\begin{equation}
\label{eq:capG}
  \capa(G) \;=\; \ell(G) + \sum_{C\in\calC(G)}\capa(C)
           \;=\; 2n_2 + n_{\mathrm{lt}} + n_1 + \ell(G),
\end{equation}
and Theorem~\ref{thm:mvn} will state that $\mu(G) = \capa(G)$.
The following two lemmas make the classification well behaved.

\begin{lemma}
\label{lem:unique-run}
Let $C$ be a cycle component and $X \subseteq A_C$ with
$\lambda_C^{X} > \tau_C$.
Then the $X$-free run of length $\lambda_C^{X}$ is unique.
Moreover, if $C$ is a latent twin-cycle with witness $v^*$ and
$X = \{v^*\}$, then $v^*$ is an \emph{internal} vertex of that run;
in particular both endpoints of the run are non-articulation points
of $G$.
\end{lemma}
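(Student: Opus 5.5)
The plan is a counting argument on the cycle, followed by a local analysis around $v^*$.

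\emph{Uniqueness.} The $X$-free runs of $C$ are maximal and pairwise disjoint. Two distinct runs must be separated by at least one vertex of $A_C\setminus X$ on each side. So if two runs had length $\lambda_C^X$ each, they would occupy at least $2\lambda_C^X \ge 2(\tau_C+1)=2\tau_C+2$ vertices. Adding the at least two separating vertices gives $|C|\ge 2\tau_C+4$, which contradicts $|C|\le 2\tau_C+2$ from Definition~\ref{def:parameters}. One degenerate case needs a separate check: $A_C\setminus X=\emptyset$, where the whole cycle forms one "run". It is then trivially unique, and it cannot arise for a latent twin-cycle, since $\lambda_C<\tau_C$ forces $A_C\neq\emptyset$. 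The same count also shows that, apart from this degenerate case, there is at least one vertex of $A_C\setminus X$ outside the run.

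\emph{The latent twin-cycle with witness $v^*$.} Let $R$ be the unique $\{v^*\}$-free run of length $\lambda_C^{v^*}>\tau_C$. I will carry out the following steps in order.
\begin{enumerate}
\item $v^*\in R$. Otherwise $R$ consists of non-articulation points of $C$ only, so $\lambda_C\ge|R|>\tau_C$, contradicting $\lambda_C<\tau_C$.
\item $v^*$ is not an endpoint of $R$. Suppose it were. Then $R\setminus\{v^*\}$ is a run of consecutive vertices of $V(C)\setminus A_C$ of length $\lambda_C^{v^*}-1\ge\tau_C$. Hence $\lambda_C\ge\tau_C$, again a contradiction.
\end{enumerate}
Thus $v^*$ is internal to $R$, and every other vertex of $R$ lies in $V(C)\setminus A_C$.

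\emph{Endpoints of $R$ are not articulation points of $G$.} The endpoints of $R$ lie in $V(C)\setminus A_C$. It remains to check that a vertex of $C$ outside $A_C=V(C)\cap\calAC(G)$ is not an articulation point of $G$. If $w\in V(C)$ were in $\calA(G)$, then $w$ lies on a cycle, so by definition $w\in\calAC(G)$ and hence $w\in A_C$. So the endpoints are non-articulation points of $G$.

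The only delicate step is the degenerate case in the counting argument, and it is harmless. If $A_C\setminus X=\emptyset$, the wrap-around makes "run" ambiguous, but there is only one candidate run. In the latent case, $A_C\setminus\{v^*\}\ne\emptyset$ anyway: otherwise $A_C=\{v^*\}$, giving $\lambda_C=|C|-1\ge\tau_C$, a contradiction. So $R$ is a genuine path bounded on both sides by vertices of $A_C\setminus\{v^*\}$.
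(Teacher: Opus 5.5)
Your proof is correct and follows the paper's route. The uniqueness count $|C|\ge 2\lambda_C^{X}+2\ge 2\tau_C+4$ is identical. Your argument that $v^*$ cannot be an endpoint, because dropping it would leave an $\emptyset$-free run of length at least $\tau_C$, is the same contradiction the paper obtains from $\lambda_C^{v^*}=\max\{\lambda_C,\alpha+1+\beta\}$ with $\alpha=0$ or $\beta=0$. Your explicit handling of the degenerate case $A_C\setminus X=\emptyset$, and of why vertices of $C$ outside $A_C$ are not articulation points of $G$, usefully fills in details the paper leaves implicit.
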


\begin{proof}
Two distinct $X$-free runs of length $\lambda > \tau_C$ are separated
by at least one vertex on each side, hence
$|C| \geq 2\lambda + 2 \geq 2\tau_C + 4 > 2\tau_C+2 \geq |C|$,
a contradiction.

For the second claim, let $\alpha$ and $\beta$ be the lengths of the
two $\emptyset$-free runs of $C$ adjacent to $v^*$
(with $\alpha=0$, resp.\ $\beta=0$, if $v^*$ is adjacent to another
boundary vertex on that side).
Deactivating $v^*$ merges exactly these two runs, so
$\lambda_C^{v^*} = \max\{\lambda_C,\ \alpha+1+\beta\}$; since
$\lambda_C < \tau_C < \lambda_C^{v^*}$ we must have
$\alpha+1+\beta = \lambda_C^{v^*} > \tau_C$.
The vertex $v^*$ is an endpoint of the merged run if and only if
$\alpha = 0$ or $\beta = 0$.
If $\alpha = 0$, then $\beta > \tau_C - 1$, \ie $\beta \geq \tau_C$,
whereas $\beta \le \lambda_C < \tau_C$ --- a contradiction; the case
$\beta = 0$ is symmetric.
Hence $\alpha,\beta \geq 1$ and $v^*$ is internal.
Every vertex of the run other than $v^*$ lies outside $A_C$ and is
therefore not an articulation point of $G$, so in particular the two
endpoints are not.
\end{proof}

\begin{lemma}[Value of a branch]
\label{lem:branch-value}
Let $C$ be a cycle component of $G$ and $v \in A_C$.
Then $\ccaps(C,v) \geq 1$, and $\ccaps(C,v) = 1$ holds if and only if
$G_v$ is a beard of $G$.
\end{lemma}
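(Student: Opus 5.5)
Since $G_v$ is nonempty, it contains either a leaf of $G$ or a cycle component, and I exploit this in both directions. I treat $G_v$ structurally: let $K = G[V(G_v)\cup\{v\}]$. Every block of $G$ inside $K$ is an edge or a cycle, and every cycle component of $K$ is a cycle component of $G$ lying below $(C,v)$, so it belongs to $\calC(C,v)$. Leaves of $G$ in $V(G_v)$ are exactly the vertices of $K$ other than $v$ that have degree one in $K$, because degrees in $G$ and in $K$ agree away from $v$. The central claim I need is that every cycle component $C'$ of $G$ satisfies $\capa(C')\ge 0$, with $\capa(C')\ge 1$ whenever $C'$ has a non-beard boundary structure that makes it "terminal" in the branch. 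Rather than prove that directly, I argue by considering the block tree of $K$ rooted at $v$.

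\textbf{Lower bound $\ccaps(C,v)\ge 1$.} Take a leaf block $B$ of the block tree of $K$ that is farthest from $v$ (not containing $v$ unless $K$ is a single block).
\begin{itemize}
\item If $B$ is a bridge, its non-cut endpoint is a leaf of $G$ in $G_v$, which contributes $1$ to $\ell_G(G_v)$.
\item If $B$ is a cycle $C'$, it has exactly one boundary vertex $w$ in $G$. Then $V(C')\setminus\{w\}$ is a single run of length $|C'|-1\ge 2\tau_{C'}\ge\tau_{C'}+1$ whenever $\tau_{C'}\ge1$, which always holds since $|C'|\ge3$. Hence $\lambda_{C'}>\tau_{C'}$, so $C'$ is a twin-cycle and $\capa(C')=2$.
\end{itemize}
The degenerate case where $K$ itself is a single block containing $v$ is handled the same way: a single bridge gives a leaf, and a single cycle through $v$ has $v$ as its only boundary vertex within $K$. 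One subtlety is that $v$ is a boundary vertex of that cycle in $G$, which is why the cycle counts as lying below $(C,v)$, as the note after Definition~\ref{def:branchcap} stresses. All terms are nonnegative, so $\ccaps(C,v)\ge1$.

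\textbf{Characterization of equality.} If $G_v$ is a beard, then by (B1)--(B3) it is a path hanging from $v$, containing exactly one leaf and no cycle. So $\ell_G(G_v)=1$, $\calC(C,v)$ is empty, and $\ccaps(C,v)=1$. For the converse, assume $\ccaps(C,v)=1$. By the lower-bound argument, any cycle in $K$ forces a leaf cycle of the block tree, which is a twin-cycle contributing $2$. Hence $K$ contains no cycle and is a tree, and that tree has exactly one leaf of $G$ other than $v$. A tree whose only degree-one vertices are $v$ and a single other vertex is a path from $v$ to a leaf of $G$; branching would create a second leaf. So $K$ is a path. Its internal vertices lie on no cycle and have degree $2$ in $G$, so they are not boundary vertices, which gives (B3). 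Moreover $v$ has only this one branch, so $K$ forms a single tree component attached at $v$. It remains to check that $v$ is shared with exactly one cycle component, namely $C$, as (B2) requires. Any other cycle through $v$ would lie below $(C,v)$ and would have to sit inside $K$, contradicting acyclicity. Thus the tree component at $v$ is exactly $K$ and is a beard.

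\textbf{Main obstacle.} The delicate step is the bookkeeping with $v$ itself. First, I must show that the tree component containing $K$ in the decomposition is exactly $K$, and does not extend through $v$ into $H_v$. Second, I must show that a cycle through $v$ other than $C$ is correctly counted. I would handle both by noting that any tree edge at $v$ outside $K$ lies in $H_v$, so it belongs to the component containing $C\setminus\{v\}$, which contradicts the definition of branch.
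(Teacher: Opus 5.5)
Your lower bound $\ccaps(C,v)\ge1$ is sound. A block of $K$ with nothing below it is either a pendant edge, whose free end is a leaf of $G$ in $G_v$, or a cycle whose only boundary vertex in $G$ is its attachment vertex, and such a cycle is a twin-cycle. The direction ``$G_v$ a beard $\Rightarrow \ccaps(C,v)=1$'' is also fine.

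The converse, however, rests on the claim that any cycle in $K$ forces a leaf block of the block tree that is itself a cycle, and this is false. Let $K$ consist of a $4$-cycle $C'=v\,a\,b\,c$ through $v$ together with one pendant edge $bz$. The only leaf block is the bridge $bz$, so counting leaf blocks yields only $\ccaps(C,v)\ge1$. Your argument therefore cannot rule out $\ccaps(C,v)=1$, even though $G_v$ is not a beard. The actual value is $2$ only because $C'$, whose boundary vertices are exactly $v$ and $b$, is a unit-cycle with $\capa(C')=1$. If a cycle with two boundary vertices could be a zero-cycle, the lemma itself would fail. This is exactly the ``central claim'' you announced and then sidestepped.

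The missing step is a case analysis on a cycle $C'$ of $K$. Let $x\in A_{C'}$ be its vertex nearest to $v$ (possibly $x=v$).
\begin{itemize}
\item If $A_{C'}=\{x\}$, then $C'$ is a twin-cycle, as you showed.
\item If $A_{C'}=\{x,w\}$, the two runs of $C'$ have total length $|C'|-2$. The longer one therefore has length at least $\lceil|C'|/2\rceil-1=\tau_{C'}$, so $\capa(C')\ge1$. The nonempty part of $K$ hanging at $w$ contains a leaf block of its own, contributing at least $1$ more.
\item If $|A_{C'}|\ge3$, two vertices $w_1\ne w_2$ of $A_{C'}\setminus\{x\}$ carry disjoint parts of $K$. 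Each contains a leaf block, so the total is again at least $2$.
\end{itemize}
This is the paper's case split $k'=1$, $k'=2$, $k'\ge3$. There it is organised as an induction on $|V(G_v)|$, applied to the branches of $C'$ at its other boundary vertices, with $C'$ chosen nearest to $v$. Once this step is added, your remaining steps complete the proof: a tree with a single leaf of $G$ besides $v$ is a path starting at $v$, and (B2) holds because a second cycle through $v$ would lie in $K$.
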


\begin{proof}
We argue by induction on $|V(G_v)|$, the induction hypothesis being
that the statement holds for every branch with strictly fewer
vertices.
Note first that $G_v \neq \emptyset$, since $v$ is an articulation
point and $H_v$ is only one of the components of $G-v$.
 
\medskip\noindent
\textbf{Case 1: $\calC(C,v) = \emptyset$.}
Then $v$ lies on no cycle other than $C$, and $G_v$ contains no cycle
of $G$; hence $G_v$ contains no edge of $E_{\mathrm{cyc}}$ and
$T := G[V(G_v)\cup\{v\}]$ is exactly the tree component of $G$
containing $v$.
Every component of $G_v$ is a tree, and a vertex of it at maximum
distance from $v$ has degree one in $G$; hence
$\ccaps(C,v) = \ell_G(G_v) \geq 1$.
Equality holds if and only if $T$ contains exactly one leaf of $G$,
i.e.\ if and only if $T$ is a path with $v$ as one endpoint and a leaf
of $G$ as the other.
In that case (B1) holds, (B2) holds because $v$ lies on the single
cycle $C$, and (B3) holds because an internal vertex of $T$ lying in
$\calAC(G)$ would lie on a cycle contained in $G_v$, contradicting
$\calC(C,v) = \emptyset$.
Hence $\ccaps(C,v)=1$ if and only if $G_v$ is a beard.
 
\medskip\noindent
\textbf{Case 2: $\calC(C,v) \neq \emptyset$.}
Choose $C' \in \calC(C,v)$ at minimum distance from $v$ and let $x$ be
the vertex of $C'$ closest to $v$; thus $x = v$ when $v \in V(C')$.
In either case $x$ is an articulation point of $G$ lying on $C'$
(if $x \neq v$ it separates $C'$ from $v$; if $x = v$ it separates
$C'$ from $C$), so $x \in A_{C'}$.
Put $k' = |A_{C'}|$.
\begin{itemize}
  \item \textbf{$k'=1$:} then $A_{C'} = \{x\}$, so $\lambda_{C'} = |C'|-1 >
        \tau_{C'}$, i.e.\ $C'$ is a twin-cycle and
        $\ccaps(C,v) \geq \capa(C') = 2$.
  \item \textbf{$k'=2$:} write $A_{C'} = \{x,w\}$.
        The two $\emptyset$-free runs of $C'$ have lengths $a-1$ and
        $|C'|-a-1$ for some $a$, so
        $\lambda_{C'} \geq \lceil |C'|/2\rceil - 1 = \tau_{C'}$ and
        $\capa(C') \geq 1$.
        Let $G_w$ be the branch of $C'$ at $w$.
        Since $w \in V(C') \subseteq V(G_v)\cup\{v\}$ and $w \neq x$,
        we have $w \in V(G_v)$; moreover the component of $G-w$
        containing $C'\setminus\{w\}$ contains $x$ and hence $v$, so
        $v \notin V(G_w)$ and therefore
        $V(G_w)\cup\{w\} \subseteq V(G_v)$.
        Consequently $\ell_G(G_w) \le \ell_G(G_v)$ and
        $\calC(C',w) \subseteq \calC(C,v)$, while
        $C' \notin \calC(C',w)$.
        As $|V(G_w)| < |V(G_v)|$, the induction hypothesis gives
        $\ccaps(C',w) \geq 1$ and hence
        $\ccaps(C,v) \geq \capa(C') + \ccaps(C',w) \geq 2$.
  \item \textbf{$k'\geq 3$:} $C'$ has two boundary vertices $w_1 \neq w_2$,
        both different from $x$.
        As above $V(G_{w_i})\cup\{w_i\} \subseteq V(G_v)$, and
        $G_{w_1}, G_{w_2}$ are disjoint because they are branches of
        the same cycle component at distinct boundary vertices;
        since $w_1 \notin V(G_{w_2})$ and $w_2 \notin V(G_{w_1})$, the
        sets $\calC(C',w_1)$ and $\calC(C',w_2)$ are disjoint as well.
        By the induction hypothesis
        $\ccaps(C,v) \geq \ccaps(C',w_1) + \ccaps(C',w_2) \geq 2$.
\end{itemize}
In all three cases $\ccaps(C,v) \geq 2$, and $G_v$ is not a beard
because a beard contains no cycle and $v$ would then lie on the single
cycle $C$.
The two cases together prove both assertions.
\end{proof}

\begin{remark}
\label{rem:whybeard}
Lemma~\ref{lem:branch-value} says that deactivating a boundary vertex
costs at least one, and at least two unless its branch is a beard.
Since the gain on a cycle component never exceeds $2$ while one of its
boundary vertices stays active (Lemma~\ref{lem:cycle-bound}(iii)),
deactivating a non-beard branch or two or more branches can never pay
off; this is why only beard boundary vertices occur in
Definition~\ref{def:cycle-types}.
The remaining case, in which all boundary vertices are deactivated, is
handled quantitatively in Lemma~\ref{lem:exchange}.
\end{remark}

\subsection{The Canonical Set: the Lower Bound}
\label{sec:lower}

\begin{definition}[Canonical set $\Sstar$]
\label{def:canonical}
For every latent twin-cycle $C$ fix \emph{one} witness $v^*_C$
(Definition~\ref{def:cycle-types}), say the one of smallest index, and
let $z_C$ be the leaf of the beard $G_{v^*_C}$.
For every unit-cycle $C$ fix the $\emptyset$-free run $P_C$ with
$|P_C| = \lambda_C = \tau_C$ containing the vertex of smallest index
among all such runs, and let $u_C$ be the vertex of smallest index
in $P_C$.
For every twin-cycle (resp.\ latent twin-cycle) $C$ let $P_C$ be the
unique longest $\emptyset$-free (resp.\ $\{v^*_C\}$-free) run, which is
unique by Lemma~\ref{lem:unique-run}, and let $a_C,d_C$ be its two
endpoints.
We define
\begin{align*}
  \Sstar \;=\;&
  \Bigl(\bigl\{\, x \in V : x \text{ is a leaf of } G \,\bigr\}
  \setminus \{\, z_C : C \text{ latent twin-cycle}\,\}\Bigr)
  \\
  &\;\cup\;
  \Bigl( \; \bigcup_{C \text{ unit}} \{u_C\} \;\Bigr)
  \;\cup\;
  \Bigl( \; \bigcup_{C \text{ twin or latent twin}} \{a_C, d_C\} \;\Bigr).
\end{align*}

A boundary vertex $v$ is called \emph{$\Sstar$-active} if
$\Sstar \cap V(G_v) \neq \emptyset$, and \emph{$\Sstar$-inactive}
otherwise.
\end{definition}

\begin{lemma}
\label{lem:size}
$\Sstar \cap \calA(G) = \emptyset$ and $|\Sstar| = \capa(G)$.
Moreover, a boundary vertex $v$ is $\Sstar$-inactive if and only if
$v = v^*_C$ for some latent twin-cycle $C$.
\end{lemma}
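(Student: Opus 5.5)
The proof splits into three parts, taken in order: (a) $\Sstar$ avoids articulation points, (b) the elements listed in Definition~\ref{def:canonical} are pairwise distinct and the removed leaves $z_C$ are distinct, and (c) the characterisation of inactive boundary vertices.

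\textbf{Part (a).} A leaf of $G$ is never an articulation point. For a unit-cycle, $u_C$ lies in an $\emptyset$-free run $P_C$, so $u_C \notin A_C$. Since $G$ is not a simple cycle, $A_C\neq\emptyset$, so such runs are well defined. A vertex of a cycle component that is an articulation point must lie in $\calAC(G)$, hence in $A_C$. For twin-cycles, the endpoints $a_C,d_C$ of an $\emptyset$-free run therefore lie outside $\calA(G)$. For latent twin-cycles, the second claim of Lemma~\ref{lem:unique-run} says exactly that both endpoints are non-articulation points.

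\textbf{Part (b).} We show the union is disjoint and count it.
\begin{enumerate}
\item A non-articulation vertex of a cycle has degree $2$ in $G$, so it is not a leaf and lies on only one cycle. Hence the cycle contributions are disjoint from the leaves and from each other.
\item Within a twin or latent twin-cycle we need $a_C\neq d_C$. The run has length $\lambda_C^X>\tau_C\ge 1$, so $\lambda_C^X\ge 2$.
\item The beard leaves $z_C$ for distinct latent twin-cycles are distinct. By (B2) a beard is attached to exactly one cycle component, so $G_{v^*_C}$ determines $C$, and $z_C$ determines its beard. Since $z_C$ is a leaf, it is indeed removed from the leaf set.
\end{enumerate}
Counting then gives
\[
|\Sstar| = \ell(G) - n_{\mathrm{lt}} + n_1 + 2n_2 + 2n_{\mathrm{lt}},
\]
which equals $\capa(G)$ by \eqref{eq:capG}.

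\textbf{Part (c).} For the direction ``if'', the branch $G_{v^*_C}$ is a beard. By Observation~\ref{obs:tree}-type reasoning, its only possible $\Sstar$ vertices are leaves of $G$, and its unique leaf $z_C$ was removed. Every other vertex of the beard is internal or the boundary vertex, so none of them is in $\Sstar$; thus $v^*_C$ is inactive.

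For ``only if'', let $v$ be a boundary vertex of $C$ that is not a witness. We must find an $\Sstar$ vertex in $G_v$, and I would proceed by induction on $|V(G_v)|$, mirroring Lemma~\ref{lem:branch-value}.
\begin{itemize}
\item \textbf{Case 1: $\calC(C,v)=\emptyset$.} Then $G_v$ is a tree part with at least one leaf. If $G_v$ is not a beard it has either at least two leaves or leaves that are not removed. If it is a beard, its leaf is removed only when $v=v^*_{C''}$ for some latent twin-cycle $C''$, and by (B2) that forces $C''=C$ and $v=v^*_C$, which was excluded. So a leaf in $\Sstar$ remains.
\item \textbf{Case 2: $\calC(C,v)\ne\emptyset$.} Take $C'$ and $x$ as in Lemma~\ref{lem:branch-value}.
  \begin{itemize}
  \item If $C'$ is twin, unit or latent twin, it contributes a non-articulation vertex of $C'$ other than $x$, and that vertex lies in $G_v$.
  \item If $C'$ is a zero-cycle, then $|A_{C'}|\ge 3$ (the cases $k'=1,2$ force $\capa\ge1$). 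We choose a boundary vertex $w\ne x$ of $C'$ with $G_w\subseteq G_v$ that is not a witness of $C'$; then $w$ is active by induction. No witness exists for a zero-cycle, so any such $w$ works.
  \end{itemize}
\end{itemize}

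The main obstacle is bookkeeping subtleties. If $v$ lies on several cycles, the vertex $x=v$ and the branches at $v$ overlap, and $G_v$ contains all of them. I must also confirm that a vertex chosen from $C'$ is genuinely in $G_v$ and not equal to $v$; this holds because it is a non-articulation vertex, hence different from $x$.
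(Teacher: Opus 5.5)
Your parts (a) and (b) and the ``if'' direction match the paper; you also make explicit $a_C\neq d_C$ via $\lambda_C^{X}>\tau_C\ge1$. For the converse you take a genuinely different route.

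\textbf{Your route.} You run a fresh induction on $|V(G_v)|$ that replays the case analysis of Lemma~\ref{lem:branch-value}. You take the nearest cycle $C'$ below $(C,v)$. If $\capa(C')>0$, a selected vertex of $C'$ lies in $G_v$. If $C'$ is a zero-cycle, you recurse into some $G_w$ with $w\in A_{C'}\setminus\{x\}$.

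\textbf{The paper's route.} The paper uses Lemma~\ref{lem:branch-value} as a black box. If $v$ is inactive, no positive-capacity cycle lies below $(C,v)$, so $\ccaps(C,v)=\ell_G(G_v)\ge1$. This gives a leaf $x\in V(G_v)\setminus\Sstar$, hence $x=z_{C'}$ for some latent twin-cycle $C'$. A connectivity argument then forces $v^*_{C'}=v$: the beard minus $v^*_{C'}$ is connected and meets $G_v$, and $G_v$ has no neighbour in $H_v$. Finally $C'=C$.

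\textbf{Comparison.} The paper's version needs no second induction. Yours is constructive, since it exhibits an $\Sstar$-vertex in every active branch, but it duplicates the structure of the earlier proof.

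\textbf{Points to tighten.}
\begin{itemize}
\item Your hypothesis should be $v\neq v^*_C$, not ``$v$ is not a witness''. A latent twin-cycle may have several witnesses, and the non-designated ones are active. Your Case~1 in fact already uses $v\neq v^*_C$, so only the wording is off.
\item Case~1 needs the step the paper spells out. A removed leaf $z_{C''}\in V(G_v)$ lies in the tree component $G[V(G_v)\cup\{v\}]$. That component must therefore be the beard at $v^*_{C''}$. A beard has a single boundary vertex and a single leaf, so $v=v^*_{C''}$. It also follows that a non-beard $G_v$ contains no removed leaf at all; in Case~1 such a $G_v$ has at least two leaves by Lemma~\ref{lem:branch-value}. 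This should replace your phrase ``at least two leaves or leaves that are not removed''.
\item In Case~2, what you need is that the selected vertex of $C'$ differs from $v$, not from $x$. This is automatic because $v\in\calA(G)$.
\end{itemize}
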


\begin{proof}
Leaves of $G$ are not articulation points; each $u_C$ lies on an
$\emptyset$-free run and hence outside $A_C$; and $a_C,d_C$ are not
articulation points by Lemma~\ref{lem:unique-run}.
Hence $\Sstar\cap\calA(G)=\emptyset$.
The three sets forming $\Sstar$ are pairwise disjoint: $u_C,a_C,d_C$
lie on a cycle and are not articulation points, so they have degree
two in $G$ and are not leaves, and each of them lies on exactly one
cycle component, so no vertex is selected by two cycle components.
For the cardinality, the map $C \mapsto z_C$ is injective, since
$z_C$ determines the beard containing it, which by (B2) is attached to
exactly one cycle component, namely $C$.
Therefore exactly $n_{\mathrm{lt}}$ leaves are removed and
\[
  |\Sstar| = \bigl(\ell(G)-n_{\mathrm{lt}}\bigr)
             + n_1 + 2n_2 + 2n_{\mathrm{lt}}
           = 2n_2+n_{\mathrm{lt}}+n_1+\ell(G) = \capa(G)
\]
by~\eqref{eq:capG}.

For the last assertion, if $v = v^*_C$ then $G_v$ is a beard whose
only non-articulation-point vertex is $z_C \notin \Sstar$, so
$\Sstar\cap V(G_v)=\emptyset$.
Conversely, let $v$ be a boundary vertex of a cycle component $C$ with
$\Sstar\cap V(G_v)=\emptyset$.
We first observe that no cycle component $C'$ of positive capacity
lies below $(C,v)$: otherwise the vertices selected on $C'$
($u_{C'}$, or $a_{C'}$ and $d_{C'}$) belong to $\Sstar$ and are not
articulation points, hence differ from $v$ and lie in $V(G_v)$.
By Lemma~\ref{lem:branch-value}, $\ccaps(C,v)\ge1$, so $G_v$ therefore
contains a leaf $x$ of $G$; as $x \notin \Sstar$, we have $x = z_{C'}$
for some latent twin-cycle $C'$, and $B := G_{v^*_{C'}}$ is the beard
containing $x$.
Since $\capa(C')=1>0$, the cycle $C'$ does not lie below $(C,v)$, \ie
$V(C')\not\subseteq V(G_v)\cup\{v\}$; as $C'\setminus\{v\}$ is
connected, this forces $V(C')\cap V(G_v)=\emptyset$ and in particular
$v^*_{C'}\notin V(G_v)$.
Now $B$ is a path with endpoint $v^*_{C'}$, so $B\setminus
\{v^*_{C'}\}$ is connected and contains $x \in V(G_v)$, hence is
contained in $V(G_v)$ by~\eqref{eq:branch-partition}; since
$v^*_{C'}$ is adjacent to it and no vertex of $G_v$ has a neighbour in
$H_v$, we conclude $v^*_{C'} = v$.
Finally $v$ lies on no cycle other than $C$ --- a second cycle through
$v$ would lie below $(C,v)$ and have positive capacity by
Lemma~\ref{lem:branch-value} (case $k'=1$ or $k'=2$ of its proof) ---
so $C' = C$ and $v = v^*_C$.
\end{proof}

\begin{lemma}[Local traversability]
\label{lem:local}
Let $C$ be a cycle component of $G$ and let
$X = \Sstar\cap V(C)$.
\begin{enumerate}
  \item[(T1)] For any two $\Sstar$-active boundary vertices
        $w,w'$ of $C$, some shortest $w$--$w'$ arc of $C$ avoids $X$.
  \item[(T2)] For every $x \in X$ and every $\Sstar$-active boundary
        vertex $w$ of $C$, some shortest $x$--$w$ arc of $C$ avoids
        $X \setminus \{x\}$ in its interior.
  \item[(T3)] If $|X| = 2$, the two vertices of $X$ are mutually
        visible with respect to $\Sstar$.
\end{enumerate}
\end{lemma}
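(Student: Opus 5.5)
The plan is a case analysis on the type of $C$, since $X=\Sstar\cap V(C)$ is completely determined by it. If $C$ is a zero-cycle, $X=\emptyset$ and (T2), (T3) are vacuous. If $C$ is a unit-cycle, $X=\{u_C\}$ and (T3) is vacuous. If $C$ is a twin-cycle or a latent twin-cycle, $X=\{a_C,d_C\}$, the two endpoints of the longest run $P_C$.

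By Lemma~\ref{lem:size}, the $\Sstar$-active boundary vertices of $C$ are all of $A_C$ in the first three cases. In the latent case they are $A_C\setminus\{v^*_C\}$: a different latent twin-cycle's witness cannot lie on $C$, since by (B2) its beard attaches to exactly one cycle, and the branch at a boundary vertex of $C$ shared with another cycle contains that cycle. Consequently every active boundary vertex lies in $V(C)\setminus P_C$ (for latent cycles $v^*_C$ is the only boundary vertex inside $P_C$). So in all cases the active boundary vertices lie on the complementary arc $Q$ of $C$ determined by $P_C$, which has $|C|-|P_C|$ vertices.

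\textbf{Twin and latent cycles.} Since $|P_C|\ge\tau_C+1$, the arc $Q$ has $|Q|\le |C|-\tau_C-1\le\tau_C+1$ vertices.
\begin{itemize}
\item[(T1)] Take $w,w'\in Q$. The subarc of $Q$ between them has length at most $|Q|-1\le\tau_C\le|C|/2$, so it is a shortest arc. Its vertices lie in $Q$, which is disjoint from $X\subseteq P_C$.
\item[(T3)] The path from $a_C$ to $d_C$ through $Q$ has length $|Q|+1=|C|-|P_C|+1$. The path through $P_C$ has length $|P_C|-1$. Using $|P_C|\ge\tau_C+1$ and $|C|\le 2\tau_C+2$, we get $|Q|+1\le|P_C|-1+1$; one checks this is at most the length through $P_C$ exactly because $2|P_C|\ge 2\tau_C+2\ge|C|$. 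Its interior is $Q$, which contains no vertex of $\Sstar$: the vertices of $Q$ are boundary vertices or degree-two vertices of $C$ not in $\Sstar$. So $a_C,d_C$ are mutually visible. Note also that a shortest path inside $C$ is a shortest path in $G$ by Observation~\ref{obs:isometric}.
\item[(T2)] Take $x=a_C$ and $w\in Q$. The path from $a_C$ to $w$ going into $Q$ avoids $d_C$ in its interior. Its length is at most the distance from $a_C$ across $Q$, which is $\le|Q|\le\tau_C+1$. When that bound is not $\le|C|/2$, I use $|C|\ge 2\tau_C+1$ to show the alternative arc through $P_C$ is at least as long, since it contains all of $P_C$ plus part of $Q$.
\end{itemize}

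\textbf{Unit-cycles and zero-cycles.} Here I argue directly. Since $X\subseteq$ a single run of length $\le\tau_C$ or is empty, I take the shorter $w$--$w'$ arc. If it contains $u_C$, then it contains the whole run $P_C$ minus possibly endpoints, as $u_C$ is interior to neither boundary. In that case I bound: the other arc then has length $\le |C|-(\tau_C+1)$, which I compare to show it is also shortest (possible as $|C|\le 2\tau_C+2$). For (T2) with $x=u_C$, both arcs from $u_C$ to $w$ trivially avoid $X\setminus\{x\}=\emptyset$.

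The main obstacle I expect is the tie/length bookkeeping in (T1) for unit-cycles: showing that when the shorter arc between two active boundary vertices passes through $P_C$, the opposite arc is still a shortest one. I would handle it via Lemma~\ref{lem:arc}(ii) applied to the run $P_C$ with $k=\tau_C$.
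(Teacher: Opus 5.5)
\textbf{There is a genuine error in your proof of (T3).} Your handling of (T1) and (T2) follows the paper's case analysis and is sound. In (T2) the comparison goes through directly: the arc through $P_C$ has length at least $|P_C|\ge\tau_C+1\ge|Q|$, which rests on $|C|\le 2\tau_C+2$ rather than on $|C|\ge 2\tau_C+1$.

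In (T3) you try to show that the $a_C$--$d_C$ arc through $Q$, of length $|Q|+1=|C|-|P_C|+1$, is a shortest one. That is false in general. From $|P_C|\ge\tau_C+1$ and $|C|\le 2\tau_C+2$ you only obtain $2|P_C|\ge|C|$. The claim $|Q|+1\le|P_C|-1$ would need $2|P_C|\ge|C|+2$, and even your intermediate claim $|Q|+1\le|P_C|$ would need $2|P_C|\ge|C|+1$.

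A concrete counterexample: take the $6$-cycle $w_0w_1\cdots w_5$ with pendant leaves at $w_0$ and $w_4$. Then $\tau_C=2$ and the runs are $\{w_1,w_2,w_3\}$ and $\{w_5\}$, so $C$ is a twin-cycle with $P_C=\{w_1,w_2,w_3\}$ and $X=\{w_1,w_3\}$. The arc through $Q$ is $w_1w_0w_5w_4w_3$, of length $4$, while $w_1w_2w_3$ has length $2$. So the path you exhibit is not a shortest path.

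\textbf{The fix is the paper's argument, and it needs no length comparison.} Since $\Sstar\cap V(C)=X=\{a_C,d_C\}$, and no vertex of $\Sstar$ outside $C$ lies on an arc of $C$, the interior $P_C\setminus\{a_C,d_C\}$ of the arc through $P_C$ is just as $\Sstar$-free as the interior $Q$ of the other arc. Hence whichever of the two arcs is shorter is, by Observation~\ref{obs:isometric}, a shortest $a_C$--$d_C$ path of $G$ witnessing mutual visibility. With this one-line replacement your proof is correct and essentially coincides with the paper's.
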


\begin{proof}
By Definition~\ref{def:canonical}, $|X| \le 2$, and $|X|=2$ occurs
exactly when $C$ is a twin- or a latent twin-cycle, in which case
$X=\{a_C,d_C\}$ consists of the two endpoints of $P_C$.

(T3) is immediate: no vertex of $\Sstar$ outside $C$ lies on an arc of
$C$, so both arcs joining the two vertices of $X$ have interiors
disjoint from $\Sstar$, and the shorter of them is a witness.

If $|X| \le 1$, then (T2) is vacuous as well, because
$X\setminus\{x\}=\emptyset$ means that \emph{every} $x$--$w$ arc
qualifies.
If $C$ is a zero-cycle, (T1) is vacuous too since $X=\emptyset$.

Let $C$ be a unit-cycle, so $X = \{u_C\}$ with $u_C \in P_C$ and
$|P_C| = \tau_C$.
For (T1), boundary vertices are articulation points and hence lie
outside $P_C$, so $P_C$ is contained in one of the two $w$--$w'$ arcs;
that arc has length at least $|P_C|+1 = \tau_C+1 \geq |C|/2$
(using $|C| \leq 2\tau_C+2$), so the other arc, which avoids
$P_C \ni u_C$, is a shortest one.

Finally let $C$ be a twin-cycle or a latent twin-cycle, put
$X_C = \emptyset$ resp.\ $X_C = \{v^*_C\}$, and let
$\lambda = \lambda_C^{X_C} > \tau_C$ be the length of $P_C$, with
$X = \{a_C,d_C\}$ its endpoints.
By Lemma~\ref{lem:size}, every $\Sstar$-active boundary vertex of
$C$ lies outside $P_C$ (the only boundary vertex that may lie on $P_C$
is $v^*_C$, which is $\Sstar$-inactive).
Let $p$ (resp.\ $q$) be the vertex adjacent to $a_C$ (resp.\ $d_C$)
outside $P_C$.
Note that $\lambda \geq \tau_C+1 \geq |C|/2$.

For (T1), the arc containing $P_C$ has length at least
$\lambda+1 > |C| - \lambda -1$, hence the complementary arc, which
avoids $P_C \supseteq X$, is strictly shorter.

For (T2) with $x = a_C$, the two $a_C$--$w$ arcs have lengths
$1+\delta$ and $\lambda + \bigl(|C|-\lambda-1-\delta\bigr) =
|C|-1-\delta$, where $\delta = d_{C\setminus P_C}(p,w)$.
Since $\delta \le |C|-\lambda-1 \le |C|/2 - 1$, we get
$1+\delta \le |C|-1-\delta$, so the first arc is a shortest one; its
interior lies in $C \setminus P_C$ and therefore avoids $d_C$.
The case $x = d_C$ is symmetric.
\end{proof}

\begin{proposition}
\label{prop:construction}
$\Sstar$ is an \MVS\ of $G$, and consequently
$\mu(G) \geq \capa(G) = 2n_2+n_{\mathrm{lt}}+n_1+\ell(G)$.
\end{proposition}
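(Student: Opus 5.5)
Since $|\Sstar|=\capa(G)$ by Lemma~\ref{lem:size}, it suffices to show that $\Sstar$ is an \MVS. Fix $s,t\in\Sstar$. I will build a $G$-shortest $s$--$t$ path whose internal vertices avoid $\Sstar$, using the block structure of the cactus. Every $s$--$t$ path passes through the same sequence of blocks. The bridges and the articulation points along the way are forced. Inside each cycle component $C$ on the route, the path must use one of the two arcs of $C$ joining its entry vertex to its exit vertex. Hence a path is shortest if and only if, in every such cycle component, it uses a shortest arc (Observation~\ref{obs:isometric}). So I will build the path block by block and check each block locally.

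First, internal vertices lying on bridges or tree components are articulation points of $G$ (Observation~\ref{obs:tree}), and $\Sstar\cap\calA(G)=\emptyset$, so these never cause trouble. It remains to handle each cycle component $C$ traversed by the route. Let $y,y'$ be the vertices where the route enters and leaves $C$; each of them is either an endpoint in $\{s,t\}$ lying on $C$, or a boundary vertex of $C$.

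The key claim is that a boundary vertex $w$ of $C$ used as an entry or exit point is $\Sstar$-active. If the route leaves $C$ at $w$ toward $t$, then $t\in V(G_w)$ or $t=w$. Since $t\notin\calA(G)$, we have $t\ne w$, so $t\in V(G_w)$. Hence $\Sstar\cap V(G_w)\neq\emptyset$, and $w$ is $\Sstar$-active; the case of entering toward $s$ is symmetric.

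With this claim, the cases are covered by Lemma~\ref{lem:local}:
\begin{itemize}
\item If both $y$ and $y'$ are boundary vertices, (T1) gives a shortest arc avoiding $X=\Sstar\cap V(C)$.
\item If exactly one of them is $s$ (or $t$), (T2) gives a shortest arc whose interior avoids $X\setminus\{s\}$.
\item If both endpoints lie on $C$, then $\{s,t\}=X$ and (T3) applies.
\end{itemize}
Vertices of $\Sstar$ outside $C$ never lie on an arc of $C$, since such vertices are not on $C$. So each local arc has interior disjoint from $\Sstar$.

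The main obstacle is making the block-route argument rigorous. I must show that the concatenation of locally shortest arcs is globally shortest, by arguing that any $s$--$t$ path visits the same cut vertices in the same order (they are articulation points separating $s$ from $t$). I must also show that endpoints lying on a cycle are non-articulation points, so they belong to exactly one block. The claim that entry and exit vertices are active, via Lemma~\ref{lem:size}, is the essential new ingredient. It rules out routing through the inactive witness $v^*_C$, which lies inside $P_C$.
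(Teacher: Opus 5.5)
Your proposal is correct and follows essentially the same route as the paper. Both arguments fix the block sequence forced by the cut vertices and observe that every boundary vertex where the route enters or leaves a cycle is $\Sstar$-active because $s$ or $t$ lies in its branch; they then apply (T1)--(T3) of Lemma~\ref{lem:local} cycle by cycle and dispose of tree components via Observation~\ref{obs:tree}. The paper only asserts the block-route concatenation fact that you flag as the main remaining obstacle.
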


\begin{proof}
Let $s,t \in \Sstar$, $s \neq t$.
Since every boundary vertex is a cut vertex and no vertex of $\Sstar$
is an articulation point (Lemma~\ref{lem:size}), the sequence of
components of the cycle--tree decomposition visited by an $s$--$t$ path
is the same for every such path, and every $s$--$t$ path is the
concatenation of sub-paths joining consecutive boundary vertices inside
single components.
Consequently a concatenation of shortest sub-paths is a shortest
$s$--$t$ path, and it remains to choose each sub-path $\Sstar$-free,
which is what Lemma~\ref{lem:local} provides.
Every boundary vertex $v$ visited in this way is $\Sstar$-active,
because $s$ or $t$ lies in $G_v$.
\begin{itemize}
  \item Inside a tree component the sub-path is unique, and all its
        internal vertices are articulation points and hence not in
        $\Sstar$ (Observation~\ref{obs:tree}).
  \item Inside an intermediate cycle component $C$ the sub-path joins
        two $\Sstar$-active boundary vertices of $C$; apply (T1).
  \item Inside the cycle component containing $s$ (or $t$), the
        sub-path joins $s$ (or $t$) to an $\Sstar$-active boundary
        vertex; apply (T2).
  \item If $s$ and $t$ lie in the same component, apply (T3) for a
        cycle component; for a tree component, both $s$ and $t$ are
        leaves of $G$ by Observation~\ref{obs:tree} and the internal
        vertices of the connecting path avoid $\Sstar$ by the same
        observation.
\end{itemize}
Hence $s$ and $t$ are mutually visible with respect to $\Sstar$, and
$\mu(G) \geq |\Sstar| = \capa(G)$ by Lemma~\ref{lem:size}
and~\eqref{eq:capG}.
\end{proof}

\subsection{The Upper Bound and the Exact Formula}
\label{sec:upper}

Throughout this subsection $S$ denotes an arbitrary \MVS\ of $G$ with
$S\cap\calA(G)=\emptyset$.
For a cycle component $C$ we call $v \in A_C$ \emph{$S$-active} if
$S\cap V(G_v)\neq\emptyset$ and \emph{$S$-inactive} otherwise, and we
write $I_C \subseteq A_C$ for the set of $S$-inactive boundary
vertices of $C$; thus $\lambda_C^{I_C}$ is the length of the longest
run of $C$ free of $S$-active boundary vertices.

\begin{lemma}
\label{lem:cycle-bound}
Let $C$ be a cycle component of $G$. Then $|S\cap V(C)| \leq 3$, and
\emph{(i)} $|S\cap V(C)| = 0$ if $\lambda_C^{I_C} < \tau_C$;
\emph{(ii)} $|S\cap V(C)| \leq 1$ if $\lambda_C^{I_C} = \tau_C$;
\emph{(iii)} $|S\cap V(C)| \leq 2$ if $\lambda_C^{I_C} > \tau_C$ and
$I_C \neq A_C$;
\emph{(iv)} if $|S\cap V(C)| = 3$ then $I_C = A_C$, \ie every boundary
vertex of $C$ is $S$-inactive.
\end{lemma}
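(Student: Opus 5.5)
The first bound, $|S\cap V(C)|\le 3$, is immediate. By Observation~\ref{obs:isometric}, $C$ is isometric in $G$ and every shortest path of $G$ between two vertices of $C$ is an arc of $C$. Hence $S\cap V(C)$ is an \MVS\ of the cycle $C$ itself, and Lemma~\ref{lem:cycle-mvn} gives at most three vertices.

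The four refined claims all rest on one separation mechanism. Let $w$ be an $S$-active boundary vertex of $C$ and pick $s_w\in S\cap V(G_v)$ with $v=w$. Every path from $s_w$ to a vertex $y\in S\cap V(C)$ passes through $w$, and $d_G(s_w,y)=d_G(s_w,w)+d_C(w,y)$. So if $s_w$ and $y$ are mutually visible, some shortest $w$--$y$ arc of $C$ has no vertex of $S$ in its interior. Moreover $w\notin S$, because $w\in\calA(G)$.

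Fix $y\in S\cap V(C)$. Since $S\cap\calA(G)=\emptyset$, the vertex $y$ lies in a maximal run $P$ of $C$ avoiding the active set $A_C\setminus I_C$. Let $p,q$ be the active boundary vertices flanking $P$ and $k=|P|\le\lambda_C^{I_C}$. There are three cases.

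For (i), let $k<\tau_C$ and let $y$ sit at distance $j$ from $p$ and $k+1-j$ from $q$. Then $\min(j,k+1-j)\le\tau_C$, so the last clause of Lemma~\ref{lem:arc} applies. Both $y$--$p$ and $y$--$q$ shortest paths run inside $P$; in fact, when $k<\tau_C$, both $j$ and $k+1-j$ are at most $\tau_C$, so both are unique. Now compare $y$ with another $y'\in S\cap P$, or with a $y'$ in another run. Take $y'\in S\cap V(C)$ with $y'\ne y$, and suppose first $y'\in P$. Then $y$ and $y'$ are ordered along $P$, and one of them, say $y'$, lies strictly between $y$ and $q$ or between $y$ and $p$. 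That vertex blocks the unique shortest path from $y$ to the flanking boundary vertex on that side, and hence blocks visibility to the branch vertex $s_p$ or $s_q$. So $|S\cap P|\le1$ whenever $k<\tau_C$. To get $0$ I compare $y$ with $S$-vertices outside $C$. All boundary vertices of $C$ are articulation points, and $G$ is not a cycle, so $A_C\ne\emptyset$. I had expected the case $I_C=A_C$ (no active vertex) to be vacuous, since then $\lambda_C^{I_C}=|C|>\tau_C$, but it is not. In (i) some active $p$ exists. The plan is to consider the vertex $s_p$ behind $p$ and the vertex $s_q$ behind $q$, and show that $y$ cannot see both through arcs avoiding each other. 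However, the shortest path from $s_p$ to $y$ through $p$ is simply inside $P$, which is fine. I must therefore strengthen the use of visibility and look at $s_p$ versus $s_q$, or $s_p$ versus $s_{p'}$ where $p'$ is an active vertex. The shortest $p$--$q$ path is the arc through $P$, and by Lemma~\ref{lem:arc}(i) it is unique since $k<\tau_C$. If $y\in P$, then $y$ lies on that unique path and blocks $s_p$ from $s_q$. When $p=q$, so that there is a single active vertex, the run has length $|C|-1>\tau_C$ and this case does not arise. This gives (i).

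For (ii), with $k\le\tau_C$ and two distinct $y,y'\in S\cap P$, or $y,y'$ in different runs of length at most $\tau_C$, I would use the two runs. If they lie in the same run, the sub-path argument above blocks one of them. If $y\in P$ and $y'\in P'$ with $P\ne P'$, then some active vertex $w$ separates them on one side. The $w$--$y$ and $w$--$y'$ distances inside their runs are at most $\tau_C$, so those paths are unique, and $y,y'$ must see each other through $w$ or through the other gap. I intend to show that one of the two arcs of length at most $\tau_C$ is forced, which contradicts the visibility of $y$ with $s_w$. The bookkeeping here is the main obstacle: equality $k=\tau_C$ makes some shortest paths non-unique, and I will handle it with Lemma~\ref{lem:arc}(ii) case by case.

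For (iii), if some active $w$ exists and $|S\cap V(C)|=3$, then $w$ lies in one of the three gaps between consecutive $S$-vertices on $C$. For three mutually visible vertices on a cycle, each pair must see each other along its own gap arc, and each such gap must be a shortest arc. The $S$-vertex opposite the gap containing $w$ must reach $w$ through a shortest arc that avoids the other two $S$-vertices. Both arcs from it to $w$ pass through another $S$-vertex in its interior, however, because $w$ is strictly inside a gap not incident to it. This is a contradiction, which proves (iii) and at the same time (iv).
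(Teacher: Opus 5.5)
Your proof of the bound $|S\cap V(C)|\le 3$ matches the paper's, and so does your proof of (iii)/(iv) (the vertex of $S$ opposite the gap containing an active $w$ is cut off from $s_w$). Your proof of (i) also matches, once you reach the point that the unique shortest $p$--$q$ arc through $P$ contains $y$ and so separates $s_p$ from $s_q$. The detour about $I_C=A_C$ not being vacuous is confused but harmless: under the hypothesis of (i) that case cannot occur.

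The genuine gap is in (ii), in the case of two vertices $y\in P$, $y'\in P'$ of $S\cap V(C)$ on distinct runs. There you only announce an intention, and the announced mechanism, a contradiction with the visibility of $y$ and some $s_w$, cannot succeed. Consider the critical configuration: $|C|=2\tau_C+2$, exactly two active vertices $p,q$, and $P,P'$ the two runs of length $\tau_C$ between them. Here $y$ reaches $p$ and $q$ along unique shortest sub-paths of $P$ containing no other vertex of $S$, and $y'$ does likewise inside $P'$. The two $y$--$y'$ arcs contain no further vertex of $S$ either. So nothing on $C$ obstructs $y$ from $s_p$, from $s_q$, or from $y'$.

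The missing idea is a counting step together with the right pair of witnesses.
\begin{enumerate}
\item Your argument for (i) is local to a run. It shows that every vertex of $S\cap V(C)$ lies on a run of length at least $\tau_C$, provided some boundary vertex is active, which holds under (ii) since otherwise $\lambda_C^{I_C}=|C|>\tau_C$. Under $\lambda_C^{I_C}=\tau_C$ both $P$ and $P'$ therefore have length exactly $\tau_C$.
\item Two distinct maximal runs are separated on both sides by active vertices, so $|C|\ge |P|+|P'|+2=2\tau_C+2$. Together with $|C|\le 2\tau_C+2$ this forces equality, exactly two active vertices $p\neq q$, and $C$ consisting of $p,P,q,P'$.
\item Both $p$--$q$ arcs then have length $\tau_C+1=|C|/2$, so both are shortest, and each contains a vertex of $S$ in its interior. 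Hence $s_p\in S\cap V(G_p)$ and $s_q\in S\cap V(G_q)$ are not mutually visible.
\end{enumerate}
This places all of $S\cap V(C)$ on a single run of length $\tau_C$. Your same-run sub-path argument then finishes (ii). It does survive $k=\tau_C$, since the distances $j$ and $k+1-j$ from an interior vertex to the two ends are then both at most $\tau_C$, but you should say so explicitly.
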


\begin{proof}
By Observation~\ref{obs:isometric}, $S \cap V(C)$ is an \MVS\ of the
cycle $C$, so $|S\cap V(C)|\leq 3$ by Lemma~\ref{lem:cycle-mvn}.

If $I_C = A_C$ then $\lambda_C^{I_C} = |C| > \tau_C$ and only
statement (iv) is applicable, so assume $A_C\setminus I_C\neq\emptyset$
for statements (i)--(iii).
Let $u \in S\cap V(C)$; since $u\notin\calA(G)$, the vertex $u$ lies on
some $I_C$-free run $P$, delimited by two $S$-active boundary vertices
$p$ and $q$ (with $p=q$ and $|P|=|C|-1>\tau_C$ if
$|A_C \setminus I_C| = 1$).
If $|P| < \tau_C$, then by Lemma~\ref{lem:arc} the arc through $P$ is
the unique shortest $p$--$q$ path; picking $s_p \in S\cap V(G_p)$ and
$s_q \in S\cap V(G_q)$ (which exist because $p,q$ are $S$-active),
every shortest $s_p$--$s_q$ path passes through $p$, the arc, and $q$,
hence through $u \in S$ --- contradicting mutual visibility of
$s_p$ and $s_q$.
Thus every vertex of $S\cap V(C)$ lies on an $I_C$-free run of length
at least $\tau_C$; in particular (i) holds.

Assume now $\lambda_C^{I_C} = \tau_C$.
Suppose two vertices $u \neq u'$ of $S\cap V(C)$ lay on two distinct
$I_C$-free runs $P \neq P'$, necessarily both of length $\tau_C$.
Then $|C| \geq |P|+|P'|+|A_C\setminus I_C| \ge 2\tau_C+2$, and since
$|C| \le 2\tau_C+2$ we get $|C| = 2\tau_C+2$ and
$A_C \setminus I_C = \{p,q\}$ with $P,P'$ the two runs between $p$ and
$q$; the two $p$--$q$ arcs both have length $\tau_C+1$ and each
contains a vertex of $S$ in its interior, so $s_p$ and $s_q$ are not
mutually visible --- a contradiction.
Hence all vertices of $S\cap V(C)$ lie on one run $P$ with
$|P| = \tau_C$, delimited by $S$-active $p,q$.
If $u,u' \in P \cap S$ with $u$ closer to $p$, then by
Lemma~\ref{lem:arc} the sub-path of $P$ from $p$ to $u'$ is the unique
shortest $p$--$u'$ path and it passes through $u$, so $s_p$ and $u'$
are not mutually visible.
Thus $|S\cap V(C)|\leq 1$, proving (ii).

For (iv), suppose $|S\cap V(C)| = 3$, say $S\cap V(C)=\{u_1,u_2,u_3\}$,
which splits $C$ into three arcs.
Let $v \in A_C$; then $v \notin S$, so $v$ lies in the interior of one
of the arcs, say the one joining $u_1$ and $u_2$.
Any shortest path from a vertex of $G_v$ to $u_3$ leaves $G_v$ through
$v$ and then follows one of the two $v$--$u_3$ arcs of $C$, which pass
through $u_1$ resp.\ $u_2$; both are blocked.
Hence $S \cap V(G_v) = \emptyset$, \ie $v$ is $S$-inactive.
As $v \in A_C$ was arbitrary, $I_C = A_C$.
Statement (iii) is the contrapositive of (iv) combined with
$|S\cap V(C)|\le 3$.
\end{proof}

\begin{lemma}[Local exchange inequality]
\label{lem:exchange}
For every cycle component $C$ of $G$,
$
   |S \cap V(C)| \leq \capa(C) + \sum_{v \in I_C} \ccaps(C,v).
$
\end{lemma}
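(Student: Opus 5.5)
We prove the inequality by case analysis on $|I_C|$ and on the type of $C$, using Lemma~\ref{lem:cycle-bound} for the left-hand side and Lemma~\ref{lem:branch-value} for the right-hand side. Write $m=|S\cap V(C)|\le 3$ and $R=\sum_{v\in I_C}\ccaps(C,v)$. Every term of $R$ is at least $1$, and at least $2$ unless its branch is a beard. Hence $R\ge|I_C|$, and $R\ge 2$ as soon as $I_C$ contains a non-beard boundary vertex or at least two boundary vertices. Throughout, $\capa(C)\ge 0$.

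\textbf{Case $I_C=\emptyset$.} Then $\lambda_C^{I_C}=\lambda_C$. Lemma~\ref{lem:cycle-bound}(i)--(iii) gives $m\le 0,1,2$ according as $\lambda_C<,=,>\tau_C$. In the three subcases $C$ is respectively a zero- or latent twin-cycle, a unit-cycle, or a twin-cycle, so $m\le\capa(C)$ holds directly.

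\textbf{Case $I_C\neq\emptyset$, $I_C\neq A_C$.} Here $m\le 2$ by Lemma~\ref{lem:cycle-bound}(iii). If $R\ge 2$ we are done. Otherwise $R=1$, which forces $I_C=\{v\}$ with $G_v$ a beard. If $m\le 1$ there is nothing to prove. If $m=2$, Lemma~\ref{lem:cycle-bound}(i)--(ii) forces $\lambda_C^{v}>\tau_C$. Now either $\lambda_C\ge\tau_C$, so $\capa(C)\ge1$; or $\lambda_C<\tau_C$, in which case $v$ is a beard boundary vertex witnessing that $C$ is a latent twin-cycle, again giving $\capa(C)=1$. In both subcases $m=2\le\capa(C)+1$.

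\textbf{Case $I_C=A_C$.} Now $m\le 3$ and $R\ge|A_C|$, and $A_C\neq\emptyset$ because $G$ is not a cycle. If $|A_C|\ge 3$, or $R\ge 3$, we are done. The remaining configurations need $m=3$ against $\capa(C)+R\le 2$:
\begin{itemize}
\item If $|A_C|=1$, then $C$ is a twin-cycle (as in case $k'=1$ of Lemma~\ref{lem:branch-value}), so $\capa(C)=2$ and $R\ge1$, giving a right-hand side of at least $3$.
\item If $|A_C|=2$, then $\lambda_C\ge\tau_C$ (case $k'=2$), so $\capa(C)\ge1$ and $R\ge2$, again at least $3$.
\end{itemize}
So $m\le 3$ suffices in every subcase, and when $m\le 2$ the earlier bounds already apply.

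The main obstacle is the second case: one must check carefully that $m=2$ together with $R=1$ really produces a witness in the sense of Definition~\ref{def:cycle-types}. This hinges on $\lambda_C^{I_C}=\lambda_C^{v}$, which holds because $I_C=\{v\}$. One must also confirm that the sub-case $\lambda_C=\tau_C$ with $m=2$ causes no trouble; it does not, since then $C$ is a unit-cycle with $\capa(C)=1$ and $R=1$ covers the second vertex.
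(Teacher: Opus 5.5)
Your proof is correct and is essentially the paper's argument. It uses the same ingredients: Lemma~\ref{lem:cycle-bound} for the left-hand side, Lemma~\ref{lem:branch-value} for the branch terms, the $|A_C|=1$ and $|A_C|=2$ computations, and the key step that $|S\cap V(C)|=2$ with a single beard branch in $I_C$ forces either $\lambda_C\ge\tau_C$ or a witness. The only difference is that you organize the cases by the shape of $I_C$, whereas the paper organizes them by the value of $|S\cap V(C)|$; in your second case, $m\le 2$ follows from Lemma~\ref{lem:cycle-bound}(iv) (or (i)--(iii) jointly), since (iii) as stated carries the extra hypothesis $\lambda_C^{I_C}>\tau_C$.
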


\begin{proof}
If $I_C = \emptyset$, then $\lambda_C^{I_C} = \lambda_C$ and
Lemma~\ref{lem:cycle-bound} gives $|S\cap V(C)| \leq 2$ if $C$ is a
twin-cycle, $\leq 1$ if $C$ is a unit-cycle, and $=0$ otherwise, which
is at most $\capa(C)$ in all four cases (for a latent twin-cycle
$\lambda_C<\tau_C$, so the left-hand side is $0 \le 1 = \capa(C)$).
 
Let $I_C \neq \emptyset$ and put $\kappa = \sum_{v\in I_C}\ccaps(C,v)$.
By Lemma~\ref{lem:branch-value}, $\kappa \geq |I_C|$, and
$\kappa \geq |I_C| + 1$ if some branch $G_v$ with $v \in I_C$ is not a
beard.
\begin{itemize}
  \item If $|S\cap V(C)| = 3$, then $I_C = A_C$ by
        Lemma~\ref{lem:cycle-bound}(iv).
        If $|A_C| = 1$, then $\lambda_C = |C|-1 > \tau_C$, so
        $\capa(C) = 2$, while $\kappa \geq 1$ by
        Lemma~\ref{lem:branch-value}; hence $\capa(C)+\kappa \geq 3$.
        If $|A_C| = 2$, then $\lambda_C \geq \tau_C$ by the
        computation in the proof of Lemma~\ref{lem:branch-value},
        so $\capa(C)\geq 1$ and $\capa(C)+\kappa\geq 1+2 = 3$.
        If $|A_C|\geq 3$, then $\kappa \geq 3$.
  \item If $|S\cap V(C)| = 2$, then $\lambda_C^{I_C} > \tau_C$ by
        Lemma~\ref{lem:cycle-bound}(i)--(ii).
        If $|I_C| \geq 2$, then $\kappa \geq 2$.
        If $I_C = \{v\}$ and $G_v$ is not a beard, then
        $\kappa \geq 2$.
        If $I_C = \{v\}$ and $G_v$ is a beard, then $v$ is a beard
        boundary vertex with $\lambda_C^{v} > \tau_C$; hence $C$ is a
        twin-cycle, a unit-cycle, or --- if $\lambda_C<\tau_C$ --- a
        latent twin-cycle, so $\capa(C)\geq 1$ and
        $\capa(C)+\kappa \geq 1+1 = 2$.
  \item If $|S\cap V(C)| \leq 1$, the inequality holds because
        $\kappa \geq |I_C| \geq 1$.
\end{itemize}
\end{proof}

The last ingredient is a bookkeeping statement which guarantees that,
when the local inequality of Lemma~\ref{lem:exchange} is summed over
all cycle components, no leaf and no cycle component is counted twice.

\begin{lemma}
\label{lem:bookkeeping}
Let $S \neq \emptyset$ be an \MVS\ of $G$ with
$S\cap\calA(G)=\emptyset$ and let $\mathcal{W}$ be the set of
$\subseteq$-maximal members of
$\{G_v : v \text{ is } S\text{-inactive}\}$.
Then:
\begin{enumerate}
 \item[(i)] the family $\{G_v : v \text{ is } S\text{-inactive}\}$ is
       laminar; consequently the members of $\mathcal{W}$ are pairwise
       disjoint, and each $H\in\mathcal{W}$ equals $G_{v_H}$ for a
       unique boundary vertex $v_H$;
 \item[(ii)] calling a cycle component $C'$ \emph{below} $H$ when
       $V(C')\subseteq V(H)\cup\{v_H\}$, no cycle component is below
       two distinct members of $\mathcal{W}$, and every cycle
       component below some $H\in\mathcal{W}$ satisfies
       $S\cap V(C')=\emptyset$;
 \item[(iii)] if a cycle component $C$ is below no member of $\mathcal{W}$
       and $v \in I_C$, then $G_v \in \mathcal{W}$, and the map
       $(C,v)\mapsto G_v$ is injective on such pairs;
 \item[(iv)] for $H = G_v$ as in \emph{(iii)},
       $\ccaps(C,v) = \ell_G(H) +
        \sum_{C' \text{ below } H}\capa(C')$.
\end{enumerate}
\end{lemma}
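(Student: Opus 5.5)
The plan is to reduce everything to one structural fact: for an $S$-inactive boundary vertex $v$ of $C$, the vertex set $V(G_v)\cup\{v\}$ is a union of whole components of the decomposition hanging off $C$ at $v$. Since $S\neq\emptyset$ and $S\cap V(G_v)=\emptyset$, some $s\in S$ lies in $V(H_v)$.

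\textbf{Step (i): laminarity.} Take two inactive branches $G_v$ (of $C$) and $G_w$ (of $C'$) with $(C,v)\neq(C',w)$. If $v=w$, then $C\neq C'$, and the first step is to show this is impossible. In that case $C'\setminus\{v\}\subseteq G_v$ and $C\setminus\{v\}\subseteq G_w$, so $V=V(G_v)\cup V(G_w)\cup\{v\}$. Every vertex of $S$ would then lie in $G_v\cup G_w$, which contradicts $S\neq\emptyset$ and inactivity.

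If $v\neq w$, consider where $w$ sits. If $w\in V(H_v)$, then $G_v$ lies in a single component of $G-w$. That component is either part of $G_w$ (giving $G_v\subseteq G_w$) or it is the component containing $C'\setminus\{w\}$, in which case $G_v\cap G_w=\emptyset$. The case $w\in G_v$ is symmetric. So any two inactive branches are nested or disjoint. Maximal members of a laminar family are pairwise disjoint. The vertex $v_H$ is recovered as the unique vertex outside $H$ adjacent to $H$, and $C$ is recovered as the cycle through $v_H$ not contained in $H\cup\{v_H\}$; the same exclusion argument shows this cycle is unique. This gives the uniqueness claim in (i).

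\textbf{Step (ii).} Suppose a cycle $C'$ is below two disjoint members $H_1,H_2\in\mathcal{W}$. Then $V(C')\subseteq (H_1\cup\{v_1\})\cap(H_2\cup\{v_2\})$, and this intersection has size at most $2$ by disjointness, which is impossible for a cycle. For the second claim, note that $C'\setminus\{v_H\}\subseteq H$ and $H\cap S=\emptyset$. Since $v_H$ is an articulation point, $v_H\notin S$, so $S\cap V(C')=\emptyset$.

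\textbf{Step (iii).} Let $v\in I_C$ and suppose $G_v\subsetneq H=G_{v_H}\in\mathcal{W}$. Then $v\in H$, because $v$ is adjacent to $G_v$ and $H$ is a union of components of $G-v_H$. Also $C\setminus\{v\}$ is connected and adjacent to $v$. From this I expect to conclude that $C\setminus\{v\}\subseteq H\cup\{v_H\}$, so $C$ is below $H$, which contradicts the hypothesis. The delicate point is when $v_H\in C$; this case is handled by the uniqueness from (i). Injectivity of $(C,v)\mapsto G_v$ follows directly from (i).

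\textbf{Step (iv).} This is a definitional rewrite: the cycles below $(C,v)$ are exactly the cycles below $H$, with $v_H=v$. The main obstacle is the careful case analysis in (i) and (iii) when the relevant vertices lie on several cycles. The key tool there is the repeated use of $S\neq\emptyset$ to exclude the case where $G_v$ and $G_w$ cover the whole graph.
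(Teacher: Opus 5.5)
Your plan follows the paper's route: a case analysis for laminarity in which $S\neq\emptyset$ rules out covering configurations, $v_H$ recovered as the unique outside neighbour of $H$, the $|V(C')|\le 2$ count for (ii), and a ``strict inclusion forces $C$ below $H$'' argument for (iii). However, Step (i) has a genuine hole.

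When $v\neq w$ you split on $w\in V(H_v)$ versus $w\in V(G_v)$ and call the second case symmetric. It is not. The mirror image of ``$w\in V(H_v)$'' is ``$v\in V(H_w)$'', so the configuration $w\in V(G_v)$ and $v\in V(G_w)$ is covered by neither argument. That configuration is exactly where branches fail to be laminar. Take the triangles on $\{v,a_1,a_2\}$ and $\{w,b_1,b_2\}$, joined by a path $v\,x\,w$ with a pendant leaf $y$ at $x$. Then $G_v=\{x,y,w,b_1,b_2\}$ and $G_w=\{x,y,v,a_1,a_2\}$ meet in $\{x,y\}$, and neither contains the other. So this case cannot follow from the argument you gave for $w\in V(H_v)$; it must be excluded using $S\neq\emptyset$, as you did for $v=w$.

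The fix is short. The set $H_v\cup\{v\}$ is connected, avoids $w\in V(G_v)$, and contains $v\in V(G_w)$. Hence it lies in the component of $G-w$ through $v$, which is part of $G_w$. Therefore $V(G_v)\cup V(G_w)\supseteq V(G_v)\cup V(H_v)\cup\{v\}=V$, and two $S$-inactive branches would force $S=\emptyset$. This is the paper's third case, phrased there as $V(G_v)\cup V(G_{v'})\supseteq V\setminus\{v'\}$. Your closing remark about using $S\neq\emptyset$ ``repeatedly'' points in the right direction, but the write-up uses it only once.

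In Step (iii), the inference ``$v\in H$ because $v$ is adjacent to $G_v$'' silently assumes $v\neq v_H$, so you must rule out $v=v_H$ first. If $v=v_H$, then $H$ is the branch at $v$ of a cycle $C''\neq C$ (if $C''=C$ the inclusion is not strict). In that case $G_v\supseteq C''\setminus\{v\}$ while $H$ avoids it, contradicting $G_v\subseteq H$; alternatively, invoke your own $v=w$ case.

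After that, the cleanest finish is the paper's. The set $C\setminus\{v_H\}$ is connected (all of $C$, or a path), avoids $v_H$, and contains $v\in V(H)$, so it lies in $H$ and $C$ is below $H$. Your detour through the characterisation of the cycle of $H$, as the unique cycle through $v_H$ not contained in $V(H)\cup\{v_H\}$, also handles $v_H\in C$. Still, ``I expect to conclude'' should be replaced by this one-line connectivity argument. Steps (ii) and (iv) are fine as written.
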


\begin{proof}
\emph{(i) Laminarity.}
Let $G_v$ (a branch of $C$) and $G_{v'}$ (a branch of $C'$) be
$S$-inactive.
We show that if they are neither disjoint nor nested, then
$V(G_v)\cup V(G_{v'}) \supseteq V \setminus \{z\}$ for some
$z \in \calA(G)$; since $S$ avoids both branches and $\calA(G)$, this
forces $S = \emptyset$, a contradiction.
If $v = v'$ (so $C \neq C'$), let $K_1,\dots,K_m$ be the components of
$G-v$ with $C\setminus\{v\}\subseteq K_1$ and
$C'\setminus\{v\}\subseteq K_2$; then $G_v = \bigcup_{i\neq 1}K_i$ and
$G_{v'} = \bigcup_{i\neq 2}K_i$, whose union is $V\setminus\{v\}$, and
$v \in \calA(G)$.
Let now $v \neq v'$.
If $v' \notin V(G_v)$ and $v \notin V(G_{v'})$, then
$v' \in V(H_v)$ by~\eqref{eq:branch-partition}, so $G_v \cup \{v\}$ is
connected in $G - v'$ and hence contained in the component of $G-v'$
containing $v$; that component is disjoint from $G_{v'}$, so
$G_v \cap G_{v'} = \emptyset$.
Otherwise, say $v' \in V(G_v)$, and let $K$ be the component of $G-v$
containing $v'$, so $K \subseteq G_v$.
Every component of $G-v'$ either contains $v$ or is contained in $K$.
If $G_{v'}$ consists only of components of the latter kind, then
$G_{v'} \subseteq K \subseteq G_v$ and the two branches are nested.
Otherwise $G_{v'}$ contains the component $L \ni v$ of $G-v'$, and
$L \supseteq V \setminus (K \cup \{v'\})$; together with
$G_v \supseteq K$ this gives
$V(G_v)\cup V(G_{v'}) \supseteq V\setminus\{v'\}$ with
$v' \in \calA(G)$.
Consequently the maximal members of a laminar family are pairwise
disjoint.
Finally, each $H \in \mathcal{W}$ determines its boundary vertex: if
$v \neq v'$ satisfied $G_v = G_{v'} = H$, then every component $K$ of
$H$ would be simultaneously a component of $G-v$ and of $G-v'$, so its
only neighbour outside $K$ would be both $v$ and $v'$, which is
impossible; and if $v = v'$ but the two branches came from different
cycles $C \neq C'$, then the branch of $C$ at $v$ would contain
$C'\setminus\{v\}$ while the branch of $C'$ at $v$ would not.

\emph{(ii)}
If a cycle component $C'$ were below two distinct
$H,H'\in\mathcal{W}$, then, since $V(H)\cap V(H')=\emptyset$,
$v_H\notin V(H)$ and $v_{H'}\notin V(H')$ by
\eqref{eq:branch-partition}, we would get
$V(C') \subseteq \bigl(V(H)\cup\{v_H\}\bigr)\cap
\bigl(V(H')\cup\{v_{H'}\}\bigr) \subseteq \{v_H,v_{H'}\}$,
contradicting $|V(C')|\ge3$.
If $C'$ is below $H$, then $V(C')\subseteq V(H)\cup\{v_H\}$, and $S$
avoids $V(H)$ as well as $v_H \in \calA(G)$, so
$S\cap V(C')=\emptyset$.

\emph{(iii)}
Let $C$ be below no member of $\mathcal{W}$ and $v \in I_C$.
By maximality, $G_v \subseteq H$ for some $H\in\mathcal{W}$.
Suppose the inclusion is strict.
By the case analysis in (i), $v_H \notin V(G_v)$ and
$v \notin V(H)$ would give $G_v \cap H = \emptyset$, which is
impossible as $G_v \neq \emptyset$; and $v_H \in V(G_v)$ would give
$H \subseteq G_v$ or $S = \emptyset$.
Hence $v \in V(H)$, and $v \neq v_H$.
The set $C\setminus\{v_H\}$ is connected (it is $C$ itself or a path)
and contains $v \in V(H)$, so it is contained in the component of
$G-v_H$ containing $v$, which is part of $H$; therefore
$V(C)\subseteq V(H)\cup\{v_H\}$, \ie $C$ is below $H$ ---
a contradiction.
Thus $G_v \in \mathcal{W}$.
For injectivity, if $v = v'$ and $C \neq C'$ are two such cycle
components with $v \in I_C \cap I_{C'}$, then $C'\setminus\{v\}$ is
contained in the branch of $C$ at $v$, so $C'$ is below
$G_v \in \mathcal{W}$, which is excluded; and if $v \neq v'$ then
$G_v \neq G_{v'}$ by~(i).

\emph{(iv)}
By (i) we have $v = v_H$ for $H = G_v$, so
$\calC(C,v) = \{C' : V(C')\subseteq V(H)\cup\{v_H\}\}$ is exactly the
set of cycle components below $H$, and the claim is
Definition~\ref{def:branchcap}.
\end{proof}

\begin{theorem}
\label{thm:mvn}
Let $G$ be a cactus graph that is not a simple cycle. Then
$
   \mu(G) = \capa(G) = 2n_2 + n_{\mathrm{lt}} + n_1 + \ell(G).
$
\end{theorem}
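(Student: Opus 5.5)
The lower bound $\mu(G)\ge\capa(G)$ is Proposition~\ref{prop:construction}. The remaining work is the upper bound $|S|\le\capa(G)$. By Lemma~\ref{lem:no-ap} it suffices to prove this for an arbitrary \MVS\ $S$ with $S\cap\calA(G)=\emptyset$. We may also assume $S\neq\emptyset$, since $\capa(G)\ge\ell(G)\ge 0$. The plan is to split $S$ by location, bound each piece by Lemma~\ref{lem:exchange}, and then sum these local bounds using Lemma~\ref{lem:bookkeeping} to rule out double counting.

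\textbf{Decomposition of $S$.} Every vertex of $S$ is a non-articulation point. Such a vertex is either a leaf of $G$ (by Observation~\ref{obs:tree}, if it lies in a tree component) or a degree-two vertex lying on exactly one cycle component. Let $\mathcal{W}$ be as in Lemma~\ref{lem:bookkeeping}, and let $\mathcal{C}_0$ be the set of cycle components lying below no member of $\mathcal{W}$. By Lemma~\ref{lem:bookkeeping}(ii), cycle components below some $H\in\mathcal{W}$ contain no vertex of $S$. Hence
\[
|S| \;\le\; \bigl|\{\text{leaves of } G\}\setminus \textstyle\bigcup_{H\in\mathcal{W}}V(H)\bigr| \;+\; \sum_{C\in\mathcal{C}_0}|S\cap V(C)|.
\]
The leaves inside members of $\mathcal{W}$ are excluded because $S$ avoids every inactive branch.

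\textbf{Applying the exchange inequality.} Apply Lemma~\ref{lem:exchange} to each $C\in\mathcal{C}_0$. This bounds the second term by
\[
\sum_{C\in\mathcal{C}_0}\capa(C)+\sum_{C\in\mathcal{C}_0}\sum_{v\in I_C}\ccaps(C,v).
\]
By Lemma~\ref{lem:bookkeeping}(iii), each pair $(C,v)$ with $C\in\mathcal{C}_0$ and $v\in I_C$ yields a distinct member $G_v$ of $\mathcal{W}$. By (iv), its branch capacity equals $\ell_G(H)+\sum_{C'\text{ below }H}\capa(C')$. The double sum is therefore at most
\[
\sum_{H\in\mathcal{W}}\Bigl(\ell_G(H)+\sum_{C'\text{ below }H}\capa(C')\Bigr),
\]
using that the map is injective and all terms are nonnegative.

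\textbf{Assembling the bound.} The members of $\mathcal{W}$ are pairwise disjoint by (i), so the leaf counts $\ell_G(H)$ add up without overlap. Together with the leaves outside $\bigcup\mathcal{W}$, they total at most $\ell(G)$. By (ii), no cycle component is below two members of $\mathcal{W}$, and by definition those below some member are disjoint from $\mathcal{C}_0$. So every cycle component's capacity is counted at most once, giving $|S|\le \ell(G)+\sum_{C}\capa(C)=\capa(G)$, and \eqref{eq:capG} finishes the proof.

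The main obstacle is this bookkeeping step: checking that the covering is exact, with each leaf and each cycle counted at most once. The ingredients are (i)–(iv) of Lemma~\ref{lem:bookkeeping}. One must also confirm that a leaf counted in some $\ell_G(H)$ is never simultaneously counted as an $S$-vertex outside $\bigcup\mathcal{W}$. This is immediate, since the $S$-leaves lie outside every inactive branch.
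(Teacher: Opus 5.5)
Your proposal is correct and follows the paper's proof essentially step for step. The paper also takes the lower bound from Proposition~\ref{prop:construction}, splits an $\calA(G)$-free \MVS\ $S$ into leaves outside $\bigcup\mathcal{W}$ and cycle components below no member of $\mathcal{W}$, applies Lemma~\ref{lem:exchange} to the latter, and uses Lemma~\ref{lem:bookkeeping}(i)--(iv) to count each leaf and each cycle capacity at most once. The only differences are cosmetic: you bound an arbitrary $\calA(G)$-free \MVS\ (treating $S=\emptyset$ separately) rather than a maximum one, and you split $S$ by vertex type rather than by component of the cycle--tree decomposition.
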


\begin{proof}
The inequality $\mu(G)\ge\capa(G)$ is
Proposition~\ref{prop:construction}.
For the converse, let $S$ be a maximum \MVS\ with
$S\cap\calA(G)=\emptyset$, which exists by Lemma~\ref{lem:no-ap} and
is non-empty.
Every vertex of $V\setminus\calA(G)$ lies in exactly one component of
the cycle--tree decomposition, so
$|S| = \sum_{C}|S\cap V(C)| + \sum_{T}|S\cap V(T)|$, where $C$ ranges
over cycle components and $T$ over tree components; by
Observation~\ref{obs:tree}, $S\cap V(T)$ consists of leaves of $G$.
Let $\mathcal{W}$ be as in Lemma~\ref{lem:bookkeeping}.
As $S$ avoids $\bigcup\mathcal{W}$ and the members of $\mathcal{W}$
are pairwise disjoint,
\[
  \sum_{T}|S\cap V(T)| \;\le\; \ell(G) -
  \!\!\sum_{H\in\mathcal{W}}\!\!\ell_G(H),
  \qquad
  \sum_{C \text{ below some } H}\!\!\! |S\cap V(C)| \;=\; 0 .
\]
By Lemma~\ref{lem:bookkeeping}(iii)--(iv), summing
Lemma~\ref{lem:exchange} over the cycle components that are below no
member of $\mathcal{W}$ contributes each $H\in\mathcal{W}$ at most
once, with
$\ccaps(C,v)=\ell_G(H)+\sum_{C'\text{ below }H}\capa(C')$.
Hence
\begin{align*}
  |S| &\leq \sum_{C \text{ below no } H}
            \Bigl(\capa(C) + \sum_{v\in I_C}\ccaps(C,v)\Bigr)
        \;+\; \ell(G) - \sum_{H\in\mathcal{W}}\ell_G(H)\\[1pt]
      &\leq \sum_{C \text{ below no } H} \capa(C)
        \;+\; \sum_{H\in\mathcal{W}}
              \Bigl(\ell_G(H) + \sum_{C' \text{ below } H}\capa(C')
              \Bigr)
        \;+\; \ell(G) - \sum_{H\in\mathcal{W}}\ell_G(H)\\[1pt]
      &= \sum_{C\in\calC(G)} \capa(C) + \ell(G)
       \;=\; \capa(G) \;=\; 2n_2+n_{\mathrm{lt}}+n_1+\ell(G).
       \qedhere
\end{align*}
\end{proof}

Figure \ref{Fig:cactusmaxmvs} shows a maximum mutual visibility set of the given cactus graph,
$\mu(G) = 24$.

\begin{figure}[tb]
 \begin{center}
\includegraphics[width=\linewidth]{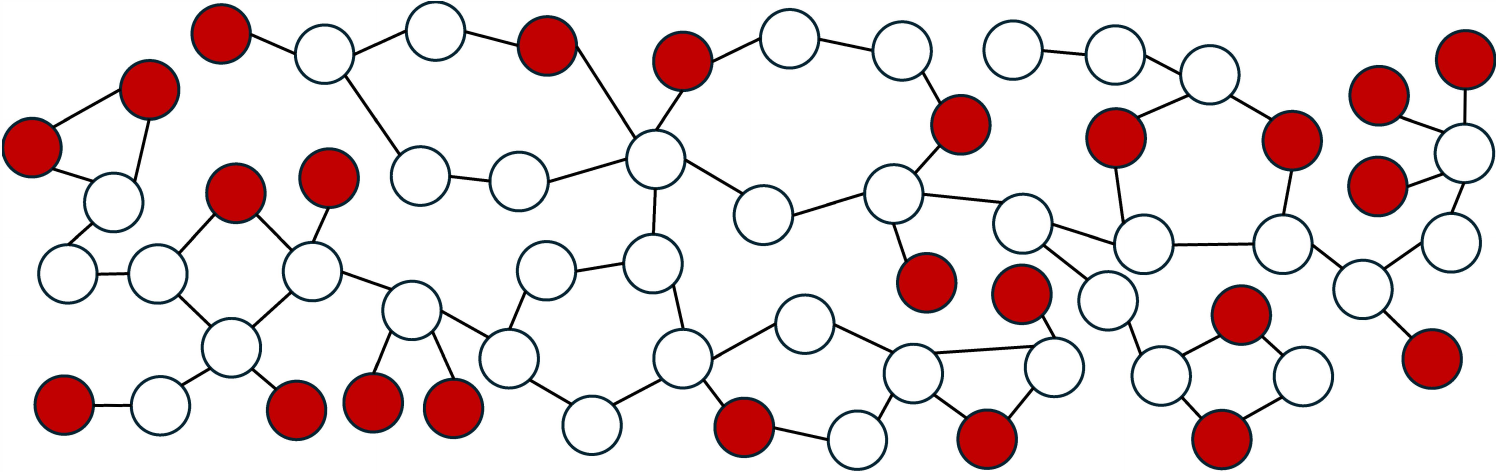}
  \caption{A maximum mutual visibility set of the given cactus graph ($\mu(G) = 24$)}
  \label{Fig:cactusmaxmvs}
 \end{center}
\end{figure}

\begin{corollary}
\label{cor:upper-bound}
For every cactus graph $G$ that is not a simple
cycle, $\mu(G) \leq 2|\calC(G)| + \ell(G)$, and equality holds if and
only if every cycle component of $G$ is a twin-cycle.
\end{corollary}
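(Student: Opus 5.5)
The plan is to read the statement off Theorem~\ref{thm:mvn}. Since $\mu(G) = 2n_2 + n_{\mathrm{lt}} + n_1 + \ell(G)$ and $|\calC(G)| = n_0+n_1+n_2+n_{\mathrm{lt}}$ by Definition~\ref{def:cycle-types}, we have
\[
  2|\calC(G)| + \ell(G) - \mu(G) \;=\; 2n_0 + n_1 + n_{\mathrm{lt}}
  \;=\; \sum_{C\in\calC(G)} \bigl(2-\capa(C)\bigr).
\]
By Definition~\ref{def:capacity} each term $2-\capa(C)$ is nonnegative, so the inequality $\mu(G)\le 2|\calC(G)|+\ell(G)$ follows at once.

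For the characterization of equality, the sum of nonnegative terms vanishes if and only if every term is zero. A term is zero exactly when $\capa(C)=2$. By Definition~\ref{def:capacity} this happens precisely when $C$ is a twin-cycle, because unit- and latent twin-cycles have capacity $1$ and zero-cycles have capacity $0$. Hence equality holds if and only if $n_0=n_1=n_{\mathrm{lt}}=0$, that is, if and only if every cycle component is a twin-cycle.

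No step is a real obstacle: the only point to check is that the four types are exhaustive and mutually exclusive, which is asserted in Definition~\ref{def:cycle-types}, so the count $|\calC(G)|=n_0+n_1+n_2+n_{\mathrm{lt}}$ is valid. The hypothesis that $G$ is not a simple cycle is needed only to invoke Theorem~\ref{thm:mvn}.
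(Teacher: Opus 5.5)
Your proposal is correct and matches the paper's proof. Both combine Theorem~\ref{thm:mvn} and \eqref{eq:capG} with the observation from Definition~\ref{def:capacity} that $\capa(C)\le 2$, with equality exactly for twin-cycles. Your slack identity $2|\calC(G)|+\ell(G)-\mu(G)=\sum_{C}(2-\capa(C))=2n_0+n_1+n_{\mathrm{lt}}$ simply spells out that step.
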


\begin{proof}
By Definition~\ref{def:capacity} we have $\capa(C) \leq 2$ for every
cycle component, with equality if and only if $C$ is a twin-cycle;
now apply Theorem~\ref{thm:mvn} and \eqref{eq:capG}.
\end{proof}

\begin{corollary}
\label{cor:linear}
A maximum \MVS\ of a cactus graph $G$ on $n$ vertices, and hence
$\mu(G)$, can be computed sequentially in $O(n)$ time.
\end{corollary}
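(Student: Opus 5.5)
The plan is to turn the constructive proof of Proposition~\ref{prop:construction} into an algorithm: compute $\Sstar$ of Definition~\ref{def:canonical} in linear time. By Proposition~\ref{prop:construction} and Theorem~\ref{thm:mvn}, $\Sstar$ is an \MVS\ of size $\mu(G)$, so it is a maximum \MVS. If $G$ is a simple cycle (detectable in $O(n)$ by checking that every degree is two), any three vertices at pairwise maximal spacing work by Lemma~\ref{lem:cycle-mvn}. So assume $G$ is a cactus that is not a cycle.

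\textbf{Step 1: decomposition.} Run one DFS (Hopcroft--Tarjan) to obtain the articulation points $\calA(G)$ and the biconnected components in $O(n+m)$ time. For a cactus, $m\le \frac{3}{2}(n-1)$, so this is $O(n)$. Blocks with at least three vertices are the cycle components, and their vertices can be listed in cyclic order during the DFS. Each single-edge block is a bridge. A union--find or BFS over the bridges yields the tree components, which gives $\calAC(G)$ and all boundary vertices.

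\textbf{Step 2: parameters and beards.} For each cycle $C$, compute $\tau_C$ and scan $C$ cyclically to get every $\emptyset$-free run, hence $\lambda_C$, in $O(|C|)$ time. For each boundary vertex $v$ of $C$, record the lengths $\alpha,\beta$ of the two adjacent runs, so that $\lambda_C^{v}=\max\{\lambda_C,\alpha+1+\beta\}$ as in the proof of Lemma~\ref{lem:unique-run}. Next decide which $v\in A_C$ are beard boundary vertices. Here $G_v$ is a beard iff $v$ lies on $C$ only and $G_v$ is a path ending in a leaf with no other attachments (Definition~\ref{def:beard}). Equivalently, $v$ has exactly one neighbour $y$ off $C$, and walking from $y$ through degree-two vertices reaches a degree-one vertex. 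Each such walk traverses its own pendant path, and these paths are disjoint across pairs $(C,v)$ by Lemma~\ref{lem:bookkeeping}(i) and the injectivity in Lemma~\ref{lem:size}. Every walk can be aborted after one step as soon as it meets a vertex of degree $\ge 3$ or a cycle vertex, so the total work is $O(n)$. Since $\sum_C|C|\le 2n$ in a cactus, the whole step is linear, and it classifies every cycle as zero, unit, twin or latent twin (Definition~\ref{def:cycle-types}).

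\textbf{Step 3: output.} Pick the smallest-index witness $v^*_C$ and the leaf $z_C$ of its beard (already found in Step~2). Pick $P_C$ and then $u_C$, or $a_C,d_C$, by a second scan of $C$. Then emit all leaves except the $z_C$, together with these cycle vertices. Everything is a constant number of linear passes, and $\mu(G)=|\Sstar|$ follows.

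The main obstacle is the accounting in Step~2. The beard test must not re-traverse long pendant paths or shared subtrees many times. The disjointness argument above, together with aborting at the first vertex of degree at least three, is what keeps the total cost $O(n)$.
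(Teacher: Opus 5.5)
Your proposal follows the paper's proof. It runs one DFS for blocks and articulation points and scans each cycle once for $\tau_C$, the runs, $\lambda_C$ and $\lambda_C^{v}=\max\{\lambda_C,\alpha+1+\beta\}$. It then detects beards, classifies the cycles and outputs $\Sstar$, with correctness from Theorem~\ref{thm:mvn} and Proposition~\ref{prop:construction}.

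The one flaw is in the step you single out as the main obstacle. The walks of Step~2 are \emph{not} pairwise disjoint, and the lemmas you cite say nothing about them. Suppose two degree-three boundary vertices $v_1\in C_1$ and $v_2\in C_2$ are joined by a path of degree-two vertices, for instance two triangles joined by $v_1 - w - v_2$. Then the walk started at $v_1$ and the walk started at $v_2$ both traverse that entire path before aborting. Lemma~\ref{lem:bookkeeping}(i) concerns the $S$-inactive branches of an \MVS\ $S$, and the injectivity in Lemma~\ref{lem:size} concerns the map $C\mapsto z_C$ on latent twin-cycles. Neither applies to unsuccessful walks, and only the successful walks (those that trace an actual beard) are disjoint.

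The linear bound still holds with a different count. A walk that enters a degree-two vertex $w$ from a neighbour $x$ is determined by the ordered pair $(x,w)$. Moving from $w$ through $x$ and onward through degree-two vertices retraces the walk backwards, and the first vertex of degree $\neq 2$ you meet is its starting boundary vertex. Hence each directed edge is used by at most one walk, and the total work is $O(|E|)=O(n)$.

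Alternatively, do as the paper does and test (B1)--(B3) once per tree component. Tree components are edge-disjoint, so a single pass over all of them is linear.
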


\begin{proof}
The blocks and the articulation points of $G$ are computed by one
depth-first search in $O(n+|E|)=O(n)$ time, since $|E|\le
\frac{3}{2}(n-1)$ in a cactus; this yields the cycle--tree
decomposition of Definition~\ref{def:decomposition}.
Traversing each cycle component $C$ once gives $|C|$, $\tau_C$, all
maximal $\emptyset$-free runs, $\lambda_C$ and, for each boundary
vertex $v$, the value $\lambda_C^{v}=\max\{\lambda_C,
\alpha_v+1+\beta_v\}$; the total cost is
$O(\sum_{C}|C|)=O(n)$ by the bound $\sum_C |C|\le n+|\calC(G)|-1$.
One traversal of the tree components identifies the beards and their
boundary vertices, after which
Definition~\ref{def:cycle-types} classifies every cycle component and
Definition~\ref{def:canonical} selects the vertices of $\Sstar$, both
in $O(n)$ time.
Correctness is Theorem~\ref{thm:mvn} and
Proposition~\ref{prop:construction}.
\end{proof}

\begin{remark}
\label{rem:blockgraphs}
Cactus graphs and block graphs (every block a clique) properly
overlap without either containing the other, and the \MVN\ of block
graphs is already known~\cite{DiStefano2022}; the two analyses are
nevertheless of a different nature.
In a block graph every block has diameter one, so any two vertices of
a block are mutually visible regardless of the rest of the set and the
analysis is governed by the cut-vertex structure alone.
In a cactus a cycle block has diameter $\lfloor|C|/2\rfloor$, so
distances \emph{inside} a block matter and a chosen vertex may destroy
the visibility of two vertices of the same block---which is precisely
what the comparison of $\lambda_C$ with $\tau_C$ measures, and which
has no counterpart for cliques, as has the latent twin-cycle.
\end{remark}

\begin{remark}
\label{rem:notlocal}
Lemma~\ref{lem:cycle-bound} cannot be strengthened to
``$|S\cap V(C)|=\capa(C)$ for every maximum \MVS\ $S$''.
For the $5$-cycle $p,w_1,w_2,q,r$ whose only articulation points are
$p$ and $q$, each carrying a pendant leaf $a$ resp.\ $b$, we have
$\lambda_C=\tau_C=2$ and $\mu(G)=3$, yet both $\{a,b,w_1\}$ and
$\{a,w_1,r\}$ are maximum \MVS s and the latter takes two vertices of
$C$ while omitting $b$.
This is why we combine an upper bound (Lemma~\ref{lem:exchange}) with
an explicit construction (Definition~\ref{def:canonical}).
\end{remark}

\section{Self-Stabilizing Algorithms for a Maximum Mutual Visibility Set}
\label{sec:algorithm}
By Theorem~\ref{thm:mvn}, a maximum \MVS\ is determined by the
cycle--tree decomposition of $G$ together with the type of each cycle
component, and Definition~\ref{def:canonical} turns this
classification into an \emph{explicit} maximum \MVS\ $\Sstar$.
Both algorithms below compute exactly $\Sstar$, so their correctness
follows from Theorem~\ref{thm:mvn} and
Proposition~\ref{prop:construction} once they are shown to evaluate
Definition~\ref{def:canonical} correctly.
Only two of its choices are not already unique, and both are resolved
with process identifiers: for a \emph{unit-cycle} there may be up to
two $\emptyset$-free runs of length $\lambda_C$, exactly one of which
hosts the single \MVS\ vertex; and a \emph{latent twin-cycle} may have
several witnesses, exactly \emph{one} of which may be deactivated,
since deactivating a second one would forfeit a further leaf without
any gain.
For a \emph{twin-cycle} no symmetry breaking is needed, the longest
$\emptyset$-free run being unique (Lemma~\ref{lem:unique-run}).
The two algorithms share this correctness argument and differ only in
their optimization criterion: Algorithm~1 (Section~\ref{sec:alg1})
uses a single BFS tree and is space-efficient, 
whereas Algorithm~2 (Section~\ref{sec:alg2}) uses parallel BFS trees
from all high-degree vertices and stabilizes in time proportional to
the largest component rather than to the diameter.

Section~\ref{sec:model} fixes the computational model and the
specification of the task.
Section~\ref{sec:alg1} describes the strategy of Algorithm~1 in four
phases, Section~\ref{sec:rules} turns that strategy into a complete
set of rules in the state-reading model, and
Section~\ref{sec:alg1-analysis} proves its correctness and its
complexity.
Section~\ref{sec:alg2} obtains Algorithm~2 from Algorithm~1 by
replacing the single BFS tree by a family of local ones, and proves
the matching lower bound; Section~\ref{sec:comparison} compares the
two algorithms.

\subsection{System Model and Specification}
\label{sec:model}
\subparagraph*{Network and communication.}
The system is modeled by a cactus graph $G=(V,E)$ that is neither a
simple cycle nor a simple path (for those topologies a maximum \MVS\
is trivial), where each vertex is a \emph{process} and each edge a
bidirectional \emph{communication link}.
We adopt the \emph{state reading} (shared memory) model: each process
$v$ reads its own variables and those of all $u\in N(v)$, and is a
state machine whose next state is a function of these.
Each process $v$ has a distinct identifier $\mathrm{id}(v)$ taken from
a domain of size $n^{O(1)}$, and every process knows a common upper
bound $N\ge n$ on the number of processes; all distance variables
range over the finite domain $\{0,1,\dots,N\}$, and a value read
outside this domain is treated as $N$.
Both assumptions are standard and necessary here: with an unbounded
distance domain the space complexity could not be bounded by any
function of $n$, and with arbitrary integers as initial values the
stabilization time would depend on the magnitude of the corruption
rather than on the topology.
There is one predetermined leader $r$ with $\delta(r)\ge3$; such a
vertex always exists in our topologies.

\subparagraph*{Executions and the daemon.}
A \emph{configuration} $\gamma=(s_v)_{v\in V}$ is the tuple of all
process states.
Each process has a finite set of \emph{rules}, each of the form
$y_v \leftarrow F(\cdot)$ where $F$ depends only on the state of $v$
and of its neighbours; a process is \emph{enabled} in $\gamma$ if the
value of one of its variables differs from the value prescribed by the
corresponding rule, and executing a \emph{step} means that all rules
of the process are applied simultaneously.
We assume the \emph{distributed daemon}: at each step an arbitrary
non-empty subset of the enabled processes executes simultaneously.
An \emph{execution} is a maximal sequence
$\gamma_0,\gamma_1,\dots$ of configurations linked by steps; no
assumption is placed on $\gamma_0$, which models arbitrary transient
faults (the code itself is incorruptible).
We further assume the daemon to be \emph{weakly fair}: a process that
remains continuously enabled eventually executes a step.
This assumption is what makes a round-based complexity measure
meaningful---an unfair daemon may starve one process forever, so no
algorithm admits a bound in rounds under it---and it is the standard
assumption in the literature whenever stabilization time is measured
in rounds.
The \emph{first round} of an execution is its shortest prefix in which
every process enabled in $\gamma_0$ either executes a step or becomes
disabled; the rounds of the remaining suffix are defined inductively.

\subparagraph*{Output, specification and legitimacy.}
Every process $v$ has a boolean \emph{output variable}
$\mathrm{inMVS}_v$, and the \emph{output} of a configuration $\gamma$
is $\mathcal{O}(\gamma)=\{v\in V \mid \mathrm{inMVS}_v=\textsf{true}\}$.
The \emph{maximum-\MVS\ construction task} requires that
$\mathcal{O}(\gamma)$ be a maximum \MVS\ of $G$ containing no
articulation point of $G$; by Lemma~\ref{lem:no-ap} this restriction
is without loss of generality, and it makes the specification a
function of $G$ alone.
A configuration is \emph{legitimate} if no process is enabled in it.
Thus legitimacy is a purely local, syntactic predicate; it is not
defined in terms of the output.
That the two coincide is a theorem, proved in
Section~\ref{sec:alg1-analysis}: the rules of Algorithm~1 admit
\emph{exactly one} legitimate configuration, and its output is the
maximum \MVS\ $\Sstar$ of Definition~\ref{def:canonical}
(Theorem~\ref{thm:B-main}).
An algorithm is \emph{self-stabilizing} for the task if every
execution contains a legitimate configuration, and \emph{silent} if no
variable changes once a legitimate configuration is reached; since a
legitimate configuration is by definition a fixed point of the rules,
both of our algorithms are silent, and the output is constant from
that point on.

\subsection{Algorithm 1: A Space-Efficient Self-Stabilizing Algorithm}
\label{sec:alg1}
Algorithm~1 evaluates Definition~\ref{def:canonical} using a
\emph{single-root BFS} tree, in four phases.

\smallskip\noindent
\textbf{Phase 1 (BFS tree).}
A single BFS tree $\calB_r$ rooted at the leader $r$ is constructed by
the classical distance-plus-one rule; self-stabilizing BFS
constructions with a designated root go back to Huang and
Chen~\cite{HuangChen1992} and to Dolev, Israeli and
Moran~\cite{DolevIsraeliMoran1993} (see
also~\cite[Ch.~2]{Dolev2000}).
So that the paper is self-contained and so that the round complexity
does not depend on the variant chosen, we state and prove the bound we
use in Lemma~\ref{lem:bfs-time} of Section~\ref{sec:rules}: with
the distance domain $\{0,\dots,N\}$ and the root pinned to $0$, the
construction stabilizes in $O(D)$ rounds under the distributed daemon.

\smallskip\noindent
\textbf{Phase 2 (cycle identification and articulation points).}
Because the cycles of a cactus are edge-disjoint, every cycle of
$\calC(G)$ contains exactly one non-tree edge of $\calB_r$ and,
conversely, the fundamental cycle of a non-tree edge is a cycle of
$\calC(G)$; and $\{u,v\}$ is a non-tree edge exactly when
$\mathtt{par}_u\neq v$ and $\mathtt{par}_v\neq u$, which both
endpoints check locally.
We therefore name the cycle containing $e_C=\{a,b\}$ by
$\mathrm{cid}(C)=(\min\{\mathrm{id}(a),\mathrm{id}(b)\},
\max\{\mathrm{id}(a),\mathrm{id}(b)\})$---an identifier that needs no
prior knowledge of $V(C)$---and establish its membership by one
leaf-to-root wave along $\calB_r$, at the end of which the vertex
$x_C=\mathrm{lca}(a,b)$ of minimum depth computes
$|C|=\mathtt{dist}_a+\mathtt{dist}_b-2\,\mathtt{dist}_{x_C}+1$ and
$\tau_C$ and broadcasts them back down; see
Section~\ref{sec:rules} for the rules and their analysis.
Articulation points need no DFS low-link computation: in a cactus a
vertex of degree at least three is always an articulation point, and a
vertex of degree two is one exactly when it lies on no cycle.

\smallskip\noindent
\textbf{Phase 3 ($\lambda_C$, $\lambda_C^{v}$ and beard detection).}
For each cycle $C$ the parameter $\lambda_C$ is computed by propagating
run-length information around $C$, and each boundary vertex $v$ of $C$
additionally computes the lengths $\alpha_v,\beta_v$ of the two
$\emptyset$-free runs adjacent to it and stores
$\lambda_C^{v}=\max\{\lambda_C,\ \alpha_v+1+\beta_v\}$, the value of
$\lambda_C$ obtained by deactivating $v$ alone.
Beard detection is a leaf-to-root propagation along $\mathcal{B}_r$:
each leaf sends upward the maximum number of children seen so far, and
a tree component $T$ is flagged as a beard iff that count never exceeds
one along the whole path to the boundary vertex (no branching in $T$)
and that boundary vertex belongs to exactly one cycle, as required
by~(B2).
A beard boundary vertex $v$ is deactivated \emph{only} when
$\lambda_C<\tau_C$ and $\lambda_C^{v}>\tau_C$, \ie only when $C$ is a
latent twin-cycle with witness $v$; deactivating a beard
unconditionally would forfeit its leaf without any gain, for instance
when $\lambda_C=\tau_C$ already.
Among several witnesses, only the one of smallest ID is deactivated.

\smallskip\noindent
\textbf{Phase 4 (classification and selection).}
Each cycle is classified from $\lambda_C$, $\lambda_C^{v}$ and $\tau_C$
(Definition~\ref{def:cycle-types}), and each process sets
$\mathrm{inMVS}_v$ exactly as prescribed by
Definition~\ref{def:canonical}: the two endpoints of the unique longest
run for a twin- or latent twin-cycle (Lemma~\ref{lem:unique-run}); one
vertex of one run of length $\tau_C$ for a unit-cycle, the tie between
the at most two such runs being broken by ID; nothing for a zero-cycle;
every leaf of $G$ except those whose beard boundary vertex was
deactivated in Phase~3.

\subsection{The Rules of Algorithm~1}
\label{sec:rules}

The description of Section~\ref{sec:alg1} is a description of a
\emph{strategy}: it says which quantity is computed in each phase, but
it leaves the phrases ``propagating run-length information around
$C$'' and ``a leaf-to-root propagation along $\calB_r$'' informal, and
in a self-stabilizing setting such phrases hide exactly the two
questions that matter, namely what a process does when the data it
reads are corrupted, and how many rounds an aggregation costs.
This section removes that informality: it gives the complete set of
rules of Algorithm~1, together with the analysis tool
(Lemma~\ref{lem:B-layer}) that converts each aggregation into an
explicit round bound.

We work in the state-reading model, in which a process may read only
its own variables and those of its neighbours.
Every rule is a \emph{total} assignment
$y_v \leftarrow F\bigl(\text{local topology of } v,\
\{y'_u\}_{u\in N(v)}\bigr)$ whose right-hand side depends only on
variables of $v$ and of its neighbours; a process is \emph{enabled}
whenever one of its variables differs from the value prescribed by its
rule, and it executes \emph{all} of its rules when activated.
No rule is conditional on the current value of the variable it
assigns, so an arbitrary initial value is always overwritten rather
than preserved; this, together with the finiteness of every variable
domain, is what purges corrupted states, and it is also why the
algorithm is silent.

\subsubsection*{A layered evaluation lemma}

All computations are organised into \emph{layers}.
A layer consists of one variable per process (or per cycle incidence,
see the list of variables below) together with an \emph{evaluation
order}: an acyclic orientation of a subgraph of $G$ such that,
\emph{within the layer}, the rule of a process depends only on its
in-neighbours.
The \emph{depth} of a layer is the length of the longest directed path
of its evaluation order.

The following lemma is stated in a per-variable form.
The reason is that the analysis of Algorithm~2
(Section~\ref{sec:alg2}) needs to conclude that \emph{some} variables
are stable after few rounds even though the system as a whole is not;
a lemma that only bounds the global stabilization time cannot give
that.

\begin{lemma}[Layered evaluation]
\label{lem:B-layer}
Let $y^1,\dots,y^k$ be layers such that the rule of $y^i_v$ depends
only on the local topology of $v$, on $y^j_u$ for $j<i$ and
$u\in N(v)\cup\{v\}$, and on $y^i_u$ for the in-neighbours $u$ of $v$
in the evaluation order of layer~$i$.
For a variable instance $y=y^i_v$ define its \emph{dependency depth}
$\Delta(y)$ to be $0$ if the rule of $y$ refers to no other variable
of the family, and $1+\max\Delta(y')$ over the variables $y'$ that the
rule of $y$ refers to otherwise; the recursion is well founded because
the dependency relation is acyclic.
Write $\pi(y)$ for the value that the rule of $y$ prescribes when all
the variables it refers to hold their own prescribed values.
Then, under the weakly fair distributed daemon and from every initial
configuration, every execution satisfies: for every variable instance
$y$, from the end of round $\Delta(y)+1$ onwards, $y=\pi(y)$.
In particular every variable holds its prescribed value after at most
$1+\sum_{i=1}^{k}(h_i+1)$ rounds, where $h_i$ is the depth of
layer~$i$, and no process is enabled from that point on.
\end{lemma}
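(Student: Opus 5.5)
The plan is to argue by strong induction on the dependency depth $\Delta(y)$. The inductive claim is: for every variable instance $y$ and every execution, $y=\pi(y)$ holds at the end of round $\Delta(y)+1$ and at every later configuration. Two facts drive everything. First, each rule is a total assignment that is not conditional on the current value of its own variable. Second, once every variable read by the rule of $y$ holds its prescribed value permanently, the right-hand side of $y$ is permanently equal to $\pi(y)$.

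\textbf{Base case, $\Delta(y)=0$.} The rule of $y=y^i_v$ reads only the local topology of $v$, so its right-hand side is a constant $\pi(y)$. If $y\neq\pi(y)$ in $\gamma_0$, then $v$ is enabled. By the definition of the first round, $v$ either executes a step in that round or becomes disabled. Executing a step runs all rules of $v$ simultaneously, which sets $y=\pi(y)$. Becoming disabled forces $y=\pi(y)$, since otherwise $v$ would still be enabled. Afterwards $y$ can only be rewritten to the same constant, so $y=\pi(y)$ holds forever from the end of round $1$ on. If $y=\pi(y)$ already holds initially, the same closure argument applies.

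\textbf{Inductive step.} Suppose $\Delta(y)=d\ge1$. Every variable $y'$ referenced by the rule of $y$ has $\Delta(y')\le d-1$, so by induction each such $y'$ permanently equals $\pi(y')$ from the end of round $d$ on. From that configuration onward the right-hand side of $y$ evaluates to $\pi(y)$ at every step. This is the point where simultaneous updates of $v$ and its neighbours under the distributed daemon must be handled. The argument is that any step of $v$ reads the configuration before the step, and that configuration already has all the $y'$ fixed, so the value written is $\pi(y)$. Now repeat the round argument on round $d+1$: either $v$ steps and writes $\pi(y)$, or $v$ becomes (or already is) disabled, which again means $y=\pi(y)$. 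Closure holds because every later write of $y$ is $\pi(y)$.

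\textbf{Main obstacle.} The delicate step is interpreting rounds correctly, since a round ends once every process enabled at its start has stepped or been disabled, while $v$ may be disabled at the start of round $d+1$ and only become enabled later. This is harmless. If $v$ is disabled at the start of round $d+1$, then $y=\pi(y)$ already holds there, because all its inputs are prescribed, and closure keeps it so. Weak fairness is needed only so that rounds are finite.

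\textbf{The global bound.} It remains to bound $\Delta(y)$ for $y=y^i_v$ by $\sum_{j\le i}(h_j+1)-1$. Induct on $i$ and, inside layer $i$, on the length of the longest evaluation-order path ending at $v$, which is at most $h_i$. Reading $y^i_u$ for an in-neighbour $u$ costs one unit of depth along the layer-$i$ path, and reading a variable of an earlier layer costs one unit on top of that layer's bound. This gives $\Delta(y^i_v)\le \sum_{j<i}(h_j+1)+h_i$. Taking the maximum over all $y$, every variable equals $\pi(y)$ after $1+\sum_{i=1}^k(h_i+1)$ rounds. At that point every variable equals its rule's right-hand side, so no process is enabled.
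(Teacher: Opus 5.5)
Your proposal is correct and follows the paper's proof essentially step for step. Like the paper, you use strong induction on $\Delta(y)$: the base case rests on the round-$1$ semantics (a process either executes or becomes disabled), the inductive step applies the same argument to round $d+1$ once all inputs are frozen, and the layer-by-layer bound $\Delta(y^i_v)\le\sum_{j<i}(h_j+1)+h_i$ gives the global count. Your explicit treatment of a process that is already disabled at the start of round $d+1$ just spells out what the paper compresses into ``the argument of the base case, applied to round $d+1$''.
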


\begin{proof}
Recall that a process is enabled in a configuration precisely when at
least one of its variables differs from the value its rule prescribes
\emph{in that configuration}, that a step recomputes all the variables
of the process at once, and that, by the definition of a round, a
process that is enabled at the beginning of a round either executes a
step during that round or becomes disabled during it.
We induct on $\Delta(y)$; let $p$ be the process owning $y$.

Let $\Delta(y)=0$.
The right-hand side of the rule of $y$ is a function of the local
topology of $p$ alone, hence is constant along the execution and equal
to $\pi(y)$.
During round~$1$, either $p$ executes a step --- after which
$y=\pi(y)$ --- or $p$ becomes disabled, which by definition means that
all of its variables, $y$ included, already agree with the prescribed
values.
Every later step of $p$ re-assigns the same constant, so $y=\pi(y)$
from the end of round~$1$ on.

Let $\Delta(y)=d>0$ and assume the claim for every variable of
dependency depth less than $d$.
Every variable that the rule of $y$ refers to has dependency depth at
most $d-1$ and therefore holds its prescribed value from the end of
round~$d$ on.
Consequently, from the end of round~$d$ onwards the right-hand side of
the rule of $y$ evaluates to $\pi(y)$ and does not change any more.
The argument of the base case, applied to round $d+1$, then gives
$y=\pi(y)$ from the end of round $d+1$ on.

For the global bound, a variable of layer~$i$ that sits at position
$h$ of the evaluation order of that layer has dependency depth at most
$\sum_{j<i}(h_j+1)+h$, whence
$\max_y\Delta(y)\le\sum_{i=1}^{k}(h_i+1)$.
Once every variable holds its prescribed value, no process is enabled
by definition.
\end{proof}

\begin{corollary}[Local stabilization]
\label{cor:B-local}
Under the hypotheses of Lemma~\ref{lem:B-layer}, a set $Y$ of variable
instances with $\max_{y\in Y}\Delta(y)=O(k)$ holds its prescribed
values after $O(k)$ rounds, \emph{whatever the dependency depths of
the remaining variables are}.
\end{corollary}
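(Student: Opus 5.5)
This follows directly from the per-variable form of Lemma~\ref{lem:B-layer}; no new induction is needed. Fix an execution from an arbitrary initial configuration and the set $Y$ of variable instances with $\max_{y\in Y}\Delta(y)\le ck$ for a constant $c$. Apply Lemma~\ref{lem:B-layer} to each $y\in Y$ separately: from the end of round $\Delta(y)+1\le ck+1$ onwards, $y=\pi(y)$.

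Every $y\in Y$ therefore holds its prescribed value from the end of round $ck+1$ on. This is $O(k)$ rounds.

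The one point to make explicit is why the other variables may be ignored. The dependency depth is defined recursively only through the variables that the rule of $y$ actually refers to. The inductive argument in the proof of Lemma~\ref{lem:B-layer} stabilizes $y$ using only the variables in its dependency cone, the transitive closure of ``refers to''. Each of these has depth less than $\Delta(y)$ by well-foundedness, and so lies within the bound.

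Variables outside the cone may keep changing, and their owners may keep being activated. This cannot disturb $y$. Once its arguments are fixed, the right-hand side of the rule of $y$ is constant, so any further step of the owning process re-assigns the same value $\pi(y)$. A step triggered by some other variable of the same process therefore also leaves $y=\pi(y)$. This is exactly the base-case observation in the proof of the lemma, and it is the only step needing care.

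The round bound also needs no global stabilization. A round is defined relative to the enabled processes, and the base case uses only that the owner of $y$, if enabled, executes or becomes disabled within the round. Weak fairness guarantees this regardless of the state of the rest of the system.
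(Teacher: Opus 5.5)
Your proof is correct and follows the paper's reasoning. The paper gives no separate proof: the corollary is the per-variable conclusion of Lemma~\ref{lem:B-layer} ($y=\pi(y)$ from the end of round $\Delta(y)+1$ on), maximised over $Y$. Your remarks on the dependency cone and on steps of the owning process only spell out what the lemma's induction already establishes.
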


\noindent
Lemma~\ref{lem:B-layer} is the tool that replaces informal phrases
such as ``propagated within $C$'': each aggregation below is realised
as an explicit layer with an explicit evaluation order, an explicit
depth, and an explicit purge behaviour.

\subsubsection*{Notation and variables}

We write $N(v)$ for the neighbours of $v$, $\delta(v)=|N(v)|$,
$\mathrm{ch}(v)=\{u\in N(v)\mid \mathtt{par}_u=v\}$ for the children
of $v$ in $\calB_r$, and
\[
   E^{\mathrm{nt}}_v=\{u\in N(v)\mid
     \mathtt{par}_u\neq v \wedge \mathtt{par}_v\neq u\},
   \qquad
   \mathrm{key}(v,u)=\bigl(\min\{\mathrm{id}(v),\mathrm{id}(u)\},
     \max\{\mathrm{id}(v),\mathrm{id}(u)\}\bigr)
\]
for the non-tree neighbours of $v$ and the identifier of the cycle
closed by such an edge.
A process stores a constant part and one \emph{record} per incident
cycle; $\mathtt{Rec}_v$ is the set of identifiers for which $v$ holds
a record and $\mathtt{z}_v[c]$ is field $\mathtt{z}$ of record $c$.
This per-incidence representation is what allows a process on
$\gamma_v\ge2$ cycles---for instance the centre of a friendship
graph---to keep two cycle identifiers, two pairs of cycle neighbours
and two values of $\lambda_C$ at the same time.

\medskip
\noindent
\begin{tabular}{@{}p{0.40\textwidth}p{0.55\textwidth}@{}}
$\mathtt{par}_v,\ \mathtt{dist}_v$
  & parent and depth in $\calB_r$, $\mathtt{dist}_v\in\{0,\dots,N\}$\\
$\mathtt{isAP}_v,\ \mathtt{chain}_v,\ \mathtt{deAbv}_v,\
 \mathtt{inMVS}_v$
  & booleans\\
$\mathtt{Src}_v[c],\ \mathtt{src}_v[c],\ \mathtt{up}_v[c],\
 \mathtt{bot}_v[c]$
  & sources of $c$, their number, forwarding flag, depth of the
    endpoint below\\
$\mathtt{distx}_v[c],\ \mathtt{size}_v[c],\ \mathtt{tau}_v[c]$
  & $\mathrm{dist}_{x_C}$, $|C|$, $\tau_C$\\
$\mathtt{arm0}_v[c],\ \mathtt{arm}_v[c],\ \mathtt{pos}_v[c]$
  & source labelled arm~$0$, side of $C$, position along $C$\\
$\mathtt{lrun}_v[c],\ \mathtt{rrun}_v[c],\ \mathtt{run}_v[c]$
  & run lengths to the left, to the right, and in total\\
$\mathtt{pmax}_v[c],\ \mathtt{Lam}_v[c]$
  & running maximum of $\mathtt{run}$, and $\lambda_C$\\
$\mathtt{alpha}_v[c],\ \mathtt{beta}_v[c],\ \mathtt{lamDe}_v[c]$
  & the two adjacent runs, and $\lambda_C^{v}$\\
$\mathtt{bBnd}_v[c],\ \mathtt{wit}_v[c]$
  & beard boundary flag, witness flag\\
$\mathtt{wmin}_v[c],\ \mathtt{wpos}_v[c],\ \mathtt{deact}_v[c]$
  & witness selection along $C$\\
$\mathtt{lrun}'_v[c],\dots,\mathtt{Lam}'_v[c]$
  & the five run variables recomputed after deactivation\\
$\mathtt{fmin}_v[c],\ \mathtt{fpos}_v[c]$
  & first position at which a longest run starts\\
$\mathtt{cyctype}_v[c],\ \mathtt{sel}_v[c]$
  & type of $C$, and selection predicate\\
\end{tabular}

\medskip\noindent
Every field is an identifier pair, a value of $\{0,\dots,N\}\cup
\{\infty\}$, or a constant-size flag, so a record occupies
$O(\log n)$ bits.
In an arbitrary configuration $v$ holds at most
$|\mathtt{Cand}_v|\le\delta(v)$ records (the rule for $\mathtt{Cand}_v$ in
Algorithm~\ref{alg:B1}), hence
$O\bigl((\delta(v)+1)\log n\bigr)$ bits, and after stabilization
exactly $\gamma_v$ records, hence $O\bigl((\gamma_v+1)\log n\bigr)$
bits; since $\sum_{v}\delta(v)=2|E| = O(n)$ in a cactus, both bounds
give a total of $O(n\log n)$ bits and an average of $O(\log n)$ bits
per process, as established in Proposition~\ref{prop:alg1-space}.

\subsubsection*{Layers 1--2: BFS tree and cycle identification}

\begin{algorithm}[tb]
\caption{Rules for process $v$ --- Layer~1 (BFS tree) and Layer~2 (cycle identification)}
\label{alg:B1}
\begin{algorithmic}[1]
\Statex \textbf{Layer 1 (BFS tree; evaluation order: $\calB_r$ from
        the root, depth $\le D$)}
\If{$v=r$}
   \State $\mathtt{par}_v \gets \bot$;\quad $\mathtt{dist}_v \gets 0$
\Else
   \State $\mathtt{par}_v \gets
     \arg\min_{u\in N(v)}
     \bigl(\min\{\mathtt{dist}_u,N\},\ \mathrm{id}(u)\bigr)$
   \State $\mathtt{dist}_v \gets
     \min\bigl\{\min\{\mathtt{dist}_{\mathtt{par}_v},N\}+1,\ N\bigr\}$
     \Comment{finite domain: out-of-range values are read as $N$}
\EndIf
\Statex
\Statex \textbf{Layer 2 (cycle identification; evaluation order:
        $\calB_r$ from the leaves, depth $\le D$)}
\State $\mathtt{Cand}_v \gets
   \{\mathrm{key}(v,u) \mid u\in E^{\mathrm{nt}}_v\}\ \cup$
\Statex \hskip 1.2em
   $\bigcup_{u\in \mathrm{ch}(v)}
   \{c\in\mathtt{Rec}_u \mid \mathtt{up}_u[c]\}$
   \Comment{at most $\delta(v)$ candidates}
\ForAll{$c \in \mathtt{Cand}_v$}
   \State $\mathtt{Src}_v[c] \gets
     \bigl\{(\bot,\mathtt{dist}_v) \mid
       \exists u\in E^{\mathrm{nt}}_v:\mathrm{key}(v,u)=c \bigr\}$
   \Statex \hskip 2.2em $\uplus\
     \bigl\{(u,\mathtt{bot}_u[c]) \mid u\in\mathrm{ch}(v),\
       c\in\mathtt{Rec}_u,\ \mathtt{up}_u[c]\bigr\}$
\EndFor
\State $\mathtt{Rec}_v \gets
   \{\, c\in\mathtt{Cand}_v \mid |\mathtt{Src}_v[c]| \le 2 \,\}$
   \Comment{$|\mathtt{Src}_v[c]|\ge3$ is impossible in a cactus:
            purge}
\ForAll{$c \in \mathtt{Rec}_v$}
   \State $\mathtt{src}_v[c] \gets |\mathtt{Src}_v[c]|$;\quad
          $\mathtt{up}_v[c] \gets (\mathtt{src}_v[c]=1)$
   \State \textbf{if} $\mathtt{src}_v[c]=1$ \textbf{then}
          $\mathtt{bot}_v[c] \gets \beta$ where
          $\mathtt{Src}_v[c]=\{(\cdot,\beta)\}$
          \textbf{else} $\mathtt{bot}_v[c] \gets \bot$
\EndFor
\end{algorithmic}
\end{algorithm}

An edge $\{u,v\}$ is a non-tree edge of $\calB_r$ exactly when
$\mathtt{par}_u\neq v$ and $\mathtt{par}_v\neq u$, which both
endpoints check locally.
Because the cycles of a cactus are edge-disjoint, each cycle contains
exactly one non-tree edge and, conversely, the fundamental cycle of a
non-tree edge is a cycle of $\calC(G)$; naming $C$ after its non-tree
edge therefore requires no prior knowledge of $V(C)$ and removes the
circularity inherent in naming a cycle after its minimum-depth vertex.

\begin{lemma}[Pinned BFS]
\label{lem:bfs-time}
Consider the rule $\mathtt{dist}_\rho=0$ and, for $v\neq\rho$,
$\mathtt{dist}_v \leftarrow \min\{\min_{u\in N(v)}\mathtt{dist}_u+1,
N\}$ over the domain $\{0,\dots,N\}$, a value read outside the domain
counting as $N$.
From every initial configuration, and under the weakly fair
distributed daemon, the following holds for every $k\ge0$: at the end
of round $k+1$ we have $\mathtt{dist}_v=d_G(\rho,v)$ for every $v$
with $d_G(\rho,v)\le k$, and the parent pointers of those vertices are
correct one round later.
In particular all distances are correct after $D+1$ rounds; and the
distances inside the ball of radius $k$ around $\rho$ are correct
after $k+1$ rounds \emph{whatever the diameter of $G$ is}.
\end{lemma}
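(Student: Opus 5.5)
We prove the claim by induction on $k$, establishing two invariants at the end of round $k+1$. The first, (a), says that $\mathtt{dist}_v=d_G(\rho,v)$ for every $v$ with $d_G(\rho,v)\le k$. The second, (b), says that $\mathtt{dist}_v\ge k+1$ for every $v$ with $d_G(\rho,v)>k$. Invariant (b) carries the real weight: the lower bound is what stops fake small distances from surviving. Both invariants are closed under later steps.

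For (a), once every variable it depends on is correct and stays correct, the rule's right-hand side is a constant. The round argument of Lemma~\ref{lem:B-layer} then shows the variable takes that constant value and keeps it.

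For (b), a step of $v$ writes a value that is at least one more than the minimum over its neighbours. If all neighbours are already at least $k$, the written value is at least $k+1$, except that the cap $N\ge n>D$ keeps this consistent. So $v$ never drops below $k+1$ again.

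The base case $k=0$ has two parts. The root satisfies $\mathtt{dist}_\rho=0$ after round $1$, since it is either activated or already disabled. Every other vertex satisfies $\mathtt{dist}\ge1$ trivially, because whatever it writes is at least $0+1$. Its initial value might be $0$, though. If so, $v$ is enabled, since any computed value is at least $1\neq 0$. During round $1$ it therefore either steps or becomes disabled, and both outcomes give $\mathtt{dist}_v\ge1$.

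For the inductive step from $k$ to $k+1$, assume both invariants hold from the end of round $k+1$ onward. Take $v$ with $d_G(\rho,v)=k+1$. Its neighbours all have distance at least $k$. By (a) and (b), those at distance exactly $k$ hold the value $k$, and the others hold at least $k$. So $v$'s prescribed value is fixed at $k+1$ from then on. Within round $k+2$, $v$ either steps or is disabled, and either way it holds $k+1$ afterwards.

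Now take $v$ with $d_G(\rho,v)>k+1$. All of its neighbours have distance greater than $k$, so by (b) they hold values at least $k+1$. Any step of $v$ therefore writes at least $k+2$. If $v$ currently holds a value of at most $k+1$, that value differs from every prescribable value, so $v$ is enabled. It stays enabled until it steps, so within round $k+2$ it steps and afterwards holds at least $k+2$. The cap at $N$ is harmless here, because $N\ge n>k+1$ whenever such a vertex exists.

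The parent claim follows one round later. Once the distances in the ball of radius $k$ are correct and, by (b), every vertex outside it holds at least $k+1$, the argmin computed by a vertex in the ball is constant, with ties broken by identifier. By the same argument, that vertex's $\mathtt{par}$ is correct after one more round.

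Setting $k=D$ gives correctness after $D+1$ rounds. The ball statement is just the invariant for that $k$, and nothing in the argument referred to $D$.

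The step I expect to be the main obstacle is invariant (b) under the distributed daemon. It requires arguing that a vertex holding a too-small value is continuously enabled, rather than merely enabled at the start of the round. This holds because every value it could be prescribed lies strictly above its current value for the whole round.
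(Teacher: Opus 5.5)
Your proof is correct and follows the paper's route. At the end of round $k+1$, your invariants (a) and (b) together are exactly the paper's upper bound $\mathtt{dist}_v\le d_G(\rho,v)$ on the ball of radius $k$ combined with its lower bound $\mathtt{dist}_v\ge\min\{d_G(\rho,v),k+1\}$; the lower bound is what rules out spurious small values, and the only difference is that you advance both as one joint induction while the paper runs two independent inductions and combines them at the end. A small benefit of your version is that the one-round delay for the parent pointers is fully justified: (b) keeps the neighbours at distance $k+1$ of a vertex at distance $k$ at values at least $k+1$, a point the paper's proof leaves implicit.
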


\begin{proof}
Write $d(v)=d_G(\rho,v)$.

\emph{Upper direction.}
We show by induction on $\rho'$ that, from the end of round $\rho'$
on, $\mathtt{dist}_v\le d(v)$ holds for every $v$ with
$d(v)\le\rho'-1$.
For $\rho'=1$ this concerns only $v=\rho$, whose rule pins
$\mathtt{dist}_\rho=0$; during round~$1$ the process $\rho$ either
executes or is disabled, so the value is $0$ at the end of round~$1$
and never changes again.
Assume the property after round $\rho'$ and let $d(v)=\rho'$.
The vertex $v$ has a neighbour $u$ with $d(u)=\rho'-1$, and
$\mathtt{dist}_u\le d(u)$ holds throughout round $\rho'+1$ by the
induction hypothesis and by the fact that the property is preserved by
every step: a process $w$ with $d(w)\le\rho'-1$ that executes sets
$\mathtt{dist}_w\le\mathtt{dist}_{u'}+1\le d(u')+1=d(w)$ for a
neighbour $u'$ on a shortest $\rho$--$w$ path.
Hence during round $\rho'+1$ the process $v$ either executes, and then
$\mathtt{dist}_v\le\mathtt{dist}_u+1\le d(v)$, or is disabled, in
which case its value already equals the one its rule prescribes, which
is at most $d(v)$ for the same reason.

\emph{Lower direction.}
The property ``$\mathtt{dist}_v\ge\min\{d(v),\rho'\}$ for every $v$''
is preserved by every step, because a process that executes sets
$\mathtt{dist}_v\ge\min\{\min_{u\in N(v)}\min\{d(u),\rho'\}+1,N\}
=\min\{d(v),\rho'+1\}\ge\min\{d(v),\rho'\}$, using
$\min_{u\in N(v)}d(u)=d(v)-1$ and $d(v)\le D\le N$.
We show by induction on $\rho'$ that it holds at the end of round
$\rho'$; it is vacuous for $\rho'=0$.
Assume it after round $\rho'$.
Every process either executes at some point of round $\rho'+1$, and
the displayed computation then gives
$\mathtt{dist}_v\ge\min\{d(v),\rho'+1\}$, or is disabled at some point
of that round, in which case its value equals the one prescribed by
its rule and the same computation applies; and the bound survives the
remaining steps of the round by the preservation property.

\emph{Conclusion.}
At the end of round $k+1$ and for $v$ with $d(v)\le k$, the upper
direction gives $\mathtt{dist}_v\le d(v)$ and the lower direction
gives $\mathtt{dist}_v\ge\min\{d(v),k+1\}=d(v)$.
The parent rule is a function of the distances of the neighbours of
$v$, all of which are at distance at most $d(v)+1$ from $\rho$, so it
is correct one round later.
\end{proof}

\begin{lemma}
\label{lem:B-cycle}
After Layers~1--2 have stabilized, for every $C\in\calC(G)$ with
non-tree edge $e_C=\{a,b\}$ and $c=\mathrm{cid}(C)=\mathrm{key}(a,b)$
we have $\{v \mid c\in\mathtt{Rec}_v\}=V(C)$; moreover
$\mathtt{src}_v[c]=2$ exactly for $v=x_C:=\mathrm{lca}(a,b)$, the
vertex of $C$ of minimum depth, and $\mathtt{Src}_{x_C}[c]$ contains
the two values $\mathtt{dist}_a$ and $\mathtt{dist}_b$.
No identifier is held by a vertex outside the corresponding cycle.
\end{lemma}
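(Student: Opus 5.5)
The plan is to fix the stabilized BFS tree $\calB_r$ (Lemma~\ref{lem:bfs-time}) and first establish the structural facts about $\calB_r$ in a cactus. Then we describe the stabilized values of Layer~2 bottom-up, by induction on the height of $v$ in $\calB_r$. Since the Layer-2 rules depend only on children and on Layer~1, the stabilized values are the unique fixed point. This fixed point is computed leaves-first, so it suffices to verify that the claimed assignment satisfies every rule.

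\textbf{Structural step.} Let $C$ have non-tree edge $e_C=\{a,b\}$. Because the cycles are edge-disjoint and each contains exactly one non-tree edge, $C$ is the fundamental cycle of $e_C$: the tree paths $a\to x_C$ and $b\to x_C$, with $x_C=\mathrm{lca}(a,b)$, plus the edge $e_C$. Since $C$ is isometric (Observation~\ref{obs:isometric}) and BFS depths are distances from $r$, $x_C$ is the unique vertex of $C$ of minimum depth. It is also the vertex of $C$ closest to $r$, and each other vertex of $C$ has its parent on $C$; we derive this from the fact that the shortest path to $r$ leaves $C$ only through the closest cut vertex. We will also record two facts. First, every $w\in V(C)\setminus\{x_C\}$ has exactly one child on $C$ unless $w\in\{a,b\}$, in which case it has none on $C$. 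Second, a vertex outside $V(C)$ has no tree edge or non-tree edge belonging to $C$.

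\textbf{Inductive claim.} For $v\neq x_C$ on the tree path from $a$ (resp.\ $b$) to $x_C$, we claim $c\in\mathtt{Rec}_v$, $\mathtt{Src}_v[c]$ is a singleton, $\mathtt{up}_v[c]=\textsf{true}$ and $\mathtt{bot}_v[c]=\mathtt{dist}_a$ (resp.\ $\mathtt{dist}_b$). At $x_C$ we claim $|\mathtt{Src}_{x_C}[c]|=2$ with second coordinates $\mathtt{dist}_a,\mathtt{dist}_b$, so $\mathtt{up}=\textsf{false}$. For $v\notin V(C)$ we claim $c\notin\mathtt{Cand}_v$.

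We prove this by induction on height. At $a$, the non-tree term contributes $(\bot,\mathtt{dist}_a)$, and no child contributes $c$: such a child would hold $c$ while lying outside $C$, contradicting the induction hypothesis. For an interior path vertex, only the unique child on $C$ forwards $c$. At $x_C$ two children forward $c$, or one child plus the non-tree term when $x_C\in\{a,b\}$, i.e.\ when $e_C$ joins a vertex to its grandparent-side neighbour. For $v$ outside $C$, keys of its own non-tree edges name other cycles, and a child $u$ forwards $c$ only if $u\in V(C)\setminus\{x_C\}$ with $\mathtt{up}_u[c]$. In that case $v=\mathtt{par}_u\in V(C)$, a contradiction. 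The multiset $\uplus$ handles the case $x_C=a$ in which both sources come via different routes, so the size is exactly $2$ and the purge filter $|\mathtt{Src}|\le2$ never removes a genuine record.

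\textbf{Main obstacle.} The delicate step is ruling out spurious identifiers, meaning records surviving from corrupted initial states. The key observation is that $\mathtt{Cand}_v$ is recomputed from scratch, without reading $\mathtt{Rec}_v$ itself. By induction from the leaves, every identifier a child forwards is a genuine key whose cycle passes through that child and not through its apex. Hence no identifier is held outside its cycle, and keys not of the form $\mathrm{key}(a,b)$ for a non-tree edge never appear. We also check that $\mathtt{src}_v[c]=2$ cannot occur at a non-apex vertex: that would require two children on $C$, or a child plus a non-tree edge of $C$, both excluded by the structural step.
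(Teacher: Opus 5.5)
Your proposal is correct and follows essentially the same route as the paper: you identify $C$ as the fundamental cycle of $e_C$, check bottom-up that interior path vertices receive $c$ from exactly one child, that $a,b$ receive it from their anchor, and that $x_C$ receives it from two sources, and you use that forwarding stops at $x_C$ and that $\mathtt{Cand}$ is rebuilt only from anchors and forwarding children. You add detail the paper leaves implicit, namely the explicit induction on height and the justification that every vertex of $C$ other than $x_C$ has its parent on $C$; incidentally, the case $x_C\in\{a,b\}$ that both you and the paper treat cannot arise in a BFS tree, because a non-tree edge joins vertices whose depths differ by at most one, while $a$ being an ancestor of $b$ would force a depth difference of $|C|-1\ge2$.
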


\begin{proof}
$V(C)$ is the union of the tree paths $a\rightsquigarrow x_C$ and
$b\rightsquigarrow x_C$ together with $e_C$.
A vertex strictly inside one of these paths has exactly one child
carrying $c$ and is not an endpoint of $e_C$, so
$|\mathtt{Src}|=1$; each of $a,b$ has an anchor and no child carrying
$c$ unless it equals $x_C$; and $x_C$ receives $c$ from its two
distinct children on the two paths or---when $x_C\in\{a,b\}$---from
one child and from its own anchor, so $|\mathtt{Src}|=2$ in either
case.
Forwarding stops exactly at $x_C$, so no vertex outside $V(C)$
obtains $c$, and $\mathtt{bot}$ copies the depth of the endpoint at
the bottom of each chain upward unchanged.
Finally $c\in\mathtt{Cand}_v$ requires an anchor or a forwarding
child, so a fabricated identifier disappears in one round.
\end{proof}

\begin{example}
Let $C$ be the $4$-cycle $v\,a\,b\,c$ with a pendant leaf at $v$ and
$r=v$, so that the stabilized depths are $0,1,2,1$.
The unique non-tree edge is $\{b,c\}$ and $c=\mathrm{key}(b,c)$; the
wave travels $b\to a\to v$ and $c\to v$, and $v$ has
$|\mathtt{Src}_v[c]|=2$, so $x_C=v$ and all four vertices hold the
record, with
$\mathtt{size}=2+1-2\cdot0+1=4$ by the rule for $\mathtt{size}$ in
Algorithm~\ref{alg:B2}.
Under a naming scheme based on the minimum-depth vertex of $C$ the
wave could not even have been started, since $x_C$ is precisely what
the wave computes.
\end{example}

\subsubsection*{Layers 3--4: cycle geometry and articulation points}

\begin{algorithm}[tb]
\caption{Rules for process $v$ --- Layer~3 (geometry of $C$) and Layer~4 (articulation points)}
\label{alg:B2}
\begin{algorithmic}[1]
\Statex \textbf{Layer 3 (evaluation order: $\calB_r$ from the root,
        depth $\le D$)}
\ForAll{$c \in \mathtt{Rec}_v$}
  \If{$\mathtt{src}_v[c]=2$}
     \Comment{$v=x_C$}
     \State let $\mathtt{Src}_v[c]=\{(\sigma_0,\beta_0),
            (\sigma_1,\beta_1)\}$ ordered by
            $\bigl(\beta_i,\mathrm{id}(\sigma_i)\bigr)$, with
            $\mathrm{id}(\bot):=-\infty$
     \State $\mathtt{distx}_v[c] \gets \mathtt{dist}_v$;\quad
            $\mathtt{size}_v[c] \gets
             \beta_0+\beta_1-2\,\mathtt{dist}_v+1$;\quad
            $\mathtt{tau}_v[c] \gets
             \lfloor(\mathtt{size}_v[c]-1)/2\rfloor$
     \State $\mathtt{arm0}_v[c] \gets \sigma_0$;\quad
            $\mathtt{arm}_v[c] \gets \bot$;\quad
            $\mathtt{pos}_v[c] \gets 0$
  \Else
     \State $p \gets \mathtt{par}_v$
     \State $\mathtt{distx}_v[c] \gets \mathtt{distx}_p[c]$;\quad
            $\mathtt{size}_v[c] \gets \mathtt{size}_p[c]$;\quad
            $\mathtt{tau}_v[c] \gets \mathtt{tau}_p[c]$
     \State \textbf{if} $\mathtt{src}_p[c]=2$ \textbf{then}
            $\mathtt{arm}_v[c] \gets
             \bigl(\mathtt{arm0}_p[c]=v \,?\, 0 : 1\bigr)$
            \textbf{else}
            $\mathtt{arm}_v[c] \gets \mathtt{arm}_p[c]$
     \State $\mathtt{pos}_v[c] \gets
            \mathtt{dist}_v-\mathtt{distx}_v[c]$ \textbf{if}
            $\mathtt{arm}_v[c]=0$, \textbf{else}
            $\mathtt{size}_v[c]-
             \bigl(\mathtt{dist}_v-\mathtt{distx}_v[c]\bigr)$
  \EndIf
  \State $\mathrm{prv}_v[c] \gets$ the $u\in N(v)$ with
         $c\in\mathtt{Rec}_u$ and
         $\mathtt{pos}_u[c]\equiv\mathtt{pos}_v[c]-1
          \pmod{\mathtt{size}_v[c]}$
  \State $\mathrm{nxt}_v[c] \gets$ the $u\in N(v)$ with
         $c\in\mathtt{Rec}_u$ and
         $\mathtt{pos}_u[c]\equiv\mathtt{pos}_v[c]+1
          \pmod{\mathtt{size}_v[c]}$
\EndFor
\Statex
\Statex \textbf{Layer 4 (local)}
\State $\mathtt{isAP}_v \gets
       \bigl(\delta(v)\ge3\bigr) \vee
       \bigl(\delta(v)=2 \wedge \mathtt{Rec}_v=\emptyset\bigr)$
\end{algorithmic}
\end{algorithm}

A direct computation from Lemma~\ref{lem:B-cycle} shows that
$\mathtt{pos}$ enumerates $V(C)$ as $w_0=x_C,w_1,\dots,w_{|C|-1}$
along $C$, consecutive positions being adjacent---the positions
$\beta_0-\mathtt{dist}_{x_C}$ and $\beta_0-\mathtt{dist}_{x_C}+1$
being joined by $e_C$ itself---so that $\mathrm{prv}$ and
$\mathrm{nxt}$ are well defined and are the two neighbours of $v$ on
$C$.
Note that $|C|$ is obtained \emph{arithmetically} from three BFS
depths; at no point does a rule range over ``all vertices carrying the
same identifier''.

\begin{lemma}
\label{lem:B-x-ap}
$x_C$ is an articulation point of $G$; consequently no run of $C$
contains $w_0$, and the runs of $C$ are intervals of the path
$w_1,\dots,w_{|C|-1}$.
\end{lemma}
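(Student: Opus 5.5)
The plan is to show that $x_C=\mathrm{lca}(a,b)$ has a neighbour outside $C$ whose side cannot reach $C\setminus\{x_C\}$ without passing through $x_C$. The consequences then follow at once, since runs consist of non-articulation-point vertices of $C$.

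I would split into two cases according to whether $x_C=r$.

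\textbf{Case $x_C=r$.} Here the model fixes $\delta(r)\ge3$, and $r$ has exactly two neighbours on $C$. So $r$ has a third neighbour $y\notin V(C)$. Any $y$--$C$ path that avoids $r$ would close a cycle with $C$ sharing edges with it, which is impossible in a cactus. More directly, the edge $\{r,y\}$ together with that path and an arc of $C$ would give a cycle through an edge of $C$ other than $C$ itself. Hence $G-r$ separates $y$ from $C\setminus\{r\}$, and $r\in\calA(G)$.

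\textbf{Case $x_C\neq r$.} Then $x_C$ has a BFS parent $p=\mathtt{par}_{x_C}$. Since $x_C$ is the unique vertex of minimum depth on $C$ (Lemma~\ref{lem:B-cycle}), $p\notin V(C)$. The same cactus argument applies: a path from $p$ to $C\setminus\{x_C\}$ avoiding $x_C$, combined with the edge $\{p,x_C\}$ and an arc of $C$, would form a second cycle sharing an edge with $C$. Thus $x_C$ separates $p$ (and hence $r$) from the rest of $C$, and $x_C\in\calA(G)$. Alternatively, one can phrase this via Observation~\ref{obs:isometric}: the block $C$ meets the block containing $\{p,x_C\}$ only in $x_C$.

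\textbf{Consequence for runs.} Since $x_C\in A_C$ is a boundary vertex, it lies in no $\emptyset$-free run. Removing $w_0$ from the cyclic order $w_0,\dots,w_{|C|-1}$ leaves the path $w_1,\dots,w_{|C|-1}$, and every maximal run of non-boundary vertices is a consecutive interval of it, so no run wraps around through $w_0$.

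The main obstacle is stating the "no second path" cactus argument cleanly, in particular confirming that $p\notin V(C)$ (that is, $x_C$ has strictly minimum depth on $C$). That fact comes from $V(C)$ being the two tree paths from $a$ and $b$ up to their LCA.
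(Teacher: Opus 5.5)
Your proof is correct and follows the paper's own two-line argument: case on whether $r\in V(C)$ (equivalently $x_C=r$); if not, $x_C$ separates $C$ from $r$; if so, $\delta(r)\ge3$ forces $r$ to be an articulation point in a cactus. You additionally spell out the ``no second cycle through an edge of $C$'' argument behind both separation claims, which the paper leaves implicit (it proves the degree-$\ge3$ fact separately, in the Layer~4 part of the proof of Theorem~\ref{thm:B-main}).
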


\begin{proof}
If $r\notin V(C)$ then $x_C$ separates $C$ from $r$; if $r\in V(C)$
then $x_C=r$ and $\delta(r)\ge3$, which in a cactus forces $r$ to be
an articulation point.
\end{proof}

\subsubsection*{Layers 5--8: runs and $\lambda_C$}

\begin{algorithm}[tb]
\caption{Rules for process $v$ --- Layers~5--8 (runs, $\lambda_C$)}
\label{alg:B3}
\begin{algorithmic}[1]
\ForAll{$c \in \mathtt{Rec}_v$}
  \Statex \quad \textbf{Layer 5 (order: $C$ by increasing
          $\mathtt{pos}$, depth $\le|C|$)}
  \State $\mathtt{lrun}_v[c] \gets 0$ \textbf{if}
         $\mathtt{isAP}_v \vee \mathtt{pos}_v[c]=0$, \textbf{else}
         $1+\mathtt{lrun}_{\mathrm{prv}_v[c]}[c]$
  \Statex \quad \textbf{Layer 6 (order: $C$ by decreasing
          $\mathtt{pos}$, depth $\le|C|$)}
  \State $\mathtt{rrun}_v[c] \gets 0$ \textbf{if}
         $\mathtt{isAP}_v \vee \mathtt{pos}_v[c]=0$, \textbf{else}
         $1+\mathtt{rrun}_{\mathrm{nxt}_v[c]}[c]$
  \State $\mathtt{run}_v[c] \gets
         \mathtt{lrun}_v[c]+\mathtt{rrun}_v[c]-1$ \textbf{if}
         $\neg\mathtt{isAP}_v \wedge \mathtt{pos}_v[c]\ge1$,
         \textbf{else} $0$
  \Statex \quad \textbf{Layer 7 (order: $C$ by increasing
          $\mathtt{pos}$, depth $\le|C|$)}
  \State $\mathtt{pmax}_v[c] \gets 0$ \textbf{if}
         $\mathtt{pos}_v[c]=0$, \textbf{else}
         $\max\bigl\{\mathtt{run}_v[c],
          \mathtt{pmax}_{\mathrm{prv}_v[c]}[c]\bigr\}$
  \Statex \quad \textbf{Layer 8 (order: $\calB_r$ from the root,
          depth $\le D$)}
  \State $\mathtt{Lam}_v[c] \gets
         \mathtt{pmax}_{\mathrm{prv}_v[c]}[c]$ \textbf{if}
         $\mathtt{pos}_v[c]=0$, \textbf{else}
         $\mathtt{Lam}_{\mathtt{par}_v}[c]$
         \Comment{$\mathrm{prv}$ of $x_C$ is $w_{|C|-1}$}
\EndFor
\end{algorithmic}
\end{algorithm}

By Lemma~\ref{lem:B-x-ap} the recursions of Layers~5--7 are grounded
at $w_0$, so they are genuine path evaluations and
$\mathtt{run}_v[c]$ is the length of the run of $C$ containing $v$.
The global maximum $\lambda_C$ is obtained as a running maximum and is
read by $x_C$ from its neighbour at position $|C|-1$, then broadcast
downwards; no other form of aggregation is used.

\subsubsection*{Layers 9--13: beards, witnesses and deactivation}

\begin{algorithm}[tb]
\caption{Rules for process $v$ --- Layers~9--13 (beards, witnesses, deactivation)}
\label{alg:B4}
\begin{algorithmic}[1]
\Statex \textbf{Layer 9 (order: $\calB_r$ from the leaves, depth
        $\le D$)}
\State $\mathtt{chain}_v \gets \textsf{true}$ \textbf{if}
       $\delta(v)=1$; \textbf{else if}
       $\mathtt{Rec}_v=\emptyset \wedge \delta(v)=2 \wedge
        |\mathrm{ch}(v)|=1$ \textbf{then}
       $\mathtt{chain}_v\gets\mathtt{chain}_u$ for the unique
       $u\in\mathrm{ch}(v)$; \textbf{else}
       $\mathtt{chain}_v\gets\textsf{false}$
\State $\mathrm{TC}_v \gets \{u\in\mathrm{ch}(v) \mid
       \forall c\in\mathtt{Rec}_v\cap\mathtt{Rec}_u:
       u\notin\{\mathrm{prv}_v[c],\mathrm{nxt}_v[c]\}\}$
       \Comment{tree children of $v$}
\Statex
\Statex \textbf{Layers 10--11 (local, then $C$ by increasing
        $\mathtt{pos}$ and $\calB_r$ from the root)}
\ForAll{$c \in \mathtt{Rec}_v$}
  \State $\mathtt{bBnd}_v[c] \gets
         \bigl(|\mathtt{Rec}_v|=1\bigr) \wedge
         \bigl(\delta(v)=3\bigr) \wedge
         \bigl(|\mathrm{TC}_v|=1\bigr) \wedge
         \mathtt{chain}_u \wedge
         \bigl(\mathtt{pos}_v[c]\neq0 \vee v=r\bigr)$,
         $\mathrm{TC}_v=\{u\}$
  \State $\mathtt{alpha}_v[c] \gets
         \mathtt{run}_{\mathrm{prv}_v[c]}[c]$;\quad
         $\mathtt{beta}_v[c] \gets
         \mathtt{run}_{\mathrm{nxt}_v[c]}[c]$
  \State $\mathtt{lamDe}_v[c] \gets
         \max\bigl\{\mathtt{Lam}_v[c],\
         \mathtt{alpha}_v[c]+1+\mathtt{beta}_v[c]\bigr\}$
         \Comment{$=\lambda_C^{v}$}
  \State $\mathtt{wit}_v[c] \gets \mathtt{bBnd}_v[c] \wedge
         \bigl(\mathtt{Lam}_v[c]<\mathtt{tau}_v[c]\bigr) \wedge
         \bigl(\mathtt{lamDe}_v[c]>\mathtt{tau}_v[c]\bigr)$
  \State $\mathtt{wmin}_v[c] \gets \infty$ \textbf{if}
         $\mathtt{pos}_v[c]=0$, \textbf{else}
         $\min\bigl\{\pi_v[c],\
          \mathtt{wmin}_{\mathrm{prv}_v[c]}[c]\bigr\}$ where
         $\pi_v[c]=\mathtt{pos}_v[c]$ if $\mathtt{wit}_v[c]$ and
         $\infty$ otherwise
  \State $\mathtt{wpos}_v[c] \gets
         \mathtt{wmin}_{\mathrm{prv}_v[c]}[c]$ \textbf{if}
         $\mathtt{pos}_v[c]=0$, \textbf{else}
         $\mathtt{wpos}_{\mathtt{par}_v}[c]$
  \State $\mathtt{deact}_v[c] \gets \mathtt{wit}_v[c] \wedge
         \bigl(\mathtt{pos}_v[c]=\mathtt{wpos}_v[c]\bigr)$
         \Comment{exactly one witness per cycle}
\EndFor
\Statex
\Statex \textbf{Layer 12 (order: $\calB_r$ from the root, depth
        $\le D$)}
\State $\mathtt{deAbv}_v \gets \textsf{false}$ \textbf{if}
       $\mathtt{par}_v=\bot \vee \mathtt{Rec}_v\neq\emptyset$;
       \textbf{else if} $\mathtt{Rec}_{\mathtt{par}_v}\neq\emptyset$
       \textbf{then} $\mathtt{deAbv}_v \gets
       \bigl(\exists c\in\mathtt{Rec}_{\mathtt{par}_v}:
       \mathtt{deact}_{\mathtt{par}_v}[c]\bigr)$;
       \textbf{else} $\mathtt{deAbv}_v \gets
       \mathtt{deAbv}_{\mathtt{par}_v}$
\Statex
\Statex \textbf{Layer 13 (runs after deactivation: Layers~5--8 with
        $\mathtt{isAP}$ replaced by
        $\mathtt{isAP}_v\wedge\neg\mathtt{deact}_v[c]$)}
\State compute $\mathtt{lrun}'_v[c],\mathtt{rrun}'_v[c],
       \mathtt{run}'_v[c],\mathtt{pmax}'_v[c],\mathtt{Lam}'_v[c]$
       by the rules of Algorithm~\ref{alg:B3} with that replacement
\end{algorithmic}
\end{algorithm}

The clause $\mathtt{pos}_v[c]\neq0 \vee v=r$ in the rule for
$\mathtt{bBnd}$ excludes
$x_C\neq r$, whose branch contains $r$ and is therefore not a beard;
the clause $\delta(v)=3$ together with $|\mathtt{Rec}_v|=1$ encodes
(B2), and $\mathtt{chain}_u$ encodes (B1) and~(B3).
Selecting the witness of smallest \emph{position} rather than of
smallest identifier makes the tie-break depend only on the leader and
on $\calB_r$.

\subsubsection*{Layers 14--15: classification and output}

\begin{algorithm}[tb]
\caption{Rules for process $v$ --- Layers~14--15 (classification and output)}
\label{alg:B5}
\begin{algorithmic}[1]
\Statex \textbf{Layer 14 (order: $C$ by increasing $\mathtt{pos}$,
        then $\calB_r$ from the root)}
\ForAll{$c \in \mathtt{Rec}_v$}
  \State $\mathtt{fmin}_v[c] \gets \infty$ \textbf{if}
         $\mathtt{pos}_v[c]=0$, \textbf{else}
         $\min\bigl\{\varphi_v[c],\
          \mathtt{fmin}_{\mathrm{prv}_v[c]}[c]\bigr\}$ where
         $\varphi_v[c]=\mathtt{pos}_v[c]$ if
         $\neg\mathtt{isAP}_v\wedge\mathtt{lrun}_v[c]=1\wedge
          \mathtt{run}_v[c]=\mathtt{Lam}_v[c]$, and $\infty$
         otherwise
  \State $\mathtt{fpos}_v[c] \gets
         \mathtt{fmin}_{\mathrm{prv}_v[c]}[c]$ \textbf{if}
         $\mathtt{pos}_v[c]=0$, \textbf{else}
         $\mathtt{fpos}_{\mathtt{par}_v}[c]$
  \If{$\mathtt{Lam}_v[c]>\mathtt{tau}_v[c]$}
       \State $\mathtt{cyctype}_v[c] \gets \textsf{twin}$
  \ElsIf{$\mathtt{Lam}_v[c]=\mathtt{tau}_v[c]$}
       \State $\mathtt{cyctype}_v[c] \gets \textsf{unit}$
  \ElsIf{$\mathtt{wpos}_v[c]\neq\infty$}
       \State $\mathtt{cyctype}_v[c] \gets \textsf{latent-twin}$
  \Else
       \State $\mathtt{cyctype}_v[c] \gets \textsf{zero}$
  \EndIf
  \If{$\mathtt{cyctype}_v[c]\in\{\textsf{twin},
       \textsf{latent-twin}\}$}
     \State $\mathtt{sel}_v[c] \gets
        \neg\mathtt{isAP}_v \wedge
        \mathtt{run}'_v[c]=\mathtt{Lam}'_v[c] \wedge
        \bigl(\mathtt{lrun}'_v[c]=1 \vee \mathtt{rrun}'_v[c]=1\bigr)$
  \ElsIf{$\mathtt{cyctype}_v[c]=\textsf{unit}$}
     \State $\mathtt{sel}_v[c] \gets
        \neg\mathtt{isAP}_v \wedge
        \mathtt{run}_v[c]=\mathtt{Lam}_v[c] \wedge
        \mathtt{lrun}_v[c]=1 \wedge
        \mathtt{pos}_v[c]=\mathtt{fpos}_v[c]$
  \Else
     \State $\mathtt{sel}_v[c] \gets \textsf{false}$
  \EndIf
\EndFor
\Statex
\Statex \textbf{Layer 15 (local): output}
\State $\mathtt{inMVS}_v \gets
       \bigl(\delta(v)=1 \wedge \neg\mathtt{deAbv}_v\bigr) \vee
       \bigl(\exists c\in\mathtt{Rec}_v: \mathtt{sel}_v[c]\bigr)$
\end{algorithmic}
\end{algorithm}

For a twin- or latent twin-cycle the two selected vertices are the two
endpoints of the unique longest deactivated run
(Lemma~\ref{lem:unique-run}), which are exactly the vertices of that
run with $\mathtt{lrun}'=1$ or $\mathtt{rrun}'=1$; the witness
$v^{*}_C$ itself is excluded by $\neg\mathtt{isAP}_v$, and is in any
case internal to the run by Lemma~\ref{lem:unique-run}.
For a unit-cycle, the rule for $\mathtt{fmin}$ marks the first vertex
of a longest run and that for $\mathtt{fpos}$ broadcasts the smallest such position, so exactly one of the
at most two candidate runs is used.

\subsection{Correctness and Complexity of Algorithm~1}
\label{sec:alg1-analysis}

We first show that the rules of
Rule~Sets~\ref{alg:B1}--\ref{alg:B5} have a unique fixed point and
that this fixed point is the canonical set $\Sstar$ of
Definition~\ref{def:canonical}; the round and space bounds announced
in Section~\ref{sec:alg1} then follow.

\begin{theorem}
\label{thm:B-main}
The rules of Rule~Sets~\ref{alg:B1}--\ref{alg:B5} admit exactly one
legitimate configuration, that is, exactly one configuration in which
no process is enabled; its output is
$\{v \mid \mathtt{inMVS}_v\}=\Sstar$, a maximum \MVS\ of $G$
containing no articulation point.
Moreover, under the weakly fair distributed daemon every execution
reaches that configuration within $O(D)$ rounds, and the algorithm is
silent and uses $O\bigl((\gamma_v+1)\log n\bigr)$ bits per process
after stabilization.
\end{theorem}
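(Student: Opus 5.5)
The proof rests on Lemma~\ref{lem:B-layer}. First I will verify that Rule~Sets~\ref{alg:B1}--\ref{alg:B5} satisfy its hypotheses. This means listing the fifteen layers in order and checking, for each rule, that it reads only three kinds of data: the local topology, variables of earlier layers at $v$ or its neighbours, and same-layer variables of in-neighbours in the stated evaluation order. The evaluation orders are $\calB_r$ from the root or from the leaves, or $C$ by increasing or decreasing $\mathtt{pos}$. Layers~5--7 and~14 are acyclic because, by Lemma~\ref{lem:B-x-ap}, every recursion along $C$ is grounded at $w_0=x_C$ through the clause $\mathtt{pos}_v[c]=0$. The broadcasts (Layers~8, 11, 12, 14) read only $\mathtt{par}_v$, so they are acyclic as well.

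There is one subtlety. The orders themselves ($\calB_r$, $\mathtt{pos}$) are defined by earlier-layer variables, and the index set $\mathtt{Rec}_v$ is layer-2 data. I will handle this by applying the lemma in stages. Lemma~\ref{lem:bfs-time} shows that Layer~1 is correct after $D+2$ rounds. After that, $\calB_r$ is fixed and Layer~2 is a genuine leaf-to-root layer of depth at most $D$. Once it stabilizes, Lemma~\ref{lem:B-cycle} fixes the record sets. From then on all later orders are fixed acyclic orientations, and Lemma~\ref{lem:B-layer} applies verbatim to the remaining layers.

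With that in place, summing the depths gives the time bound. Each tree layer has depth at most $D$. Each cycle layer has depth at most $|C|\le 2D+1$, since $C$ is isometric (Observation~\ref{obs:isometric}) and therefore $|C|\le 2\,\mathrm{diam}(C)+1$. There are constantly many layers, so the bound is $1+\sum(h_i+1)=O(D)$ rounds. The lemma also gives that after this point no process is enabled. Hence every execution reaches a legitimate configuration, and it is silent by definition.

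Uniqueness of the legitimate configuration then follows from the same lemma. If no process is enabled in $\gamma$, then $\gamma$ is a fixed point, so the execution starting at $\gamma$ never changes. By the lemma, $\gamma$ therefore coincides with the prescribed values $\pi(y)$, which do not depend on the initial configuration. So there is exactly one legitimate configuration.

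The main work is showing that this fixed point outputs $\Sstar$. I will go layer by layer, proving that each prescribed value equals its intended combinatorial quantity:
\begin{itemize}
\item $\mathtt{pos}$ enumerates $C$ as $w_0,\dots,w_{|C|-1}$ (this uses Lemma~\ref{lem:B-cycle} and the size formula);
\item $\mathtt{isAP}$ is correct, by the cactus degree criterion;
\item $\mathtt{run}$ is the run length, $\mathtt{Lam}=\lambda_C$, and $\mathtt{lamDe}=\lambda_C^v$, the last by the merging formula in the proof of Lemma~\ref{lem:unique-run};
\item $\mathtt{bBnd}$ characterises beard boundary vertices.
\end{itemize}

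I expect the $\mathtt{bBnd}$ step to be the main obstacle. It must be checked against (B1)--(B3) and against the orientation of $\calB_r$: the branch $G_v$ has to hang below $v$, which requires $v\neq x_C$ unless $v=r$. It must also be checked that $\delta(v)=3$ with $|\mathtt{Rec}_v|=1$ is equivalent to $v$ lying on exactly one cycle with exactly one pendant branch. For this I will use the case analysis of Lemma~\ref{lem:branch-value}.

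Next I will show that $\mathtt{deact}$ selects exactly one witness per latent twin-cycle. The paper's $\Sstar$ fixes the witness of smallest index, and the algorithm uses smallest position. The statement should be read with the canonical choices of Definition~\ref{def:canonical} instantiated by this tie-break, which is legitimate because Proposition~\ref{prop:construction} and Lemma~\ref{lem:size} hold for any choice of witness and of unit run. With this, $\mathtt{cyctype}$ realises Definition~\ref{def:cycle-types}. The $\mathtt{sel}$ rules then pick $a_C,d_C$ (by uniqueness in Lemma~\ref{lem:unique-run}) or a single $u_C$. Finally, $\mathtt{deAbv}$ removes exactly the leaves $z_C$, because a leaf lies below a deactivated vertex precisely when its beard is $G_{v^*_C}$.

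The conclusion that $\Sstar$ is a maximum \MVS\ avoiding $\calA(G)$ then follows from Lemma~\ref{lem:size}, Proposition~\ref{prop:construction} and Theorem~\ref{thm:mvn}.

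The space bound follows from the variable table. After stabilization $|\mathtt{Rec}_v|=\gamma_v$ by Lemma~\ref{lem:B-cycle}. Each record has constantly many fields of $O(\log n)$ bits, and the constant part also takes $O(\log n)$ bits.
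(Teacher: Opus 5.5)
Your plan has the same structure as the paper's proof. Layer~1 is handled by Lemma~\ref{lem:bfs-time}, and the remaining layers by Lemma~\ref{lem:B-layer} with depths at most $D$ or $|C|-1\le 2D$. Uniqueness comes from the fixed-point equations; your route through the trivial execution from a terminal configuration is the paper's induction in another form. Then the fixed point is identified with $\Sstar$ layer by layer.

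\textbf{The failing step.} The step you pass over quickly fails: ``$\mathtt{deact}$ selects exactly one witness per latent twin-cycle''. You correctly note that the clause $\mathtt{pos}_v[c]\neq0\vee v=r$ lets the leader $r=x_C=w_0$ be a beard boundary vertex. You do not follow this case through Layer~11, however.
\begin{itemize}
\item There $\mathtt{wmin}_{w_0}[c]$ is hard-wired to $\infty$, and $\mathtt{wpos}_{w_0}[c]$ reads $\mathtt{wmin}_{w_{|C|-1}}[c]$. So the running minimum ranges over positions $1,\dots,|C|-1$ only.
\item If $r$ is the only witness of $C$, then $\mathtt{wpos}=\infty$ on all of $C$ and no $\mathtt{deact}$ flag is raised.
\item Consequently $\mathtt{cyctype}$ returns \textsf{zero}, nothing is selected on $C$, and $z_C$ stays in the output.
\end{itemize}
A concrete instance: take the $9$-cycle $r,w_1,w_2,p,w_4,w_5,q,w_7,w_8$, a pendant leaf at $r$, and at each of $p,q$ a bridge to a vertex carrying two leaves (so these branches are not beards). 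Choose leader $r$, which has $\delta(r)=3$.
\begin{itemize}
\item The parameters are $\tau_C=4$, $\lambda_C=2$ and $\lambda_C^{r}=5$, so $r$ is the unique witness.
\item Hence $\mu(G)=1+5=6$.
\item The unique fixed point of the rules outputs only the five leaves.
\end{itemize}
The paper's proof asserts the same step (``$\mathtt{deact}$ marks exactly one witness per cycle''). So the gap is in the theorem as stated, not in your choice of route.

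\textbf{Why fixing $\mathtt{wmin}$ is not enough.} Layer~13 keeps the grounding clause $\mathtt{pos}_v[c]=0$. Thus $\mathtt{lrun}'$, $\mathtt{rrun}'$ and $\mathtt{run}'$ are $0$ at $w_0$ even when $w_0$ is deactivated, and the $\{r\}$-free run, which wraps around position~$0$, is never assembled. Your appeal to Lemma~\ref{lem:B-x-ap} (runs are intervals of $w_1,\dots,w_{|C|-1}$) is valid for $\emptyset$-free runs. It is not valid for $\{v^*_C\}$-free runs when $v^*_C=w_0$.

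\textbf{What is needed.} For your verification to go through, you must either exclude this position of the leader or patch the rules. One possible patch:
\begin{itemize}
\item Start the running minimum with $\pi_{w_0}[c]$.
\item When $w_0$ is the deactivated witness, broadcast $\mathtt{alpha}_{w_0}[c]$ and $\mathtt{beta}_{w_0}[c]$.
\item Select the vertices at positions $|C|-\mathtt{alpha}_{w_0}[c]$ and $\mathtt{beta}_{w_0}[c]$. By Lemma~\ref{lem:unique-run} these are exactly $a_C$ and $d_C$.
\end{itemize}

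\textbf{A smaller point.} The cycle evaluation orders are defined through $\mathtt{pos}$, which is a Layer-3 variable. Your staging must therefore also wait for Layer~3 to stabilize before treating these orders as fixed. This costs only $O(D)$ further rounds.
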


\begin{proof}
\emph{Uniqueness of the legitimate configuration.}
In a configuration in which no process is enabled, every variable
equals the value prescribed by its rule.
The rule of Layer~1 pins $\mathtt{dist}_r=0$ and is the
distance-plus-one rule elsewhere, whose only fixed point over the
finite domain $\{0,\dots,N\}$ is $\mathtt{dist}_v=d_G(r,v)$
(Lemma~\ref{lem:bfs-time}), together with the tie-broken parent.
Layers~2--15 satisfy the hypotheses of Lemma~\ref{lem:B-layer}: the
rule of a variable of layer~$i$ refers only to variables of layers
$j<i$ at $v$ or at a neighbour of $v$, and to variables of layer~$i$
at in-neighbours of $v$ in an acyclic evaluation order.
By induction on the layer and, inside a layer, on the position in the
evaluation order, the value of every variable in such a configuration
is therefore uniquely determined by $G$.
Hence there is exactly one legitimate configuration.

\emph{The legitimate configuration has the intended values.}
We follow the layers.  Throughout, $C$ denotes a cycle component with
non-tree edge $e_C=\{a,b\}$ and $c=\mathrm{cid}(C)=\mathrm{key}(a,b)$.

\smallskip\noindent
\emph{Layers 1--2 (BFS tree, cycle membership).}
Lemma~\ref{lem:bfs-time} identifies the values of Layer~1, and
Lemma~\ref{lem:B-cycle} then gives
$\{v \mid c\in\mathtt{Rec}_v\}=V(C)$, the identification of $x_C$ as
the unique vertex of $C$ with $\mathtt{src}[c]=2$, and the fact that
no vertex outside $V(C)$ holds $c$.
In particular $\mathtt{Rec}_v$ is in bijection with the set of cycles
through $v$, so $|\mathtt{Rec}_v|=\gamma_v$.

\smallskip\noindent
\emph{Layer 3 (geometry of $C$).}
Since $x_C=\mathrm{lca}(a,b)$ and both $a\rightsquigarrow x_C$ and
$b\rightsquigarrow x_C$ are tree paths, we have
$|C| = (\mathtt{dist}_a-\mathtt{dist}_{x_C}) +
(\mathtt{dist}_b-\mathtt{dist}_{x_C}) + 1$, which is the value
assigned to $\mathtt{size}$; hence $\mathtt{tau}=\tau_C$ as well.
Walking down the two tree paths from $x_C$, the rule for
$\mathtt{pos}$ enumerates $V(C)$ as $w_0=x_C,w_1,\dots,w_{|C|-1}$
along $C$, the arm labelled $0$ receiving the positions
$1,2,\dots$ and the arm labelled $1$ the positions
$|C|-1,|C|-2,\dots$; consecutive positions are adjacent, the two
positions on either side of $e_C$ being joined by $e_C$ itself.
Consequently $\mathrm{prv}_v[c]$ and $\mathrm{nxt}_v[c]$ are well
defined and are exactly the two neighbours of $v$ on $C$.
Note that $|C|$ is obtained \emph{arithmetically} from three BFS
depths; at no point does a rule range over ``all vertices carrying the
same identifier''.

\smallskip\noindent
\emph{Layer 4 ($\mathtt{isAP}$ computes $\calA(G)$).}
Let $\delta(v)\ge3$.  The cycles through $v$ are edge-disjoint and
each of them uses exactly two of the edges incident to $v$, so either
some edge at $v$ lies on no cycle --- and is then a bridge, whose
removal together with $v$ disconnects its other endpoint --- or $v$
lies on at least two cycles, which $v$ separates.
In both cases $v\in\calA(G)$.
Let $\delta(v)=2$.  If $v$ lies on a cycle $C$, then $C-v$ is a path
joining the two neighbours of $v$, and every other vertex reaches $C$
without passing through $v$, so $v\notin\calA(G)$; if $v$ lies on no
cycle, both its edges are bridges and $v\in\calA(G)$.
By Lemma~\ref{lem:B-cycle} the condition ``$v$ lies on no cycle'' is
exactly $\mathtt{Rec}_v=\emptyset$.
Finally a leaf is never an articulation point, and the rule assigns
$\textsf{false}$ to it.
These three cases are precisely the rule of Layer~4.

\smallskip\noindent
\emph{Layers 5--8 ($\mathtt{run}$ and $\mathtt{Lam}=\lambda_C$).}
By Lemma~\ref{lem:B-x-ap} the vertex $w_0=x_C$ is an articulation
point, so no $\emptyset$-free run of $C$ contains $w_0$ and every run
is an interval of the path $w_1,\dots,w_{|C|-1}$.
The recursions of Layers~5 and~6 are therefore grounded at $w_0$ and
are genuine path evaluations: an immediate induction gives that
$\mathtt{lrun}_v[c]$ (resp.\ $\mathtt{rrun}_v[c]$) is the number of
vertices of the run containing $v$ that lie between the beginning
(resp.\ the end) of that run and $v$, inclusive.
Hence $\mathtt{run}_v[c]=\mathtt{lrun}_v[c]+\mathtt{rrun}_v[c]-1$ is
the length of the run containing $v$, and it is $0$ when $v$ is a
boundary vertex or $v=w_0$.
Layer~7 accumulates the maximum of $\mathtt{run}$ along
$w_1,\dots,w_{|C|-1}$, so $\mathtt{pmax}_{w_{|C|-1}}[c]=\lambda_C$;
Layer~8 lets $x_C$ read that value from $\mathrm{prv}_{x_C}[c]=
w_{|C|-1}$ and broadcasts it along $\calB_r$, so
$\mathtt{Lam}_v[c]=\lambda_C$ for every $v\in V(C)$.

\smallskip\noindent
\emph{Layers 9--11 (beards, $\lambda_C^{v}$, witnesses).}
The rule for $\mathtt{chain}$ is a leaf-to-root evaluation along
$\calB_r$ whose fixed point is: $\mathtt{chain}_v=\textsf{true}$ if
and only if the subtree of $\calB_r$ rooted at $v$ is a path whose
vertices all have degree two in $G$ and lie on no cycle, except its
bottom vertex, which is a leaf of $G$.
This is exactly (B1) and~(B3) for the part of the graph below $v$.
For a boundary vertex $v$ of $C$, the branch $G_v$ is a beard if and
only if, in addition, $v$ lies on exactly one cycle and carries
exactly one edge outside that cycle, and $G_v$ does not contain the
root: these are the clauses $|\mathtt{Rec}_v|=1$, $\delta(v)=3$ and
$|\mathrm{TC}_v|=1$ with $\mathtt{chain}_u$, and the clause
$\mathtt{pos}_v[c]\neq0 \vee v=r$, which excludes $v=x_C\neq r$, whose
branch contains $r$ and is therefore not a beard.
Hence $\mathtt{bBnd}_v[c]$ is the predicate ``$v$ is a beard boundary
vertex of $C$'' of Definition~\ref{def:cycle-types}.
Since $\mathtt{run}$ vanishes on boundary vertices, $\mathtt{alpha}$
and $\mathtt{beta}$ are the lengths $\alpha_v,\beta_v$ of the two
$\emptyset$-free runs adjacent to $v$, so
$\mathtt{lamDe}_v[c]=\max\{\lambda_C,\alpha_v+1+\beta_v\}=
\lambda_C^{v}$ by the computation in the proof of
Lemma~\ref{lem:unique-run}, and $\mathtt{wit}_v[c]$ is the predicate
``$v$ is a witness of the latent twin-cycle $C$''.
Finally $\mathtt{wmin}$ is a running minimum of the positions of the
witnesses, $\mathtt{wpos}$ broadcasts the smallest of them, and
$\mathtt{deact}$ marks exactly one witness per cycle.

\smallskip\noindent
\emph{Layers 12--13 (deactivation).}
The rule for $\mathtt{deAbv}$ is a top-down evaluation along
$\calB_r$ that propagates the deactivation flag of a boundary vertex
to the vertices of the tree components hanging below it and stops at
the next cycle; since the beard $G_{v^{*}_C}$ lies below $v^{*}_C$ in
$\calB_r$, its leaf $z_C$ is exactly the vertex with
$\delta=1$ and $\mathtt{deAbv}=\textsf{true}$.
Layer~13 repeats Layers~5--8 with $\mathtt{isAP}$ replaced by
$\mathtt{isAP}\wedge\neg\mathtt{deact}$, \ie with the selected witness
deactivated, so $\mathtt{run}'$ and $\mathtt{Lam}'$ are the run
lengths and the maximum run length of the $X_C$-free runs, where
$X_C=\emptyset$ for a twin-cycle and $X_C=\{v^{*}_C\}$ for a latent
twin-cycle.

\smallskip\noindent
\emph{Layers 14--15 (classification and output).}
The four cases of the rule for $\mathtt{cyctype}$ are, in order,
$\lambda_C>\tau_C$, $\lambda_C=\tau_C$, $\lambda_C<\tau_C$ with a
witness, and $\lambda_C<\tau_C$ without one: this is verbatim
Definition~\ref{def:cycle-types}.
For a twin- or latent twin-cycle the longest $X_C$-free run is unique
(Lemma~\ref{lem:unique-run}) and has length
$\lambda_C^{X_C}>\tau_C\ge1$, hence at least two vertices; its two
endpoints are exactly the vertices $v$ of that run with
$\mathtt{lrun}'_v[c]=1$ or $\mathtt{rrun}'_v[c]=1$, and they are not
articulation points, again by Lemma~\ref{lem:unique-run}.
The conjunct $\neg\mathtt{isAP}_v$ therefore excludes only the witness
$v^{*}_C$, which is internal to the run.
For a unit-cycle, $\mathtt{fmin}$ marks the first vertex of each
longest run and $\mathtt{fpos}$ broadcasts the smallest such position,
so exactly one of the at most two candidate runs is used and exactly
one vertex of it is selected.
For a zero-cycle nothing is selected.
Layer~15 finally sets $\mathtt{inMVS}_v$ for every leaf whose beard
was not deactivated and for every selected vertex of a cycle.
Comparing with Definition~\ref{def:canonical}, we conclude
$\{v\mid\mathtt{inMVS}_v\}=\Sstar$, with the two tie-breaks resolved
by position instead of by index; since Lemmas~\ref{lem:size} and
\ref{lem:local} and Proposition~\ref{prop:construction} hold for
\emph{every} admissible choice of $P_C$, $u_C$ and $v^{*}_C$, this set
is a maximum \MVS\ by Theorem~\ref{thm:mvn}, and it contains no
articulation point by Lemma~\ref{lem:size}.

\emph{Convergence and silence.}
By Lemma~\ref{lem:bfs-time} the variables of Layer~1 hold their
prescribed values from the end of round $D+1$ on, and the parent
pointers one round later.
Every one of the remaining $14$ layers has depth $O(D)$: the layers
evaluated along $\calB_r$ have depth at most $D$, and the layers
evaluated along a cycle $C$ have depth at most $|C|-1\le2D$, because a
cycle component is an isometric subgraph of $G$
(Observation~\ref{obs:isometric}) and hence
$\lfloor|C|/2\rfloor\le D$.
By Lemma~\ref{lem:B-layer} the system reaches the legitimate
configuration within $O(D)$ further rounds.
Silence is immediate, a legitimate configuration being a fixed point
of the rules.
The space bounds are those established in the list of variables above
and are restated in Proposition~\ref{prop:alg1-space}.
\end{proof}

\begin{proposition}
\label{prop:alg1-time}
Algorithm~1 stabilizes in $O(D)$ rounds under the distributed daemon.
\end{proposition}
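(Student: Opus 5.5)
The plan is to read the bound off Lemma~\ref{lem:B-layer}, which gives it as soon as every layer has an explicit depth bound in terms of $D$. I would first treat Layer~1 separately. By Lemma~\ref{lem:bfs-time}, from the end of round $D+1$ onwards every $\mathtt{dist}_v$ equals $d_G(r,v)$, and the parent pointers are correct from the end of round $D+2$. After that round, Layer~1 is constant along the execution. From that point on we may therefore treat the Layer~1 variables as inputs of the remaining layers; that is, we regard them as part of the ``local topology'' referred to in Lemma~\ref{lem:B-layer}, and apply the lemma to the suffix of the execution that starts at round $D+3$. The suffix is itself an execution from an arbitrary configuration, so the lemma applies to it.

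Next I would bound the depth $h_i$ of each of Layers~2--15, checking each against its stated evaluation order.
\begin{itemize}
\item The layers evaluated along $\calB_r$ (Layers~2, 3, 8, 9, 12, and the broadcast parts of 11 and 14) have depth at most the height of $\calB_r$, which is at most $D$.
\item The layers evaluated along a single cycle $C$ in increasing or decreasing $\mathtt{pos}$ (Layers~5--7, their primed copies in 13, the running minima in 11 and 14) have depth at most $|C|-1$. By Observation~\ref{obs:isometric}, $\lfloor |C|/2\rfloor\le D$, so $|C|-1\le 2D$.
\item Composite layers such as 10--11 and 14 chain one cycle pass with one tree pass, so their depth is at most $3D$.
\item The local Layers~4 and 15 have depth $0$.
\end{itemize}
Because there are only a constant number of layers, namely $15$, the sum $1+\sum_i (h_i+1)$ in Lemma~\ref{lem:B-layer} is $O(D)$. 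All variables then hold their prescribed values after $O(D)$ further rounds, and no process is enabled. By Theorem~\ref{thm:B-main} this is the unique legitimate configuration, with output $\Sstar$. Adding the $D+2$ rounds for Layer~1 gives the claimed $O(D)$ bound.

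The main obstacle is the hypothesis of Lemma~\ref{lem:B-layer} that the dependency relation is acyclic with the stated depths. In a corrupted configuration the orientations used by Layers~2--14 ($\calB_r$, and the $\mathtt{pos}$ order on $C$) are derived from variables that are themselves still changing. I would handle this exactly by the suffix argument above. Once Layer~1 is frozen, $\calB_r$ is the true BFS tree, so the Layer~2 evaluation order is fixed and acyclic. Layer~3 is evaluated along that fixed tree, so after it stabilizes $\mathtt{pos}$, $\mathrm{prv}$ and $\mathrm{nxt}$ are fixed and correct, and the cycle orders become fixed acyclic paths grounded at $w_0=x_C$ (Lemma~\ref{lem:B-x-ap}). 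Formally, I would apply Corollary~\ref{cor:B-local} iteratively: freeze one group of layers, restart the analysis from the end of the round in which that group is stable, and repeat. Each of the constantly many restarts costs $O(D)$ rounds, so the total remains $O(D)$. I would also check that fabricated records outside $\mathtt{Cand}_v$ disappear within one round, as stated in Lemma~\ref{lem:B-cycle}, so they do not create spurious dependency chains.
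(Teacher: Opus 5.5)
Your proposal is correct and follows the paper's proof. Layer~1 is stable after $D+2$ rounds by Lemma~\ref{lem:bfs-time}. Lemma~\ref{lem:B-layer} then applies to the remaining constantly many layers, each of depth $O(D)$: tree orders have depth at most $D$, and cycle orders have depth at most $|C|-1\le 2D$ by isometry. Theorem~\ref{thm:B-main} identifies the resulting configuration as the unique legitimate one. Your iterative freezing of layer groups is a more careful treatment of one point: the $\mathtt{pos}$-based cycle orders, and the set of records itself, are determined by variables of Layers~2--3. The paper's proof passes over this by applying Lemma~\ref{lem:B-layer} directly once Layer~1 is stable, and your treatment costs only a constant factor.
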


\begin{proof}
By Lemma~\ref{lem:bfs-time}, $\mathtt{dist}$ is correct after $D+1$
rounds and $\mathtt{par}$ one round later, so Layer~1 is stable after
$D+2$ rounds.
From that moment Lemma~\ref{lem:B-layer} applies to Layers~2--15.
Their evaluation orders are either $\calB_r$, of depth at most $D$, or
a cycle $C$ traversed by increasing or by decreasing position, of
depth at most $|C|-1$; and $|C|\le2D+1$ because a cycle component is
an isometric subgraph of $G$ (Observation~\ref{obs:isometric}).
Each of the $14$ layers therefore has depth $O(D)$, and their number
is a constant, so Lemma~\ref{lem:B-layer} yields a configuration in
which every variable holds its prescribed value $O(D)$ rounds later.
By Theorem~\ref{thm:B-main} that configuration is the unique
legitimate one, and no process is enabled in it.
The total is $O(D)$ rounds.
\end{proof}

\begin{proposition}
\label{prop:alg1-space}
Algorithm~1 requires $O\bigl((\gamma_v+1) \log n\bigr)$ bits of memory
per process $v$, where $\gamma_v$ is the number of cycles in
$\mathcal{C}(G)$ containing $v$.
The total memory across all processes is $O(n \log n)$ bits, giving
an average of $O(\log n)$ bits per process.
\end{proposition}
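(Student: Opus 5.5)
The proof is a direct count of the variables listed in the table of Section~\ref{sec:rules}, split into a constant part and a per-record part. The constant part consists of $\mathtt{par}_v$, $\mathtt{dist}_v$ and the booleans $\mathtt{isAP}_v,\mathtt{chain}_v,\mathtt{deAbv}_v,\mathtt{inMVS}_v$. A parent pointer can be stored as an identifier from a domain of size $n^{O(1)}$, or as a local port number; either way it costs $O(\log n)$ bits. $\mathtt{dist}_v$ ranges over $\{0,\dots,N\}$, so it also costs $O(\log n)$ bits. For this I assume, as is standard, that $N$ is polynomial in $n$; otherwise one reads the bound as $O(\log N)$.

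The auxiliary sets $\mathrm{TC}_v$, $\mathrm{prv}_v[c]$ and $\mathrm{nxt}_v[c]$ are right-hand sides of rules computed from neighbours. They need not be stored, or, if stored, $\mathrm{TC}_v$ has size at most $1$ whenever it matters for $\mathtt{bBnd}$.

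Each record $c\in\mathtt{Rec}_v$ has a constant number of fields, about thirty by the table. Each field is one of three kinds:
\begin{itemize}
\item an identifier pair ($c$ itself, or $\mathtt{arm0}$), costing $O(\log n)$ bits;
\item a value in $\{0,\dots,N\}\cup\{\infty\}$ ($\mathtt{bot}$, $\mathtt{distx}$, $\mathtt{size}$, $\mathtt{tau}$, $\mathtt{pos}$, the run lengths, $\mathtt{pmax}$, $\mathtt{Lam}$, $\mathtt{alpha}$, $\mathtt{beta}$, $\mathtt{lamDe}$, $\mathtt{wmin}$, $\mathtt{wpos}$, $\mathtt{fmin}$, $\mathtt{fpos}$ and their primed copies), costing $O(\log n)$ bits;
\item a constant-size flag or type ($\mathtt{up}$, $\mathtt{arm}$, $\mathtt{bBnd}$, $\mathtt{wit}$, $\mathtt{deact}$, $\mathtt{cyctype}$, $\mathtt{sel}$), costing $O(1)$ bits.
\end{itemize}
$\mathtt{Src}_v[c]$ has at most two entries once purged, and each entry is a neighbour reference plus a depth, so it costs $O(\log n)$ bits; $\mathtt{src}_v[c]\le 2$. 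Hence one record costs $O(\log n)$ bits.

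Next I bound the number of records. After stabilization, the proof of Theorem~\ref{thm:B-main} gives $|\mathtt{Rec}_v|=\gamma_v$ via Lemma~\ref{lem:B-cycle}. This yields $O((\gamma_v+1)\log n)$ bits per process, the $+1$ accounting for the constant part.

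The main obstacle is the transient phase, because memory must be allocated for arbitrary configurations, not just the stabilized one. Here I would use the rule $\mathtt{Rec}_v\subseteq\mathtt{Cand}_v$, together with the bound $|\mathtt{Cand}_v|\le\delta(v)$. That bound holds because each candidate comes from a distinct incident edge: either a non-tree edge through $\mathrm{key}$, or a child forwarding its record. To make this rigorous, I would restrict each child $u$ to forward at most one record in which $\mathtt{up}_u$ holds. This is automatic at stabilization, since a non-root vertex lies on at most one cycle through its parent edge. If it is enforced as part of the rule, the claim $|\mathtt{Cand}_v|\le\delta(v)$ holds in every configuration. The allocated memory is therefore $O((\delta(v)+1)\log n)$ bits.

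For the totals, summing the stabilized bound uses $\sum_v\gamma_v=\sum_C|C|\le|E|$, and summing the transient bound uses $\sum_v\delta(v)=2|E|$. Both sums are $O(n)$, because a cactus satisfies $|E|\le\frac32(n-1)$. The total memory is therefore $O(n\log n)$ bits, and dividing by $n$ gives the average of $O(\log n)$ bits per process.
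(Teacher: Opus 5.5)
Your proposal is correct and follows the paper's proof step for step: an $O(\log n)$ constant part, a constant number of $O(\log n)$-bit fields per record, $|\mathtt{Rec}_v|=\gamma_v$ after stabilization by Lemma~\ref{lem:B-cycle}, $\mathtt{Rec}_v\subseteq\mathtt{Cand}_v$ with $|\mathtt{Cand}_v|\le\delta(v)$ in arbitrary configurations, and $\sum_v\gamma_v\le|E|$ and $\sum_v\delta(v)=2|E|\le 3(n-1)$ for the totals. Your extra restriction is warranted and harmless: the paper asserts $|\mathtt{Cand}_v|\le|E^{\mathrm{nt}}_v|+|\mathrm{ch}(v)|$ without comment, which needs each child to forward at most one record --- this holds in the legitimate configuration, because every up-record of $u$ names a cycle through the edge $\{u,\mathtt{par}_u\}$ and cycles are edge-disjoint, but not under the rule as written when a corrupted child has several non-tree neighbours --- and since it holds at the fixed point, your restriction leaves the unique legitimate configuration unchanged.
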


\begin{proof}
We use the list of variables given in Section~\ref{sec:rules}.
The part of the state that does not depend on the cycles ---
$\mathtt{par}_v$, $\mathtt{dist}_v$ and the four booleans
$\mathtt{isAP}_v$, $\mathtt{chain}_v$, $\mathtt{deAbv}_v$,
$\mathtt{inMVS}_v$ --- occupies $O(\log n)$ bits, because
$\mathtt{par}_v$ is the identifier of a neighbour and
$\mathtt{dist}_v$ ranges over $\{0,\dots,N\}$ with $N=n^{O(1)}$.
Every field of a record is an identifier pair, a value of
$\{0,\dots,N\}\cup\{\infty\}$, or a constant-size flag, and the number
of fields is a constant, so a record occupies $O(\log n)$ bits.
In an \emph{arbitrary} configuration, $v$ holds one record per element
of $\mathtt{Cand}_v$, and the rule for $\mathtt{Cand}_v$ in
Rule~Set~\ref{alg:B1} gives
$|\mathtt{Cand}_v|\le|E^{\mathrm{nt}}_v|+|\mathrm{ch}(v)|\le\delta(v)$;
after stabilization $\mathtt{Rec}_v$ consists exactly of the
$\gamma_v$ cycles through $v$ by Lemma~\ref{lem:B-cycle}.
Hence $v$ uses $O\bigl((\delta(v)+1)\log n\bigr)$ bits at all times
and $O\bigl((\gamma_v+1)\log n\bigr)$ bits after stabilization.
A cactus graph with $k$ cycles has exactly $n-1+k$ edges and
$k\le(n-1)/2$, so
$\sum_{v\in V}\delta(v)=2|E|\le3(n-1)$ and
\[
   \sum_{v\in V}\gamma_v=\sum_{C\in\calC(G)}|C|
   \;\le\; |E| \;=\; n+k-1 \;=\; O(n).
\]
Both bounds therefore give
$\sum_{v\in V}O\bigl((\gamma_v+1)\log n\bigr)
= O\bigl(\log n\cdot(n+\sum_v\gamma_v)\bigr)=O(n\log n)$ bits in
total, that is, an average of $O(\log n)$ bits per process.
\end{proof}

\begin{remark}
\label{rem:space-lb}
Algorithm~1 is silent: once it has stabilized, no communication
register changes.
We deliberately do \emph{not} claim that its average space is optimal.
The $\Omega(\log n)$ bound of Dolev, Gouda and
Schneider~\cite{DolGouSch1999} applies to silent algorithms whose
registers allow a spanning structure of the network to be
reconstructed, and a maximum \MVS\ does not encode one, so the bound
does not transfer to the problem studied here; it does apply to any
algorithm following our strategy, whose registers encode
$\calB_r$, but that is a statement about the strategy and not about
the problem.
No non-trivial space lower bound is known for the maximum \MVS\
problem, and establishing one is left open
(Section~\ref{sec:conclusion}).
The worst case of Algorithm~1 also exceeds its average: a process on
$\gamma_v$ cycles stores $\Theta(\gamma_v\log n)$ bits, and $\gamma_v$
may be $\Theta(n)$ (\eg at the centre of a friendship graph).
\end{remark}

\begin{remark}
Three design points deserve emphasis.
First, identifying a cycle with its unique non-tree edge removes the
circularity of naming a cycle after its minimum-depth vertex: the
identifier is computable by the two endpoints of that edge alone, and
the minimum-depth vertex is a \emph{conclusion} of Layer~2 rather than
a prerequisite for it.
Second, $|C|$ is obtained arithmetically from three BFS depths, so no
rule ever ranges over all vertices carrying a given identifier; the
remaining aggregations are running maxima and minima along the path
$w_1,\dots,w_{|C|-1}$, which are ordinary layers.
Third, all variables are recomputed by total assignments over finite
domains, so a corrupted value is overwritten within one activation and
a fabricated cycle identifier disappears as soon as its anchor is
re-evaluated.
\end{remark}

\subsection{Algorithm 2: A Component-Local Self-Stabilizing Algorithm}
\label{sec:alg2}
Algorithm~2 replaces the single-root BFS of Algorithm~1 by a
\emph{parallel multi-root BFS}, at the cost of substantially increased
memory.
The key insight is that the decisions for a cycle component and its
adjacent tree components can be taken as soon as the relevant local
information has propagated \emph{within} those components, without
waiting for a global BFS tree to stabilize.

Every process $v$ with $\deg(v)\ge3$ acts as a \emph{local root} and
constructs a BFS tree $\mathcal{B}_v$ rooted at itself, with tree
identifier $\mathrm{bfsid}(v)=\mathrm{id}(v)$; every process
participates in all of them concurrently, keeping one BFS-distance
entry per root in
$V^{\delta 3+}=\{v \in V \mid \deg(v)\ge3\}$.
Phases~2--4 are then exactly those of Algorithm~1, with every
propagation carried out inside the local trees $\mathcal{B}_v$ instead
of a single global tree.
Since every cycle component contains a vertex of $V^{\delta 3+}$ ---
a boundary vertex of a cycle has degree at least three --- cycle
detection and the computation of $\lambda_C,\lambda_C^{v}$ inside a
cycle component $C$ complete in $O(|C|)$ rounds.
Crucially, Phase~4 is triggered independently for each component $C$ as
soon as Phases~2 and~3 have completed for $C$ and its adjacent tree
components, without waiting for the rest of the graph.

\subsubsection*{Instances, local roots and the rules of Algorithm~2}

Concretely, Algorithm~2 runs one \emph{instance} of the machinery of
Section~\ref{sec:rules} for every $\rho\in V^{\delta3+}$: in the
instance of $\rho$ the rules are those of
Rule~Sets~\ref{alg:B1}--\ref{alg:B5} with the leader $r$ replaced by
$\rho$.
Each instance is by itself an execution of Algorithm~1 with a
different leader, so Theorem~\ref{thm:B-main} applies to it verbatim;
note that $\delta(\rho)\ge3$ makes $\rho$ an articulation point (see
the proof of Theorem~\ref{thm:B-main}), so that
Lemma~\ref{lem:B-x-ap} holds in every instance.

The point of the replication is that a cycle component can be
processed inside an instance whose root lies \emph{on} it, and is then
not delayed by the rest of the graph.
Every cycle component $C$ contains a vertex of $V^{\delta3+}$: as $G$
is connected and is not a simple cycle, $C$ has a boundary vertex, and
a boundary vertex of a cycle has degree at least three.
We let $\rho_C$ be the vertex of smallest identifier in
$V^{\delta3+}\cap V(C)$ and read the answer for $C$ off the instance
of $\rho_C$.
Two points make this well defined and local.
First, in the instance of a root $\rho$ the minimum-depth vertex of
$C$ is $\rho$ itself if and only if $\rho\in V(C)$, that is, if and
only if $\mathtt{distx}_v[c]=0$ for the record of $C$ in that
instance; since $\mathtt{distx}$ is broadcast to all of $V(C)$ in
Layer~3, every vertex of $C$ sees the same set
$V^{\delta3+}\cap V(C)$ of candidate roots and hence selects the same
$\rho_C$.
Second, a process must recognise, across instances, which records
refer to the same cycle, and the identifier $\mathrm{cid}$ does depend
on the instance, since which edge of $C$ is a non-tree edge depends on
the root.
The matching is nevertheless local: in every instance the record of
$C$ at $v$ determines the unordered pair
$\{\mathrm{prv}_v[c],\mathrm{nxt}_v[c]\}$ of the two neighbours of $v$
on $C$ (Layer~3), that pair does not depend on the instance, and two
distinct cycles through $v$ yield two disjoint pairs.
Layers~12--15 are then evaluated in the selected instance only: the
vertices of $C$, and the vertices of the tree components attached to
$C$, use the values computed in the instance of $\rho_C$.
Restricting the deactivation layers to one instance is what keeps the
tie-break consistent --- a latent twin-cycle may have several
witnesses, and two instances may select different ones, so taking the
disjunction of the deactivation flags over all instances could forfeit
two leaves instead of one.
The rule for $\mathtt{deAbv}$ is otherwise unchanged: it is run inside
$\calB_{\rho_C}$, which is legitimate because $\rho_C\in V(C)$, so
every tree component attached to $C$ lies below its boundary vertex in
$\calB_{\rho_C}$.


\subsubsection*{Complexity analysis of Algorithm~2}

\begin{proposition}
\label{prop:alg2-time}
Algorithm~2 stabilizes in
\[
   \BigO\Bigl(\max_{C\in\calC(G)}|C| \;+\;
              \max_{T \text{ tree comp.}}|T|\Bigr)
   \;=\; \BigO\bigl(|C_{\max}| + |T_{\max}|\bigr)
\]
rounds under the distributed daemon.
\end{proposition}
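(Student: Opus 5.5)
The plan is to reuse the per-variable form of Lemma~\ref{lem:B-layer} (Corollary~\ref{cor:B-local}) instead of any global stabilization bound. The reason is that the instance of a root $\rho$ is not stable everywhere within $O(|C_{\max}|+|T_{\max}|)$ rounds: far from $\rho$ its BFS distances may take $\Theta(D)$ rounds to settle. The argument therefore has to rest on two facts. First, every output-relevant variable of a cycle component $C$, and of the tree components attached to it, depends only on variables of the instance of $\rho_C$ inside a ball around $\rho_C$ of radius $O(|C|+|T_{\max}|)$. Second, those variables have dependency depth $O(|C|+|T_{\max}|)$.

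\emph{Step 1: BFS inside the relevant region.} Fix $C$. Let $R_C$ be the union of $V(C)$ and the tree components attached to $C$. Every vertex of $R_C$ lies within distance $\lfloor |C|/2\rfloor+|T_{\max}|$ of $\rho_C$. By Lemma~\ref{lem:bfs-time}, applied with $\rho=\rho_C$ and $k=\lfloor|C|/2\rfloor+|T_{\max}|+1$, the distances and parent pointers of the instance of $\rho_C$ are correct on a ball containing $R_C$ and its neighbours after $O(|C|+|T_{\max}|)$ rounds, \emph{whatever the diameter is}. I must also check that the restriction of $\calB_{\rho_C}$ to $R_C$ is what the rules read. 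The tree paths from vertices of $C$ to $\rho_C$ stay inside $C$, because $C$ is isometric (Observation~\ref{obs:isometric}). Likewise, the paths from a vertex of an attached tree component go through its boundary vertex on $C$.

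\emph{Step 2: layered depth inside $R_C$.} I will go through Layers~2--15 of the instance of $\rho_C$ and bound the dependency depth of every variable that $\mathtt{inMVS}_v$, for $v\in R_C$, refers to transitively. Layers~2, 3 and~8 for the record of $C$ run along arcs of $C$ in $\calB_{\rho_C}$ and have depth at most $|C|$. Layers~5--7, 13 and~14 run along $C$ by position and also have depth at most $|C|$. Layer~9 ($\mathtt{chain}$) and Layer~12 ($\mathtt{deAbv}$) run along tree components, and I claim depth $O(|T_{\max}|)$ for them. For $\mathtt{chain}$, the recursion stops at any vertex lying on a cycle or of degree $\neq 2$. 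For $\mathtt{deAbv}$, the recursion is reset at the next vertex with $\mathtt{Rec}\neq\emptyset$. I also need $\mathtt{isAP}$ and $\mathtt{Rec}$ at neighbours, which have depth $O(|C'|)$ for the neighbouring cycles. Here I would note that $\mathtt{isAP}_v$ depends only on whether $v$ lies on some cycle, and this is decided in any instance whose root lies near $v$. Alternatively, I would replace it by the instance-independent local test ``some neighbour pair lies on a common cycle in the instance of $\rho_{C'}$'', which has depth $O(|C_{\max}|)$. The cross-instance matching of records via $\{\mathrm{prv}_v,\mathrm{nxt}_v\}$ and the choice of $\rho_C$ via $\mathtt{distx}=0$ add only $O(1)$ layers on top of Layer~3. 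Corollary~\ref{cor:B-local} then gives that every output variable in $R_C$ has its prescribed value after $O(|C|+|T_{\max}|)$ rounds plus the BFS delay of Step~1.

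\emph{Step 3: conclusion.} Every vertex of $G$ lies in some $R_C$. Here I use that $G$ is not a path or cycle, so every tree component attaches to some cycle, or else it is handled by a root of degree $\ge 3$ inside it within $O(|T_{\max}|)$ rounds. Taking the maximum over $C$ gives the bound for the outputs, and Theorem~\ref{thm:B-main}, applied per instance, identifies the values with $\Sstar$.

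The main obstacle is that Lemma~\ref{lem:B-layer} is stated for all variables, whose \emph{global} depth is $\Theta(D)$. I will therefore have to argue that the dependency cone of each output variable genuinely stays inside the ball where Step~1 already guarantees correctness; a single hidden dependency on a far-away BFS value would break locality. Moreover, ``stabilizes'' must mean silence of all variables, and the far-away BFS entries of other instances may still change. I would resolve this by defining the instance of $\rho$ to be truncated: distances are capped at a locally known bound, namely the cycle size or tree size seen so far, so that entries beyond the relevant region are pinned to $N$ with depth $O(1)$. If that is not admissible, I would state the proposition for output stabilization.
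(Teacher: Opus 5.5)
Your route is the paper's: the pinned BFS on the ball of radius $k$ around $\rho_C$, then the per-variable Lemma~\ref{lem:B-layer}, then Theorem~\ref{thm:B-main} per instance. The proof breaks at the locality claim it rests on. Your Step~2 says Layers~2, 3 and~8 for the record of $C$ ``run along arcs of $C$'' with depth at most $|C|$, and this is false for Layer~2.

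Layer~2 is evaluated along $\calB_{\rho_C}$ \emph{from the leaves}, and $\mathtt{Cand}_p$ and $\mathtt{Src}_p[c]$ read the records of \emph{all} children of $p$. For a boundary vertex $p$ of $C$, the subtree of $\calB_{\rho_C}$ below $p$ contains the whole branch $G_p$, whose height can be $\Theta(D)$. So the dependency depth of $\mathtt{Src}_p[c]$ is that height, not $O(|C|)$. Your Step~1 checks only the upward tree paths, which is what the top-down layers need; it does not check these downward subtrees.

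This is not slack in the estimate. Take $\Gamma_{s,t,m}$ and start with the BFS of the instance of $\rho_C$ already correct. Let every vertex of a $\calB_{\rho_C}$-path of length $\Theta(m)$ down the spine below $p$ hold a record with identifier $\mathrm{cid}(C)$, a single source and $\mathtt{up}=\textsf{true}$. No anchor in the spine produces that key. Under the synchronous daemon the chain therefore dissolves from the bottom, one vertex per round. For $\Theta(m)$ rounds $p$ sees one source too many: it takes itself for $x_C$ and stops forwarding, or, if $p=\rho_C$, it purges the record as having three sources. During that window the instance's record of $C$ is wrong on the cycle: $x_C$, $\mathtt{size}$, the positions, and hence every later layer on $C$. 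Meanwhile $k=|C_{\max}|+|T_{\max}|$ does not depend on $m$.

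The cross-instance step leaks in the same way. You choose $\rho_C$ by testing $\mathtt{distx}=0$, which reads the record of $C$ in \emph{every} instance, including instances rooted $\Theta(m)$ away. Their values at $C$ are unconstrained until that root's BFS wave arrives, and Lemma~\ref{lem:bfs-time} gives nothing earlier. So ``$O(1)$ layers on top of Layer~3'' holds only for the Layer~3 of $\rho_C$ itself.

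Your closing paragraph anticipates a hidden far-away dependency, but it looks for one only among BFS distances. The truncation you sketch would have to cut off exactly these subtrees and instances. Its cap also cannot be ``the cycle size seen so far'', because the instance computes that size only after Layer~2.

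The paper's own proof rests on the same assertion (``none of these evaluation orders leaves the ball \dots''). The separation remark it gives as justification does not address layers evaluated from the leaves, so following it does not close the gap. You would need an argument that the record of $C$ at a boundary vertex is independent of its branch, and that the choice of $\rho_C$ is independent of far instances. Judging by the example above, that may also require changing the rules. Only then does Corollary~\ref{cor:B-local} give $O(k)$.
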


\begin{proof}
Write $k=|C_{\max}|+|T_{\max}|$, fix a cycle component $C$, and put
$\rho=\rho_C$ and $U=V(C)\cup\bigcup\{V(T) : T$ a tree component
attached to $C\}$.
Since $C$ is isometric in $G$ (Observation~\ref{obs:isometric}) and
$\rho\in V(C)$, every $u\in V(C)$ satisfies
$d_G(\rho,u)\le\lfloor|C|/2\rfloor$, and every $u$ in a tree component
$T$ attached to $C$ satisfies
$d_G(\rho,u)\le\lfloor|C|/2\rfloor+|T|\le k$.
By Lemma~\ref{lem:bfs-time} applied to the instance of $\rho$, the
variables $\mathtt{dist}$ of that instance are correct on the ball of
radius $k$ around $\rho$ from the end of round $k+1$ on, and the
parent pointers one round later --- and this holds however large the
diameter of $G$ is.

From that moment Lemma~\ref{lem:B-layer} and
Corollary~\ref{cor:B-local} apply to the variables attached to the
vertices of $U$ in the instance of $\rho$.
Each of the $14$ layers propagates either along $C$, with depth at
most $|C|-1$, or inside a tree component attached to $C$, with depth
at most $|T|$, or along the part of $\calB_{\rho}$ that joins the two,
with depth at most $k$; and none of these evaluation orders leaves the
ball of radius $k$ around $\rho$, because $C$ separates that ball from
the rest of the graph at its boundary vertices.
Every such variable therefore has dependency depth $O(k)$, and
Corollary~\ref{cor:B-local} makes it stable after $O(k)$ rounds.
By Theorem~\ref{thm:B-main} applied to the instance of $\rho$, the
values reached are those prescribed by
Definition~\ref{def:canonical}; Layers~12--15 for $U$ are evaluated in
this instance and add $O(k)$ further rounds.

The argument applies to every cycle component simultaneously, and a
leaf of a tree component attached to no cycle is selected
unconditionally after $O(1)$ rounds, so every process holds its final
output after $O(k)=O(|C_{\max}|+|T_{\max}|)$ rounds.
\end{proof}

\noindent
The bound of Proposition~\ref{prop:alg2-time} is genuinely
\emph{component-local}: it is not a restatement of the $\BigO(D)$
bound of Algorithm~1, and it cannot be improved as a function of
$|C_{\max}|$ and $|T_{\max}|$.
Both points follow from the family described next, whose diameter is
arbitrarily larger than $|C_{\max}|+|T_{\max}|$.

\begin{lemma}
\label{lem:lb-family}
Let $s\ge2$, $t\ge4$, $m\ge3$, and let $\Gamma_{s,t,m}$ be the cactus
graph built as follows: a \emph{spine} of $m$ triangles joined
consecutively by bridges; a cycle $C$ of length $c=4s+3$ with exactly
three boundary vertices $p,v,q$ in this cyclic order, separated by
runs of $\alpha=s+1$, $\beta=s+1$ and $\gamma=2s-2$
non-articulation points; one endpoint of the spine attached to $p$ by
a bridge, a beard $B$ of $t$ vertices attached to $v$, and a star
$K_{1,3}$ attached to $q$.
Write $z$ for the leaf of $B$.
Then $|C_{\max}|=4s+3$, $|T_{\max}|=t$ and $D=\Theta(m+s+t)$, so
$|C_{\max}|+|T_{\max}|$ stays bounded while $D$ grows without bound.
Moreover $C$ is a latent twin-cycle whose unique witness is $v$, and
every maximum \MVS\ of $\Gamma_{s,t,m}$ containing no articulation
point excludes $z$; if the star at $q$ is deleted, then $C$ becomes a
twin-cycle and every such maximum \MVS\ contains $z$.
\end{lemma}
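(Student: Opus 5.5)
The plan is to verify the structural claims by direct computation, and then get both assertions about $z$ from a \emph{slack} version of the upper-bound proof of Theorem~\ref{thm:mvn}.

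\textbf{Parameters and classification.} First I count vertices: $|C|=3+(s+1)+(s+1)+(2s-2)=4s+3$, so $\tau_C=\lfloor(4s+2)/2\rfloor=2s+1$. The components are:
\begin{itemize}
  \item the spine triangles, each of size $3$;
  \item the bridges of the spine and the bridge at $p$, each of size $2$;
  \item the star at $q$, of size $4$;
  \item the beard $B$, of size $t$.
\end{itemize}
Since $t\ge4$ and $4s+3\ge3$, this gives $|C_{\max}|=4s+3$ and $|T_{\max}|=t$. For the diameter, the far end of the spine is at distance $\Theta(m)$ from $p$ and $z$ is at distance $\Theta(t)$ from $v$. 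Any path between these two ends runs through $C$, where it adds $\Theta(s)$, and every other distance is bounded by the same sum. Hence $D=\Theta(m+s+t)$.

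Next, $\lambda_C=\max\{s+1,s+1,2s-2\}<2s+1=\tau_C$. The branch at $v$ is a pendant path attached only to $C$, so it is a beard. Deactivating $v$ merges the two runs adjacent to it, giving $\lambda_C^{v}=(s+1)+1+(s+1)=2s+3>\tau_C$. Hence $C$ is a latent twin-cycle with witness $v$. The branches at $p$ (which contains triangles) and at $q$ (the star, which branches) are not beards. Therefore $v$ is the unique witness.

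If the star is deleted, $q$ has degree two on a cycle and is no longer an articulation point. The runs $\beta$ and $\gamma$ then merge into a run of length $(s+1)+1+(2s-2)=3s>2s+1$, using $s\ge2$. So $C$ becomes a twin-cycle.

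\textbf{Excluding, resp.\ including, $z$.} I reuse the chain of inequalities in the proof of Theorem~\ref{thm:mvn}. Every step there is an inequality, so if Lemma~\ref{lem:exchange} is strict at the single component $C$, then $|S|<\capa(G)=\mu(G)$. It therefore suffices to exhibit slack at $C$ in each case. (A small caveat: $C$ must not lie below a member of $\mathcal{W}$. If it did, $S\cap V(C)=\emptyset$, and the leaf bookkeeping gives slack anyway. I will check this briefly.)

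In $\Gamma_{s,t,m}$, suppose $z\in S$. Then $v$ is $S$-active, so $I_C\subseteq\{p,q\}$.
\begin{itemize}
  \item If $I_C=\emptyset$, then $\lambda_C^{I_C}=\lambda_C<\tau_C$. Lemma~\ref{lem:cycle-bound}(i) gives $|S\cap V(C)|=0<1=\capa(C)$.
  \item If $I_C\neq\emptyset$, then $I_C\neq A_C$, so $|S\cap V(C)|\le2$ by Lemma~\ref{lem:cycle-bound}. Each of $p,q$ has a non-beard branch, so Lemma~\ref{lem:branch-value} gives $\kappa\ge2$. Hence $\capa(C)+\kappa\ge3>2$.
\end{itemize}
Either way the inequality is strict, so $S$ is not maximum.

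In the star-deleted graph, suppose $z\notin S$. Then $v$ is $S$-inactive, since its branch contains no non-articulation point other than $z$.
\begin{itemize}
  \item If $I_C=\{v\}$, then $|S\cap V(C)|\le2<2+1\le\capa(C)+\kappa$.
  \item If $I_C=\{p,v\}=A_C$, then $|S\cap V(C)|\le3<2+1+2\le\capa(C)+\kappa$.
\end{itemize}
Again the inequality is strict, so every maximum \MVS\ avoiding articulation points contains $z$.

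\textbf{Main obstacle.} The delicate point is justifying that slack in a single application of Lemma~\ref{lem:exchange} really propagates to a strict global inequality. This requires confirming that the summation in the proof of Theorem~\ref{thm:mvn} uses $C$'s inequality exactly once, i.e.\ that $C$ is below no $H\in\mathcal{W}$. I would settle this first, case by case, using Lemma~\ref{lem:bookkeeping}.
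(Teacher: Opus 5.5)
Your proposal is correct and is essentially the paper's proof. You use the same parameter computations and the same device of exhibiting slack in Lemma~\ref{lem:exchange} at $C$ and pushing it through the summation of Theorem~\ref{thm:mvn}, with a slightly sharper case split: Lemma~\ref{lem:cycle-bound}(iv) already gives $|S\cap V(C)|\le 2$ once $v\notin I_C$, so unlike the paper you never need $m\ge 3$ there.

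The caveat you flag is genuine, and the paper's proof passes over it. For $z\in S$ it is vacuous, since an $S$-inactive branch $G_w$ with $C$ below it must have $w\neq v$ and therefore contains the whole beard and $z$. In the star-deleted graph, however, $C$ can lie below an inactive spine branch, and then the leaf bookkeeping gives no slack ($z$ is both subtracted and absent, so that term is tight). The slack instead comes from a direct count: $S$ then avoids $V(C)\cup V(B)$ and lies among the non-articulation vertices of the spine, of which a triangle $T'$ with $k$ boundary vertices has $3-k\le\capa(T')$, so $|S|$ falls short of the capacity by at least $1+\capa(C)=3$.
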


\begin{proof}
The parameters are immediate: $\tau_C=\lfloor(c-1)/2\rfloor=2s+1$ and
$\lambda_C=\max\{\alpha,\beta,\gamma\}=\max\{s+1,2s-2\}<2s+1=\tau_C$,
while $\lambda_C^{v}=\alpha+1+\beta=2s+3>\tau_C$; the branches at $p$
and at $q$ are not beards (the first contains the spine, the second
branches), so $v$ is the only beard boundary vertex and hence the
unique witness, and $C$ is a latent twin-cycle.
Deleting the star makes $q$ a non-articulation point of degree two,
which merges the runs $\beta$ and $\gamma$ into one of length
$\beta+1+\gamma=3s>2s+1=\tau_C$, so $C$ becomes a twin-cycle. 

For the membership claims, recall from the proof of
Theorem~\ref{thm:mvn} that $|S|$ is bounded by the sum of the local
inequalities of Lemma~\ref{lem:exchange}, so a slack of $\delta$ in
the term of a single cycle component yields $|S|\le\capa(G)-\delta$.
Suppose $z\in S$; then $v\notin I_C$ and $I_C\subseteq\{p,q\}$.
If $I_C=\emptyset$, then $\lambda_C^{I_C}=\lambda_C<\tau_C$, so
$|S\cap V(C)|=0$ by Lemma~\ref{lem:cycle-bound}(i) while
$\capa(C)=1$: slack~$1$.
If $q\in I_C$, then $\ccaps(C,q)=3$ (the three leaves of the star)
while $|S\cap V(C)|\le3$ and $\capa(C)=1$: slack~$\ge1$.
If $p\in I_C$, then $\ccaps(C,p)\ge m$, because each triangle of the
spine has two boundary vertices and hence $\lambda=\tau=1$, \ie is a
unit-cycle of capacity~$1$: slack~$\ge m-2\ge0$, and in fact
slack~$\ge1$ once $m\ge3$.
In every case $|S|\le\capa(G)-1<\mu(G)$, so $z\notin S$.
Symmetrically, after deleting the star, suppose $z\notin S$; then
$v\in I_C$ and $\ccaps(C,v)=1$, while
$|S\cap V(C)|\le2=\capa(C)$ by Lemma~\ref{lem:cycle-bound}(iii) when
$p\notin I_C$, and $|S\cap V(C)|\le3$ against
$\capa(C)+\ccaps(C,v)+\ccaps(C,p)\ge2+1+m$ when $p\in I_C$: slack
$\ge1$ in both cases, so $z\in S$.
\end{proof}

\begin{proposition}
\label{prop:alg2-lb}
Consider the task of computing a maximum \MVS\ containing no
articulation point---the task solved by both our algorithms, and
without loss of generality by Lemma~\ref{lem:no-ap}.
Every correct self-stabilizing algorithm for this task needs
$\Omega(s+t)=\Omega(|C_{\max}|+|T_{\max}|)$ rounds on
$\Gamma_{s,t,m}$.
Since $D=\Theta(m+s+t)$ may be arbitrarily larger than
$|C_{\max}|+|T_{\max}|=\Theta(s+t)$, the bound of
Proposition~\ref{prop:alg2-time} is asymptotically tight as a function
of $|C_{\max}|$ and $|T_{\max}|$, and is not implied by any bound in
terms of the diameter.
\end{proposition}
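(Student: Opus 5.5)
We argue by indistinguishability, using the two graphs of Lemma~\ref{lem:lb-family}: $\Gamma=\Gamma_{s,t,m}$ and $\Gamma'$, obtained from $\Gamma$ by deleting the star at $q$. By that lemma every legitimate output on $\Gamma$ excludes the leaf $z$ of the beard $B$, while every legitimate output on $\Gamma'$ includes $z$. So the final value of $\mathrm{inMVS}_z$ depends on whether the star at $q$ is present, and this information lies at distance $d(z,q)=\Theta(s+t)$ from $z$. Concretely, $z$ is $t-1$ edges from $v$, and the shorter arc from $v$ to $q$ on $C$ has $\beta+1=s+2$ edges, so $d(z,q)\ge s+t$.

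\textbf{Locality fact.} In the state-reading model, after $k$ rounds the state of a process depends only on the initial states in its ball of radius $k$. We would state this in its standard form: there is an execution, for instance a synchronous one under the distributed daemon, in which every process acts at every step, so that each step is one round. In that execution the state of $z$ after $k$ steps is a function of the initial states and local topology within distance $k$ of $z$. We prove this by induction on $k$.

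\textbf{Main step.} Take $k=\lfloor (s+t)/2\rfloor$, or any $k<d(z,q)-1$. The balls of radius $k$ around $z$ in $\Gamma$ and in $\Gamma'$ are isomorphic, via an isomorphism fixing $z$. We assign the same identifiers and the same initial states inside these balls, and take the same upper bound $N$ on the number of processes; we may use a common $N$ because it only needs to bound $n$. The synchronous executions then give $z$ identical states for the first $k$ steps.

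To turn this into a lower bound for an arbitrary correct algorithm, we start from a legitimate configuration $\gamma'$ of $\Gamma'$, in which $\mathrm{inMVS}_z=\textsf{true}$. We copy $\gamma'$ onto the common ball and initialise $\Gamma$ with that copy on the ball and arbitrary states elsewhere. In $\Gamma'$, started from $\gamma'$, the algorithm is silent and correct, so $z$ holds $\mathrm{inMVS}_z=\textsf{true}$ throughout. In $\Gamma$, for the first $k$ synchronous rounds, $z$ behaves identically and so also outputs $\textsf{true}$, which is not its final value. Hence stabilization on $\Gamma$ takes more than $k=\Omega(s+t)$ rounds.

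\textbf{Main obstacle.} The delicate point is that $\gamma'$ is silent on $\Gamma'$, which guarantees its restriction is self-consistent near $z$ in $\Gamma'$. Indistinguishability requires the radius-$j$ ball around $z$ to evolve identically for $j\le k-j'$, which is exactly what the locality fact gives.

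\textbf{Conclusion.} Finally, $|C_{\max}|=4s+3$ and $|T_{\max}|=t$, so $s+t=\Theta(|C_{\max}|+|T_{\max}|)$. Since $m$ is free, $D=\Theta(m+s+t)$ can be arbitrarily larger. This shows that Proposition~\ref{prop:alg2-time} is tight as a function of $|C_{\max}|$ and $|T_{\max}|$, and that the bound is not implied by any bound in terms of $D$.
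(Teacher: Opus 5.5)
Your proof is correct and follows the paper's argument: the same pair of graphs from Lemma~\ref{lem:lb-family} ($\Gamma_{s,t,m}$ and its star-deleted variant), the synchronous schedule, and locality of information around $z$. Your one refinement is to seed $\Gamma_{s,t,m}$ on the ball with a legitimate configuration of the star-deleted graph, so that closure (silence, under the paper's definition of legitimacy) keeps $z$ at \textsf{true}; this places the delay on $\Gamma_{s,t,m}$ itself, whereas the paper only concludes that one of the two graphs needs more than $r$ rounds.
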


\begin{proof}
Let $G'$ be $\Gamma_{s,t,m}$ with the star at $q$ deleted.
By Lemma~\ref{lem:lb-family}, $z$ outputs $\textsf{false}$ in every
legitimate configuration of $\Gamma_{s,t,m}$ and $\textsf{true}$ in
every legitimate configuration of $G'$, while the two graphs are
isomorphic within distance $r=t+s+1$ of $z$.
Start the two executions from configurations agreeing on that common
ball.
Under the synchronous daemon---one of the schedules the distributed
daemon may produce---the state of $z$ after $\rho\le r$ rounds depends
only on the initial states within distance $\rho$, so $z$ produces the
same output in both, which is impossible once both have stabilized.
Hence more than $r=\Omega(s+t)$ rounds are needed in one of them.
\end{proof}

\begin{remark}
\label{rem:alg2-vs-alg1}
Taking $s,t$ constant---or simply a chain of $m$ triangles with no
attachments---gives cactus graphs with
$|C_{\max}|+|T_{\max}|=\BigO(1)$ and $D=\Theta(n)$, on which
Algorithm~2 stabilizes in $\BigO(1)$ rounds while Algorithm~1 needs
$\Theta(D)$; this is the regime in which Algorithm~2 is worth its
additional memory.
Conversely, $\Omega(|C_{\max}|+|T_{\max}|)$ is \emph{not} a lower
bound on every cactus graph---if $T_{\max}$ is a star attached to a
cycle, its branching is visible at distance one---which is why
Proposition~\ref{prop:alg2-lb} is stated for a family of instances.
The second parameter must be $|T_{\max}|$ and not the largest beard,
since the leaf of a non-beard component must also learn that its
component is not a beard.
\end{remark}

\subsection{Comparison of the Two Algorithms}
\label{sec:comparison}

\begin{proposition}
\label{prop:alg2-space}
Algorithm~2 requires
$O\!\left(|V^{\delta 3+}| \log D + (\gamma_v+1) \log n\right)$
bits of memory per process $v$, where $D$ is the diameter of $G$,
$|V^{\delta 3+}|$ is the number of vertices of degree at least three,
and $\gamma_v$ is the number of cycles containing $v$.
\end{proposition}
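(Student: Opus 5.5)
The bound will be obtained by splitting the state of a process $v$ into two parts and bounding each with the accounting of Proposition~\ref{prop:alg1-space}. The first part consists of the per-instance BFS entries. For every local root $\rho\in V^{\delta3+}$, $v$ stores $\mathtt{dist}^{\rho}_v$ and $\mathtt{par}^{\rho}_v$. The second part is everything else: the cycle records, the booleans and the output, which are evaluated only in the selected instance of each cycle through $v$.

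\textbf{Distance entries.} After stabilization $\mathtt{dist}^{\rho}_v=d_G(\rho,v)\le D$, so each entry takes $O(\log D)$ bits. To hold this in \emph{every} configuration, and not only after stabilization, I would cap the distance domain at $D$ rather than $N$; more precisely I would state the bound with the domain $\{0,\dots,D\}$. Lemma~\ref{lem:bfs-time} goes through verbatim with $N$ replaced by any common bound at least $D$, since its proof uses only $d(v)\le D\le N$.

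For the parent entry, storing a neighbour's identifier would cost $\log n$. Instead I would store the parent as a local port number. Its size is $O(\log\delta(v))$, and I would charge the resulting $\log\delta(v)$ term to the $(\gamma_v+1)\log n$ part. Alternatively, I would observe that $\mathtt{par}^{\rho}_v$ is a function of the neighbours' distances, so it can be recomputed rather than stored. Summed over the $|V^{\delta3+}|$ instances, this gives $O(|V^{\delta3+}|\log D)$.

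\textbf{Cycle records.} Next I would argue that the heavy per-cycle machinery of Layers~2--15 need not be replicated as $|V^{\delta3+}|$ full records. The instance selection makes vertices of $C$ use only the instance of $\rho_C$. So $v$ keeps, for each of its $\gamma_v$ cycles, one record of $O(1)$ fields of $O(\log n)$ bits each, exactly as in Proposition~\ref{prop:alg1-space}. It also keeps a constant number of booleans and $O(\log n)$ bits identifying $\rho_C$. Layer~2 cycle identification, however, runs in every instance. Here the key observation is that $\mathtt{Cand}_v$ in a given instance refers only to cycles through $v$ or to forwarded candidates. After matching records across instances via the instance-independent pair $\{\mathrm{prv}_v[c],\mathrm{nxt}_v[c]\}$, I would keep them indexed by the $\gamma_v$ neighbour pairs rather than per instance. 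In each non-selected instance only the $O(\log D)$-bit depth data per incident cycle remain, which I would absorb into the first term by noting that the relevant instances are those with $\rho\in V(C)$.

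\textbf{Main obstacle.} The hardest step is this last one: justifying that each non-selected instance costs only $O(\log D)$ bits at $v$, rather than $O(\gamma_v\log n)$ per instance. I would first try restricting Layers~2--3 of instance $\rho$ to cycles $C$ with $\rho\in V(C)$, which is all Algorithm~2 ever reads. Each vertex $v$ is then involved in instance $\rho$ for at most one cycle, namely the one containing both $v$ and $\rho$. In that single cycle, $v$ stores only the $O(\log D)$ quantities $\mathtt{pos}$ and $\mathtt{distx}$. If that restriction fails in arbitrary (corrupted) configurations, I would fall back to stating the bound for stabilized configurations, as in Proposition~\ref{prop:alg1-space}.
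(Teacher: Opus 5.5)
Your decomposition is the paper's own: one distance entry per root $\rho\in V^{\delta3+}$ at $O(\log D)$ bits, with the parent pointer recovered from the neighbours' distances, plus one $O(\log n)$-bit record for each of the $\gamma_v$ cycles through $v$, taken from the instance of $\rho_C$. The paper's proof is essentially these counts and nothing more. But the step you flag as the main obstacle is a genuine gap, and your proposed fix does not close it. In Algorithm~2 as specified, every instance runs Layers~2--3 on every cycle, because the choice of $\rho_C$ reads $\mathtt{distx}$ from all instances. Lemma~\ref{lem:B-cycle}, applied to the instance of $\rho$, shows that in its legitimate configuration $v$ holds a record for \emph{each} of its $\gamma_v$ cycles. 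So even after stabilization $v$ keeps $|V^{\delta3+}|\cdot\gamma_v$ records with $\Theta(\log n)$-bit identifiers, and falling back to stabilized configurations gains nothing. Your restriction to cycles through $\rho$ cannot be imposed in advance. The condition $\rho\in V(C)$ is exactly the predicate $\mathtt{distx}_v[c]=0$, which instance $\rho$ obtains only after its upward wave has reached $x_C$ and $\mathtt{distx}$ has been broadcast back; no vertex of $C$ knows beforehand whether to participate. Even granting the restriction, the per-instance record is not just $\mathtt{pos}$ and $\mathtt{distx}$. The wave must also carry $\mathrm{key}(a,b)$, together with $\mathtt{Src}$ and $\mathtt{bot}$, because the top vertex (now $\rho$, the top of all its $\gamma_\rho$ cycles) pairs the two arms of each cycle by that identifier. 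The identifier depends on the instance, so it cannot be shared across instances. Each relevant (instance, cycle) pair therefore costs $\Theta(\log n)$ bits. Your disjointness count (the sets $V(C)\setminus\{v\}$ for $C\ni v$ are disjoint, giving at most $|V^{\delta3+}|+\gamma_v$ relevant pairs) then yields only $O\bigl((|V^{\delta3+}|+\gamma_v)\log n\bigr)$, which can exceed the claimed bound when $\log D=o(\log n)$.

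The paper's proof does not resolve this either: it charges only the record of the instance of $\rho_C$ and is silent about the records the other instances maintain in order to determine $\rho_C$. The $|V^{\delta3+}|\log D$ term therefore needs either a weaker statement or a redesign of Layers~2--3. A redesign would need an instance-independent cycle name stored once per incident cycle, with only $O(\log D)$ bits of per-instance data. As specified, the count after stabilization is $O\bigl(|V^{\delta3+}|(\gamma_v+1)\log n\bigr)$.

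Your side remarks also need care. Capping the domain at $D$ presupposes that processes know some $\hat D\ge D$ with $\log\hat D=O(\log D)$, whereas the model only supplies $N\ge n$. This is an added assumption; the paper's ``once out-of-range values are clamped'' hides the same point. Port numbers cost $O(\log\delta(v))$ bits \emph{per instance}, hence $|V^{\delta3+}|\log\delta(v)$ in total, which cannot be charged to the per-process term $(\gamma_v+1)\log n$. For example, if $v$ is adjacent to $k$ vertices each carrying two pendant leaves, then $D=4$, $\gamma_v=0$ and $|V^{\delta3+}|=k+1$, and the ports cost $\Theta(k\log k)$ against a claimed $O(k)$. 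Only recomputation fits, and even that needs an argument. In Layer~2 a neighbour reads $\mathtt{par}_u$ to form its $\mathrm{ch}$ and $E^{\mathrm{nt}}$ sets, and at the bottom of an even cycle $u$ has two candidate parents that the neighbour cannot tell apart from the variables it may read.
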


\begin{proof}
The accounting is that of Proposition~\ref{prop:alg1-space}, with one
BFS entry per root instead of one.
Each process $v$ maintains one distance entry
$\mathtt{dist}^{\rho}_v$ per root $\rho\in V^{\delta 3+}$; each of
them ranges over $\{0,\dots,N\}$ but is only ever compared with
distances of neighbours of $v$ in the same instance, so $O(\log D)$
bits suffice once out-of-range values are clamped, contributing
$O(|V^{\delta 3+}|\log D)$ bits in total, and the parent pointer of an
instance is recovered from the distances of the neighbours.
For each cycle $C$ through $v$, the process keeps one record --- the
one of the instance of $\rho_C$ --- holding $\lambda_C$, $\tau_C$,
$\mathrm{cid}(C)$, the run and witness fields and the beard flag, each
of $O(\log n)$ bits, hence $O(\gamma_v\log n)$ bits in total.
The output variable $\mathtt{inMVS}_v$ occupies $O(1)$ bits.
Altogether a process uses
$O\bigl(|V^{\delta 3+}|\log D+(\gamma_v+1)\log n\bigr)$ bits.
\end{proof}

\begin{remark}
The memory usage of Algorithm~2 is dominated by the
$O(|V^{\delta 3+}| \log D)$ term, which can be significantly larger
than the $O(\gamma_v \log n)$ per-process cost of Algorithm~1
when the number of high-degree vertices is large.
In contrast, on cactus graphs where $|V^{\delta 3+}|$ is small and
where the largest cycle component and the largest tree component are
much smaller than the diameter---\ie $|C_{\max}|+|T_{\max}| = o(D)$
---Algorithm~2 achieves both a strictly better stabilization time and
a manageable memory footprint, making it preferable to Algorithm~1 in
latency-sensitive applications.
Table~\ref{tab:comparison} summarizes the complexity of both
algorithms.
\end{remark}

\begin{table}[tb]
\centering
\caption{Complexity comparison of the two algorithms.}
\label{tab:comparison}
\begin{tabular}{lcc}
\hline
\textbf{Algorithm} & \textbf{Stabilization time} &
\textbf{Space per process} \\
\hline
 Algorithm~1 (single-root BFS)
   & $O(D)$
   & $O((\gamma_v+1) \log n)$ (average: $O(\log n)$)\\
 Algorithm~2 (component-local)
   & $O(|C_{\max}| + |T_{\max}|)$
   & $\BigO(|V^{\delta 3+}|\log D + (\gamma_v+1)\log n)$ \\
\hline
\end{tabular}
\end{table}

\section{Conclusion and Open Problems}
\label{sec:conclusion}

We determined the mutual visibility number of cactus graphs,
$\mu(G) = 2n_2+n_{\mathrm{lt}}+n_1+\ell(G)$, via a cycle--tree
decomposition and the classification of each cycle component as a
zero-, unit-, twin- or latent twin-cycle, and we proposed two
self-stabilizing algorithms that construct a maximum \MVS\ under the
distributed daemon: Algorithm~1 stabilizes in $O(D)$ rounds with
$O(\log n)$ average space per process, and Algorithm~2 in
$O(|C_{\max}|+|T_{\max}|)$ rounds---a bound that is asymptotically
tight as a function of these two parameters even when they are
$o(D)$ (Proposition~\ref{prop:alg2-lb})---at the cost of increased
memory.

Three open problems suggest themselves.
First, the \MVN\ remains unresolved for classes that generalize cactus
graphs---outerplanar graphs, series-parallel graphs, and graphs of
bounded treewidth---all of which admit structured decompositions that
may be amenable to the analysis developed here.
Second, no non-trivial space lower bound is known for the maximum
\MVS\ problem: as explained in Remark~\ref{rem:space-lb}, the
$\Omega(\log n)$ bound for silent
stabilization~\cite{DolGouSch1999} does not apply, because a maximum
\MVS\ does not encode a spanning structure of the network.
Proving such a bound, and determining whether the worst-case
per-process space $O((\gamma_v+1)\log n)$ of Algorithm~1 can be
reduced to $O(\log n)$, both remain open.



\bibliography{mvs-bib}

\end{document}